\newtheorem{lemma}{Lemma}
\newtheorem{corollary}{Corollary}
\newcommand{\mycomment}[1]{{\color{red}#1}}
\newcommand{\modified}[1]{{\color{black}#1}}
	\definecolor{darkpink}{rgb}{0.91, 0.33, 0.5}
\newcommand{\vahid}[1]{{\color{black}#1}}
\newcommand{\remove}[1]{}
\newcommand{\mce}{{\tt MCE}}
\newcommand{\cliques}{\mathcal{C}}
\newcommand{\tomita}{{\tt TTT}\xspace}
\newcommand{\partomita}{{\tt ParTTT}\xspace}
\newcommand{\parmce}{{\tt ParMCE}\xspace}
\newcommand{\parmcedegree}{{\tt ParMCEDegree}\xspace}
\newcommand{\parmcedegen}{{\tt ParMCEDegen}\xspace}
\newcommand{\parmcetriangle}{{\tt ParMCETri}\xspace}
\newcommand{\parpivot}{{\tt ParPivot}\xspace}
\newcommand{\gp}{{\tt GP}\xspace}
\newcommand{\cand}{{\tt cand}\xspace}
\newcommand{\fini}{{\tt fini}\xspace}
\newcommand{\pivot}{{\tt pivot}\xspace}
\newcommand{\ext}{{\tt ext}\xspace}
\newcommand{\rank}{{\tt rank}\xspace}
\newcommand{\intersect}{{\tt intersect}\xspace}
\newcommand{\greedybb}{{\tt GreedyBB}\xspace}
\newcommand{\hashing}{{\tt Hashing}\xspace}
\newcommand{\bkd}{{\tt BKDegeneracy}\xspace}
\newcommand{\cliqueenum}{{\tt Clique Enumerator}\xspace}
\newcommand{\peamc}{{\tt Peamc}\xspace}
\newcommand{\peco}{{\tt PECO}\xspace}
\newcommand{\pecodegree}{{\tt PECODegree}\xspace}
\newcommand{\pecodegen}{{\tt PECODegen}\xspace}
\newcommand{\pecotri}{{\tt PECOTri}\xspace}
\newcommand{\imce}{{\tt IMCE}\xspace}
\newcommand{\parimce}{{\tt ParIMCE}\xspace}
\newcommand{\csnewttt}{{\tt FastIMCENewClq}\xspace}
\newcommand{\parcsnewttt}{{\tt ParIMCENew}\xspace}
\newcommand{\cssub}{{\tt IMCESubClq}\xspace}
\newcommand{\parcssub}{{\tt ParIMCESub}\xspace}
\newcommand{\parcssubnew}{{\tt ParSubClq}\xspace}
\newcommand{\partomitaE}{{\tt ParTTTExcludeEdges}\xspace}
\newcommand{\tomitaE}{{\tt {TTTExcludeEdges}}\xspace}
\newcommand{\wtalk}{{\small\texttt{\textbf{Wiki-Talk}}}\xspace}
\newcommand{\skitter}{{\small\texttt{\textbf{As-Skitter}}}\xspace}
\newcommand{\dblp}{{\small\texttt{\textbf{DBLP-Coauthor}}}\xspace}
\newcommand{\wikipedia}{{\small\texttt{\textbf{Wikipedia}}}\xspace}
\newcommand{\orkut}{{\small\texttt{\textbf{Orkut}}}\xspace}
\newcommand{\flickr}{{\small\texttt{\textbf{Flickr}}}\xspace}
\newcommand{\cacit}{{\small\texttt{\textbf{Ca-Cit-HepTh}}}\xspace}
\newcommand{\journal}{\small{\texttt{\textbf{LiveJournal}}}\xspace}
\newcommandx{\unsure}[2][1=]{\todo[linecolor=red,backgroundcolor=red!25,bordercolor=red,#1]{#2}}
\newcommandx{\change}[2][1=]{\todo[linecolor=blue,backgroundcolor=blue!25,bordercolor=blue,#1]{#2}}
\newcommandx{\info}[2][1=]{\todo[linecolor=OliveGreen,backgroundcolor=OliveGreen!25,bordercolor=OliveGreen,#1]{#2}}
\newcommandx{\improvement}[2][1=]{\todo[linecolor=Plum,backgroundcolor=Plum!25,bordercolor=Plum,#1]{#2}}
\newcommandx{\thiswillnotshow}[2][1=]{\todo[disable,#1]{#2}}
\newcommand\Tstrut{\rule{0pt}{3.2ex}}        
\begin{document}
	\sloppy
	\title{Shared-Memory Parallel Maximal Clique Enumeration from Static and Dynamic Graphs}
	
	\titlenote{This paper is accepted in ACM Transactions on Parallel Computing (TOPC). A preliminary version of this work appeared in the proceedings of the 25th IEEE International Conference on. High Performance Computing, Data, and Analytics (HiPC), 2018 \cite{DST-HIPC-18}}
	
	\author{Apurba Das}
	\affiliation{%
		\institution{Iowa State University}}
	\email{adas@iastate.edu}
	\author{Seyed-Vahid Sanei-Mehri}
	\affiliation{%
		\institution{Iowa State University}}
	\email{vas@iastate.edu}
	\author{Srikanta Tirthapura}
	\affiliation{%
		\institution{Iowa State University}}
	\email{snt@iastate.edu}


\begin{abstract}
Maximal Clique Enumeration (MCE) is a fundamental graph mining problem, and is useful as a primitive in identifying dense structures in a graph. Due to the high computational cost of MCE, parallel methods are imperative for dealing with large graphs. We present shared-memory parallel algorithms for MCE, with the following properties: (1)~the parallel algorithms are provably work-efficient relative to a state-of-the-art sequential algorithm (2)~the algorithms have a provably small parallel depth, showing they can scale to a large number of processors, and (3)~our implementations on a multicore machine show good speedup and scaling behavior with increasing number of cores, and are substantially faster than prior shared-memory parallel algorithms for MCE; for instance, on certain input graphs, while prior works either ran out of memory or did not complete in 5 hours, our implementation finished within a minute using 32 cores. We also present work-efficient parallel algorithms for maintaining the set of all maximal cliques in a dynamic graph that is changing through the addition of edges.
\end{abstract}
	
	\maketitle
	
\section{Introduction}
\label{sec:intro}
We study Maximal Clique Enumeration (MCE) from a graph, which requires to enumerate all cliques (complete subgraphs) in the graph that are maximal. A clique $C$ in a graph $G = (V,E)$ is a (dense) subgraph such that every pair of vertices in $C$ are directly connected by an edge. A clique $C$ is said to be maximal when there is no clique $C'$ such that $C$ is a proper subgraph of $C'$ (see \cref{fig:maximality-example}). Maximal cliques are perhaps the most fundamental dense subgraphs, and MCE has been widely used in diverse research areas, e.g. the works of Palla et al.~\cite{PD+05} on discovering protein groups by mining cliques in a protein-protein network, of Koichi et al.~\cite{KA+14} on discovering chemical structures using MCE on a graph derived from large-scale chemical databases, various problems on mining biological data~\cite{GA+93, HO+03, MBR04, CC05, JB06, RWY07, ZP+08}, overlapping community detection~\cite{YQ+-TKDE-17}, location-based services on spatial databases~\cite{ZZ+-ICDE-19}, and inference in graphical models~\cite{KF09}.
MCE is important in static as well as dynamic networks, i.e. networks that change over time due to the addition and deletion of vertices and edges. For example, Chateau et al.~\cite{CRR11} use MCE in studying changes in the genetic rearrangements of bacterial genomes, when new genomes are added to the existing genome population. Maximal cliques are building blocks for computing approximate common intervals of multiple genomes.
Duan et al.~\cite{DL+12} use MCE on a dynamic graph to track highly interconnected communities in a dynamic social network. 

MCE is a computationally hard problem since it is harder than the {\em maximum} clique problem, a classical NP-complete combinatorial problem that asks to find a clique of the largest size in a graph. Note that maximal clique and maximum clique are two related, but distinct notions. A maximum clique is also a maximal clique, but a maximal clique need not be a maximum clique. The computational cost of enumerating maximal cliques can be higher than the cost of finding the maximum clique, since the output size (set of all maximal cliques) can itself be very large. Moon and Moser~\cite{MM65} showed that a graph on $n$ vertices can have as many as $3^{n/3}$ maximal cliques, which is proved to be a tight bound. Real-world networks typically do not have cliques of such high complexity and as a result, it is feasible to enumerate maximal cliques from large graphs. The literature is rich on sequential algorithms for MCE. Bron and Kerbosch~\cite{BK73} introduced a backtracking search method to enumerate maximal cliques. Tomita et. al \cite{TTT06} introduced the idea of ``pivoting'' in the backtracking search, which led to a significant improvement in the runtime. This has been followed up by further work such as due to Eppstein et al.~\cite{ELS10}, who used a degeneracy-based vertex ordering on top of the pivot selection strategy.

Sequential approaches to MCE can lead to high runtimes on large graphs. Based on our experiments, a real-world network \orkut with approximately $3$ million vertices, $117$ million edges requires approximately $8$ hours to enumerate all maximal cliques using an efficient sequential algorithm due to Tomita et al.~\cite{TTT06}. Graphs that are larger and/or more complex cannot be handled by sequential algorithms with a reasonable turnaround time, and the high computational complexity of MCE calls for parallel methods.

\begin{figure} [t]
	\centering
	\begin{tabular}{cccc}
		\includegraphics[width=.3\textwidth]{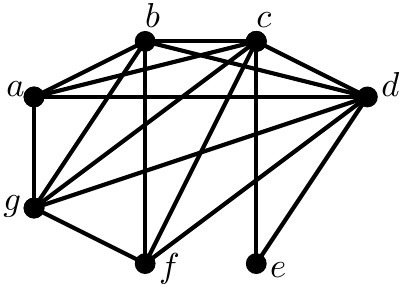} &
		\includegraphics[width=0.3\textwidth]{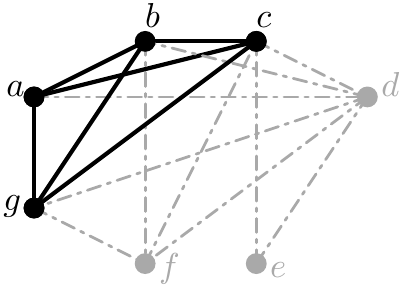} &
		\includegraphics[width=0.3\textwidth]{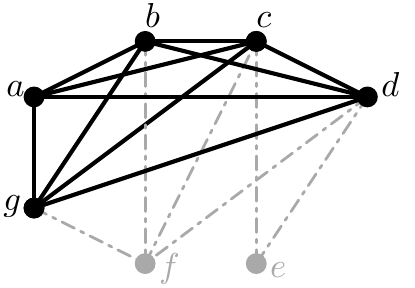} \\
		\textbf{(a)} Input Graph $G$  & \textbf{(b)} Non-Maximal Clique in $G$ & \textbf{(c)} Maximal Clique in $G$ 
	\end{tabular}
	\caption{Maximal clique in a graph} 
	\label{fig:maximality-example}
\end{figure}

In this work, we consider shared-memory parallel methods for MCE. In the shared-memory model, the input graph can reside within globally shared memory, and multiple threads can work in parallel on enumerating maximal cliques. Shared-memory parallelism is of high interest today since machines with tens to hundreds of cores and hundreds of Gigabytes of shared-memory are readily available. The advantage of using shared-memory approach over a distributed memory approach are: (1)~Unlike distributed memory, it is not necessary to divide the graph into subgraphs and communicate the subgraphs among processors. In shared-memory, different threads can work concurrently on a single shared copy of the graph (2)~Sub-problems generated during MCE are often highly imbalanced, and it is hard to predict which sub-problems are small and which are large, while initially dividing the problem into sub-problems. With a shared-memory method, it is possible to further subdivide sub-problems and process them in parallel. With a distributed memory method, handling such imbalances in sub-problem sizes requires greater coordination and is more complex. 

To show how imbalanced the sub-problems can be, in \cref{fig:irregular}, we show data for two real-world networks \skitter and \wtalk. These two networks have millions of edges and tens of millions of maximal cliques (for statistics on these networks, see \cref{sec:exp}). Consider a division of the MCE problem into per-vertex sub-problems, where each sub-problem corresponds to the set of all maximal cliques containing a single vertex in a network and suppose these sub-problems were solved independently, while taking care to prune out search for the same maximal clique multiple times. For \skitter, we observed that $90\%$ of total runtime required for MCE is taken by only $0.022\%$ of the sub-problems and less than $0.4\%$ of all sub-problems yield $90\%$ of all maximal cliques. Even larger skew in sub-problem sizes is observed in the \wtalk graph. This data demonstrates that load balancing is a central issue for parallel MCE.

\begin{figure}[t]
	\centering
	\begin{tabular}{cc}
		\includegraphics[width=.5\textwidth]{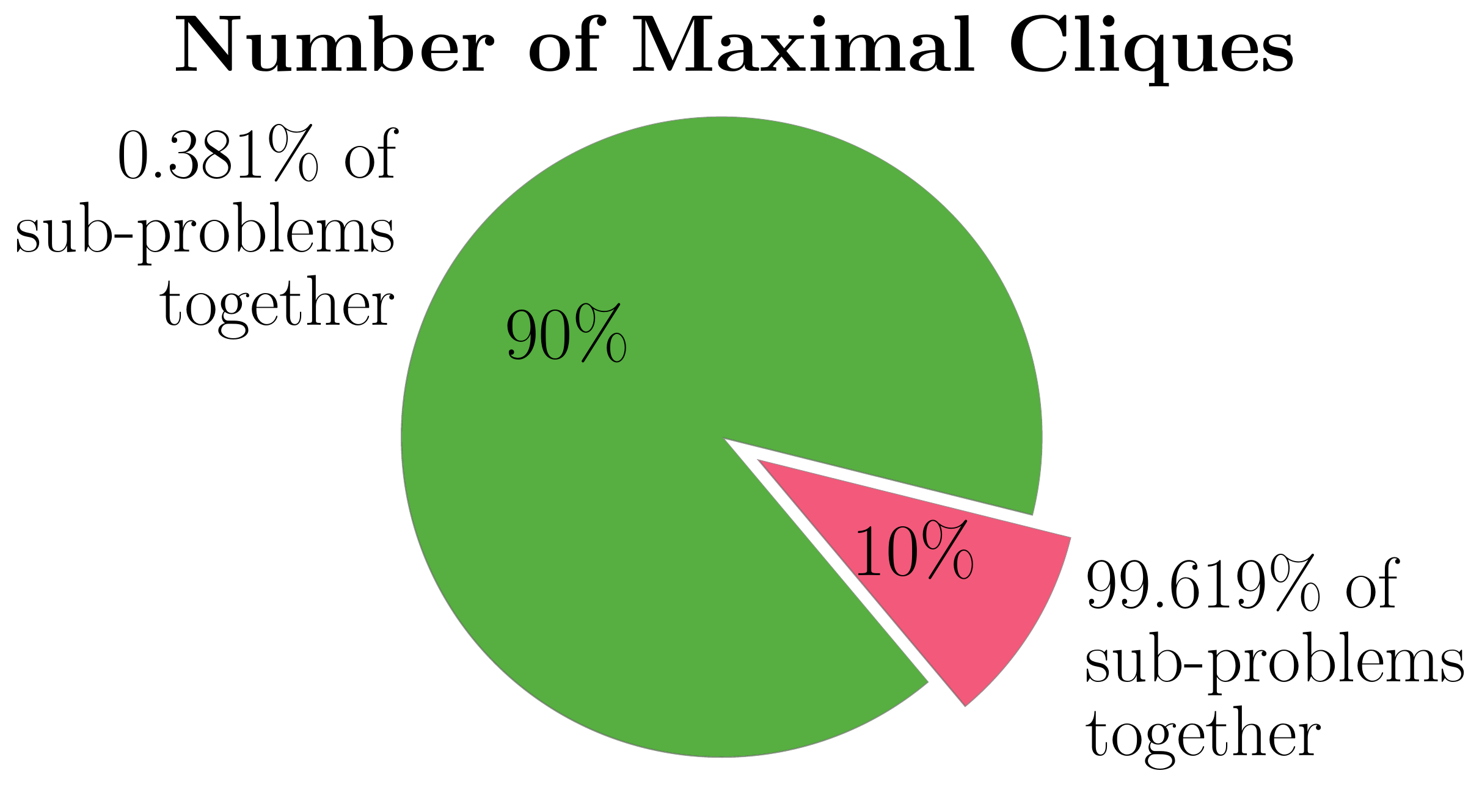} &
		\includegraphics[width=.5\textwidth]{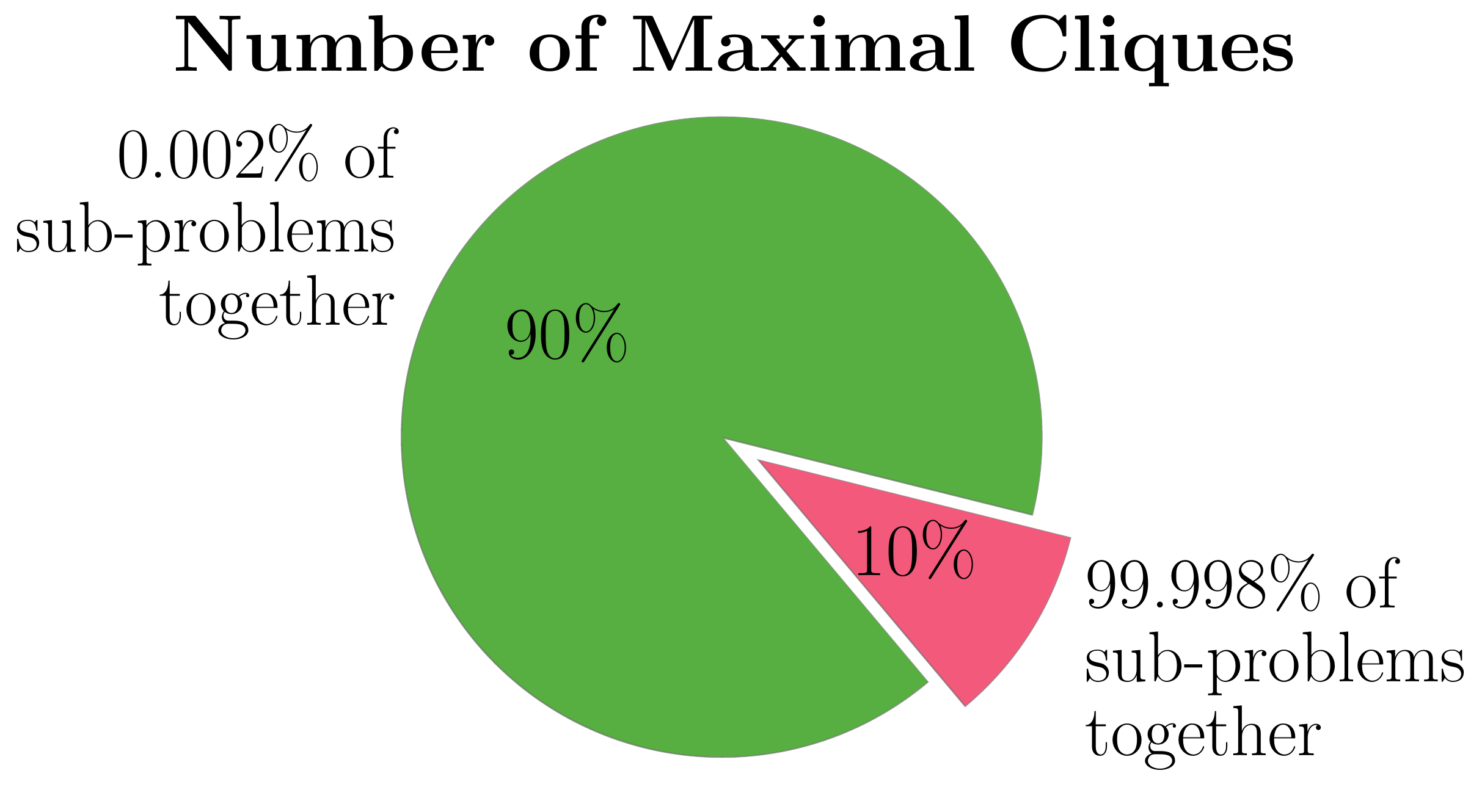} \\
		\textbf{(a)} $\skitter$ &
		\textbf{(b)} $\wtalk$ \bigskip\\
		\includegraphics[width=.5\textwidth]{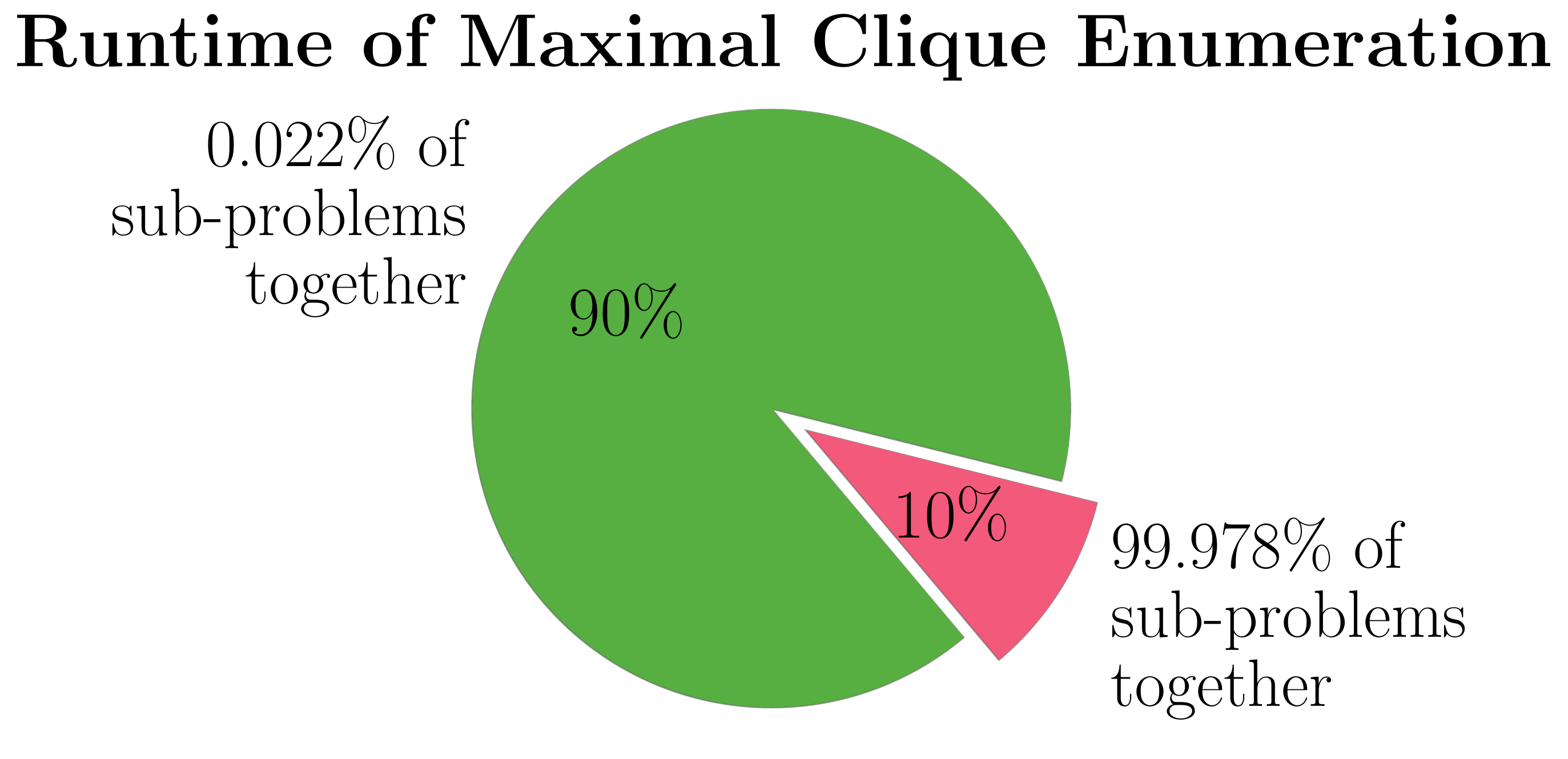} &
		\includegraphics[width=.5\textwidth]{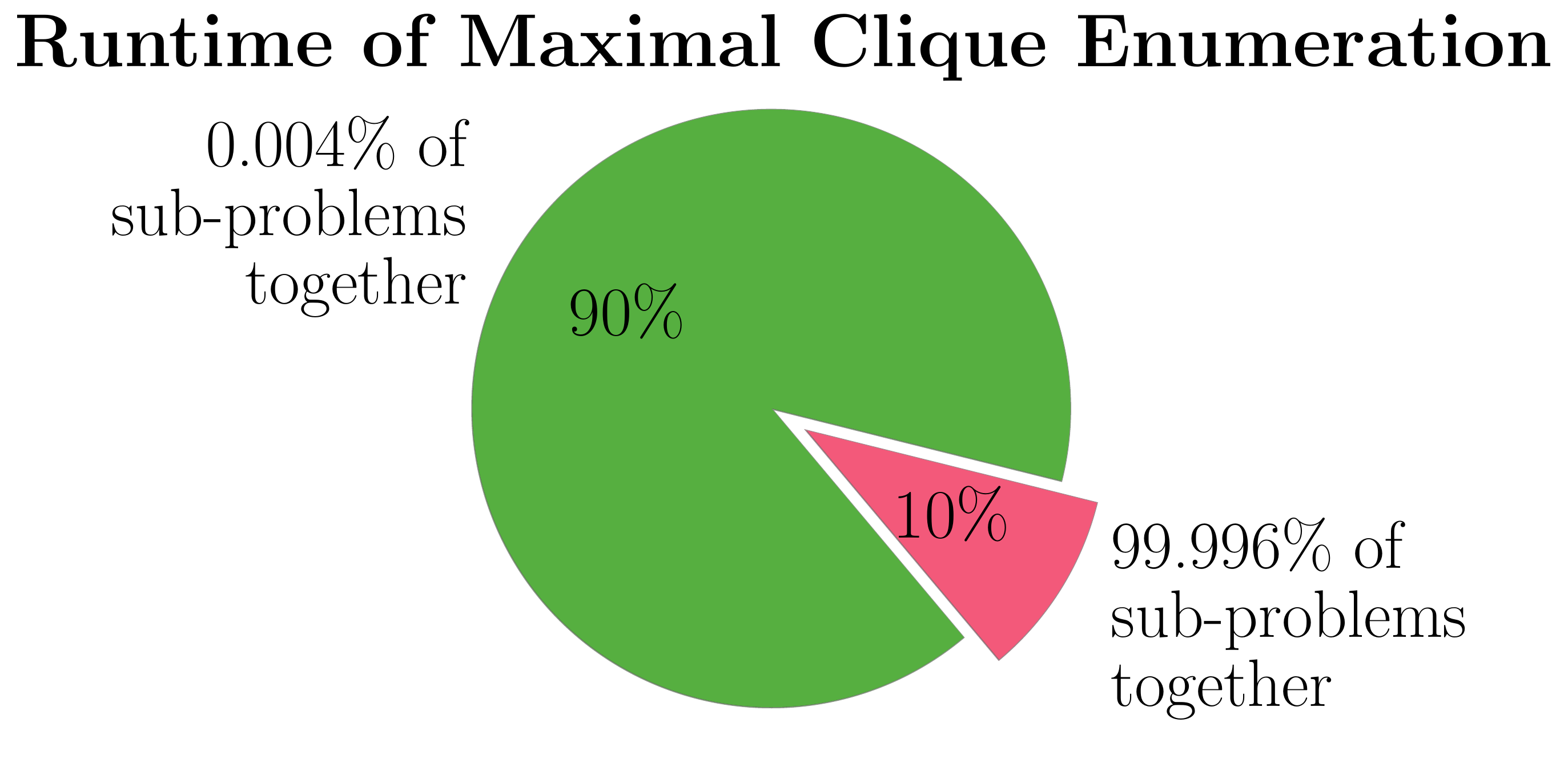}\\	
		\textbf{(c)} $\skitter$&
		\textbf{(d)} $\wtalk$\\
	\end{tabular}
	\caption{Imbalanced in sizes of sub-problems for MCE, where each sub-problem corresponds to the maximal cliques of a single vertex in the given graph. (a) $\skitter$: $0.3\%$ of sub-problems form $90\%$ of total number of maximal cliques. (b) $\wtalk$: only $0.002\%$ of sub-problems yield $90\%$ of all maximal cliques. (c) $\skitter$: $0.02\%$ of sub-problems take $90\%$ of total runtime of MCE. (d) $\wtalk$: only $0.004\%$ of sub-problems take $90\%$ of total runtime of MCE \label{fig:irregular}}
\end{figure}

Prior works on parallel MCE have largely focused on distributed memory algorithms~\cite{WY+09, SS+09, LGG10, SMT+15}. There are a few works on shared-memory parallel algorithms~\cite{ZA+05, DW+09, LP+17}. However, these algorithms do not scale to larger graphs due to memory or computational bottlenecks -- either the algorithms miss out significant pruning opportunities as in~\cite{DW+09}, or they need to generate a large number of non-maximal cliques as in~\cite{ZA+05, LP+17}.

\subsection{Our Contributions}
We make the following contributions towards enumerating all maximal cliques in a simple graph.

\medskip

\noindent \textbf{Theoretically Efficient Parallel Algorithm:}~We present a shared-memory parallel algorithm $\partomita$, which takes as input a graph $G$ and enumerates all maximal cliques in $G$. $\partomita$ is an efficient parallelization of the algorithm due to Tomita, Tanaka, and Takahashi~\cite{TTT06}. Our analysis of $\partomita$ using a work-depth model of computation~\cite{S17} shows that it is work-efficient when compared with~\cite{TTT06} and has a low parallel depth. To our knowledge, this is the first shared-memory parallel algorithm for MCE with such provable properties.

\medskip

\noindent \textbf{Optimized Parallel Algorithm:}~We present a shared-memory parallel algorithm $\parmce$ that builds on $\partomita$ and yields improved practical performance. Unlike $\partomita$, which starts with a single task at the top level that spawns recursive subtasks as it proceeds, leading to a lack of parallelism at the top level of recursion, $\parmce$ spawns multiple parallel tasks at the top level. To achieve this, $\parmce$ uses per-vertex parallelization, where a separate sub-problem is created for each vertex and different sub-problems are processed in parallel. Each sub-problem is required to enumerate cliques which contain the assigned vertex and care is taken to prevent overlap between sub-problems. Each per-vertex sub-problem is further processed in parallel using $\partomita$ -- this additional (recursive) level of parallelism using $\partomita$ is important since different per-vertex sub-problems may have significantly different computational costs, and having each run as a separate sequential task may lead to uneven load balance. To further address load balance, we use a vertex ordering in assigning cliques to different per-vertex sub-problems. For ordering the vertices, we use various metrics such as degree, triangle count, and the degeneracy number of the vertices.

\medskip

\noindent \textbf{Incremental Parallel Algorithm:}~Next, we present a parallel algorithm $\parimce$ that can maintain the set of maximal cliques in a dynamic graph, when the graph is updated due to the addition of new edges.  When a batch of edges are added to the graph, $\parimce$ can (in parallel) enumerate the set of all new maximal cliques that emerged and the set of all maximal cliques that are no longer maximal (subsumed cliques). $\parimce$ consists of two parts: $\parcsnewttt$ for enumerating new maximal cliques, and $\parcssub$ for enumerating subsumed maximal cliques. We analyze $\parimce$ using the work-depth model and show that it is work-efficient relative to an efficient sequential algorithm, and has a low parallel depth. A summary of our algorithms is shown in Table~\ref{tab:algo}.

\begin{table}[]
	\caption{Summary of shared-memory parallel algorithms for MCE.}
	\label{tab:algo}
	\resizebox{\textwidth}{!}{%
		\begin{tabular}{|l|c|l|}
			\hline
			\multicolumn{1}{|c|}{\textbf{Algorithm}} & \textbf{Type} & \multicolumn{1}{c|}{\textbf{Description}} \\ \hline\hline
			$\partomita$ & Static & A work-efficient parallel algorithm for MCE on a static graph \\ \hline
			$\parmce$ & Static & A practical parallel algorithm for MCE on a static graph dealing with load imbalance \\ \hline
			\multirow{2}{*}{$\parimce$} & \multirow{2}{*}{Dynamic} & \begin{tabular}[c]{@{}l@{}}$\parcsnewttt$: A work-efficient parallel algorithm for enumerating new \\ maximal cliques in a dynamic graph.\end{tabular} \\ \cline{3-3} 
			&  & \begin{tabular}[c]{@{}l@{}}$\parcssub$: A work-efficient parallel algorithm for enumerating subsumed \\ maximal cliques in a dynamic graph.\end{tabular} \\ \hline
		\end{tabular}%
	}
\end{table}

\medskip 

\noindent \textbf{Experimental Evaluation:}~We implemented all our algorithms, and our experiments show that $\parmce$ yields a speedup of \textbf{15x-21x} when compared with an efficient sequential algorithm (due to Tomita et al.~\cite{TTT06}) on a multicore machine with $32$ physical cores and $1$ TB RAM. For example, on the $\wikipedia$ network with around $1.8$ million vertices, $36.5$ million edges, and $131.6$ million maximal cliques, $\partomita$ achieves a \textbf{16.5x} parallel speedup over the sequential algorithm, and the optimized $\parmce$ achieves a \textbf{21.5x} speedup, and completed in approximately two minutes. In contrast, prior shared-memory parallel algorithms for MCE~\cite{ZA+05,DW+09,LP+17} failed to handle the input graphs that we considered, and either ran out of memory (\cite{ZA+05,LP+17}) or did not complete in 5 hours (\cite{DW+09}).

On dynamic graphs, we observe that $\parimce$ gives a \textbf{3x}-\textbf{19x} speedup over a state-of-the-art sequential algorithm $\imce$~\cite{DST16} on a multicore machine with 32 cores. Interestingly, the speedup of the parallel algorithm increases with the magnitude of change in the set of maximal cliques -- the ``harder'' the dynamic enumeration task is, the larger is the speedup obtained.  For example, on a dense graph such as $\cacit$ (with original graph density of $0.01$), we get approximately a \textbf{19x} speedup over the sequential $\imce$. More details are presented in Section~\ref{sec:exp}.

\medskip 

\noindent \textbf{Techniques for Load Balancing:} Our parallel methods can effectively balance the load in solving parallel MCE.  As shown in \cref{fig:irregular}, ``natural'' sub-problems of MCE are highly imbalanced, and, therefore, load balancing is not trivial. In our algorithms, sub-problems of MCE are broken down into smaller sub-problems, according to the search used by the sequential algorithm~\cite{TTT06}, and this process continues recursively. As a result, the final sub-problem that is solved in a single task is not so large as to create load imbalances. Our experiments in \cref{sec:exp} demonstrate that the recursive splitting of sub-problems in MCE is essential for achieving a high speedup over existing algorithms \cite{TTT06}. In order to efficiently assign these (dynamically created) tasks to threads at runtime, we utilize a \textit{work stealing scheduler}~\cite{R95,RL99}. 

\modified{
\textbf{Roadmap.} The rest of the paper is organized as follows. We discuss related works in Section \ref{sec:related}, we present preliminaries in Section~\ref{sec:prelims}, followed by a description of algorithms for a static graph in  Section~\ref{sec:algo}, algorithms for a dynamic graph in Section~\ref{sec:inc-algo}, an experimental evaluation in Section~\ref{sec:exp}, and conclusions in Section~\ref{sec:conclude}.
}

\section{Related Work}
\label{sec:related}
Maximal Clique Enumeration (MCE) from a graph is a fundamental problem that has been extensively studied for more than two decades, and there are multiple prior works on sequential and parallel algorithms. We first discuss sequential algorithms for MCE, followed by parallel algorithms.\\

\noindent\textbf{Sequential MCE:} Bron and Kerbosch~\cite{BK73} presented an algorithm for MCE based on depth-first-search. Following their work, a number of algorithms have been presented ~\cite{TI+77,CN85,TTT06,Koch01,MU04,CK08,ELS10}. The algorithm of Tomita et al.~\cite{TTT06} has a worst-case time complexity $O(3^{\frac{n}{3}})$ for an $n$ vertex graph, which is optimal in the worst-case, since the size of the output can be as large as $O(3^{\frac{n}{3}})$~\cite{MM65}. Eppstein et al.~\cite{ELS10,ES11} present an algorithm for sparse graphs whose complexity can be parameterized by the degeneracy of the graph, a measure of graph sparsity.

Another approach to MCE is a class of ``output-sensitive'' algorithms whose time complexity for enumerating maximal cliques is a function of the size of the output. There exist many such output-sensitive algorithms for MCE, including~\cite{CG+16,CN85,MU04,TI+77}, which can be viewed as instances of a general paradigm called ``reverse-search''~\cite{AF93}. 
\vahid{A recent algorithm~\cite{CG+16} provides favorable tradeoffs for delay (time between two enumerated maximal cliques), when compared with prior works. In terms of practical performance, the best output-sensitive algorithms~\cite{CG+16,CN85,MU04} are not as efficient as the best depth-first-search based algorithms~\cite{TTT06,ELS10}.}
Other sequential methods for MCE include algorithms due to Kose et al.~\cite{KW+01}, Johnson et al.~\cite{JYP88}, Li et al.~\cite{LS+-DASFAA-19}, on a special class of graphs due to Fox et al.~\cite{FRSW-SIAMJ-18}, on temporal graph due to Qin et al.~\cite{QL+-ICDE-19}. Multiple works have considered sequential algorithms for maintaining the set of maximal cliques~\cite{DST16,S04,OV10} on a dynamic graph, and, to our knowledge, the most efficient algorithm is the one due to Das et al.~\cite{DST16}.



\medskip

\noindent\textbf{Parallel MCE: }There are multiple prior works on parallel algorithms for MCE ~\cite{ZA+05,DW+06,WY+09,SS+09,LGG10,SMT+15, YL-PC-19}. We first discuss shared-memory algorithms and then distributed memory algorithms.
Zhang et al.~\cite{ZA+05} present a shared-memory parallel algorithm based on the sequential algorithm due to Kose et al.~\cite{KW+01}. This algorithm computes maximal cliques in an iterative manner, and, in each iteration, it maintains a set of cliques that are not necessarily maximal and, for each such clique, maintains the set of vertices that can be added to form larger cliques. This algorithm does not provide a theoretical guarantee on the runtime and suffers for large memory requirement. Du et al.~\cite{DW+06} present a output-sensitive shared-memory parallel algorithm for MCE, but their algorithm suffers from poor load balancing as also pointed out by Schmidt et al.~\cite{SS+09}. Lessley et al.~\cite{LP+17} present a shared memory parallel algorithm that generates maximal cliques using an iterative method, where in each iteration, cliques of size $(k-1)$ are extended to cliques of size $k$.  The algorithm of~\cite{LP+17} is memory-intensive, since it needs to store a number of intermediate non-maximal cliques in each iteration. Note that the number of non-maximal cliques may be far higher than the number of maximal cliques that are finally emitted, and a number of distinct non-maximal cliques may finally lead to a single maximal clique. In the extreme case, a complete graph on $n$ vertices has $(2^n-1)$ non-maximal cliques, and only a single maximal clique. We present a comparison of our algorithm with~\cite{LP+17,ZA+05,DW+06} in later sections. 

Distributed memory parallel algorithms for MCE include works due to Wu et al.~\cite{WY+09}, designed for the MapReduce framework, Lu et al.~\cite{LGG10}, which is based on the sequential algorithm due to Tsukiyama et al.~\cite{TI+77}, Svendsen et al.~\cite{SMT+15}, Wang et al. \cite{WC+17}, and algorithm for sparse graph due to Yu and Liu~\cite{YL-PC-19}. 
Other works on parallel and sequential algorithms for enumerating dense subgraphs from a massive graph include sequential algorithms for enumerating quasi-cliques \cite{SD+18,LL08,U10}, parallel algorithms for enumerating $k$-cores~\cite{MD+13,DDZ14,KM17,SSP17}, $k$-trusses~\cite{SSP17,KM17-2,KM17-3}, nuclei~\cite{SSP17}, and distributed memory algorithms for enumerating $k$-plexes \cite{CM+18}, bicliques~\cite{MT17}.


\section{Preliminaries}
\label{sec:prelims}

We consider a simple undirected graph without self loops or multiple edges. For graph $G$, let $V(G)$ denote the set of vertices in $G$ and $E(G)$ denote the set of edges in $G$. Let $n$ denote the size of $V(G)$, and $m$ denote the size of $E(G)$. For vertex $u \in V(G)$, let $\Gamma_G(u)$ denote the set of vertices adjacent to $u$ in $G$. When the graph $G$ is clear from the context, we use $\Gamma(u)$ to mean $\Gamma_G(u)$. Let $\cliques(G)$ denote the set of all maximal cliques in $G$.\\ 

\noindent\textbf{Sequential Algorithm $\tomita$: } The algorithm due to Tomita, Tanaka, and Takahashi.~\cite{TTT06}, which we call $\tomita$, is a recursive backtracking-based algorithm for enumerating all maximal cliques in an undirected graph, with a worst-case time complexity of $O(3^{n/3})$ where $n$ is the number of vertices in the graph. In practice, this is one of the most efficient sequential algorithms for MCE. Since we use $\tomita$ as a subroutine in our parallel algorithms, we present a short description here.

In any recursive call, $\tomita$ maintains three disjoint sets of vertices $K$, $\cand$, and $\fini$, where $K$ is a candidate clique to be extended, $\cand$ is the set of vertices that can be used to extend $K$, and $\fini$ is the set of vertices that are adjacent to $K$, but need not be used to extend $K$ (these are being explored along other search paths). Each recursive call iterates over vertices from $\cand$ and in each iteration, a vertex $q \in \cand$ is added to $K$ and a new recursive call is made with parameters $K\cup\{q\}$, $\cand_q$, and $\fini_q$ for generating all maximal cliques of $G$ that extend $K\cup\{q\}$ but do not contain any vertices from $\fini_q$. The sets $\cand_q$ and $\fini_q$ can only contain vertices that are adjacent to all vertices in $K\cup \{q\}$. The clique $K$ is a maximal clique when both $\cand$ and $\fini$ are empty.

The ingredient that makes $\tomita$ different from the algorithm due to Bron and Kerbosch~\cite{BK73} is the use of a ``pivot'' where a vertex $u\in \cand\cup\fini$ is selected that maximizes $\vert \cand \cap \Gamma(u) \vert$. Once the pivot $u$ is computed, it is sufficient to iterate over all the vertices of $\cand \setminus \Gamma(u)$, instead of iterating over all vertices of $\cand$. The pseudo code of $\tomita$ is presented in Algorithm~\ref{algo:tomita}. For the initial call, $K$ and $\fini$ are initialized to an empty set, $\cand$ is the set of all vertices of $G$.\\ 


\begin{algorithm}[htp!]
\DontPrintSemicolon
\caption{$\tomita({G},K,\cand,\fini)$}
\label{algo:tomita}
\KwIn{${G}$ - the input graph \\ \hspace{1cm} $K$ - a clique to extend, \\
$\cand$ - a set of vertices that can be used extend $K$, \\ 
$\fini$ - a set of vertices that have been used to extend $K$}
\KwOut{Set of all maximal cliques of $G$ containing $K$ and vertices from $\cand$ but not containing any vertex from $\fini$}
\If{$(\cand = \emptyset)$ \& $(\fini = \emptyset)$}{
	Output $K$ and return\;
}
$\pivot \gets (u \in \cand \cup \fini)$ such that $u$ maximizes the size of $\cand  \cap \Gamma_{{G}}(u)$\;
$\ext \gets \cand - \Gamma_{{G}}(\pivot)$\;
\For{$q \in$ \ext}{
	$K_q \gets K\cup\{q\}$\;
	$\cand_q \gets \cand\cap\Gamma_{{G}}(q)$\;
	$\fini_q \gets \fini\cap\Gamma_{{G}}(q)$\;
    $\cand \gets \cand-\{q\}$\;
	$\fini \gets \fini \cup\{q\}$\;
	$\tomita({G},K_q,\cand_q,\fini_q)$\;
}
\end{algorithm}

\noindent\textbf{Parallel Cost Model: } For analyzing our shared-memory parallel algorithms, we use the CRCW PRAM model~\cite{BM10}, which is a model of shared parallel computation that assumes concurrent reads and concurrent writes. Our parallel algorithm can also work in other related models of shared-memory such as EREW PRAM (exclusive reads and exclusive writes), with a logarithmic factor increase in work as well as parallel depth. We measure the effectiveness of the parallel algorithm using the {\em work-depth} model~\cite{S17}. Here, the ``work'' of a parallel algorithm is equal to the total number of operations of the parallel algorithm, and the ``depth'' (also called the ``parallel time'' or the ``span'') is the longest chain of dependent computations in the algorithm. A parallel algorithm is said to be {\em work-efficient} if its total work is of the same order as the work due to the best sequential algorithm.\footnote{Note that work-efficiency in the CRCW PRAM model does not imply work-efficiency in the EREW PRAM model} We aim for work-efficient algorithms with a low depth, ideally poly-logarithmic in the size of the input. Using Brent's theorem~\cite{BM10}, it can be seen that a parallel algorithm on input size $n$ with a depth of $d$ can theoretically achieve $\Theta(p)$ speedup on $p$ processors as long as $p = O(n/d)$.

We next restate a result on concurrent hash tables~\cite{SS06} that we use in proving the work and depth bounds of our parallel algorithms.

\begin{theorem}[Theorem 3.15 \cite{SS06}]\label{thm:parset}
There is an implementation of a hash table, which, given a hash function with expected uniform distribution, performs $n_1$ insert, $n_2$ delete and $n_3$ find operations in parallel using $O(n_1+n_2+n_3)$ work and $O(1)$ depth on average. 
\end{theorem}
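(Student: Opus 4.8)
The plan is to realize the hash table with an array $T$ of $m = \Theta(n)$ slots, where $n = n_1 + n_2 + n_3$ is the total number of operations in the batch, and to route every operation to the slot prescribed by the given hash function $h$. First I would run a \emph{bucketing} phase: in parallel, each of the $n$ operations evaluates $h$ on its key in constant time and deposits itself into the bucket indexed by its hash value. This costs $O(n)$ work and $O(1)$ depth. Because $h$ has expected uniform distribution, the expected number of operations landing in any fixed bucket is $n/m = O(1)$, so summed over all buckets the expected total work of touching the buckets is $O(n)$.

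The second phase resolves, for each bucket, the insert, delete, and find operations it received. Operations concerning distinct keys are independent and can be applied simultaneously; the only subtlety is concurrent operations on the same key. Here I would exploit the CRCW semantics: simultaneous inserts of the same key are idempotent (the element is a set member either way), so an arbitrary-write rule suffices; a find simply reads the slot's current state; and to linearize an insert against a delete of the same key within one batch I would impose a fixed priority among the three operation types (or assume, as is standard, that the batch is internally consistent). Each such resolution touches a slot a constant number of times, so a bucket of load $\ell$ is handled with $O(\ell)$ work, and the per-bucket work summed over all buckets is again $O(n)$ in expectation.

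The delicate point---and the main obstacle---is the \emph{depth}. A naive analysis bounds the depth by the maximum bucket load, which for $n$ uniform hashes into $\Theta(n)$ slots is $\Theta(\log n / \log\log n)$ with high probability, not $O(1)$. To obtain the claimed constant expected depth I would process the contents of each bucket in parallel and handle buckets that overflow a fixed capacity threshold by \emph{re-hashing them recursively} into private sub-tables using independent hash functions. The fraction of operations that survive a round (i.e.\ still sit in an over-full bucket) shrinks geometrically in expectation, so the expected number of rounds---and hence the expected depth---is $O(1)$, while the same geometric decay keeps the cumulative expected work at $O(n)$. The crux of the argument is precisely this probabilistic bound on the tail of the load distribution under iterated hashing, which is what converts the worst-case logarithmic depth into $O(1)$ depth on average; the work bound $O(n_1+n_2+n_3)$ then follows by summing the per-round work across the geometrically decreasing rounds.
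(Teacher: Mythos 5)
You are attempting to prove a statement that the paper itself never proves: Theorem~\ref{thm:parset} is imported as a black box from \cite{SS06} (the lock-free, split-ordered-list hash table of Shalev and Shavit), and the ``$O(1)$ depth on average'' there is an expected per-operation time guarantee of that specific concurrent data structure, not the conclusion of a bucketing-and-recursion argument over a batch of $n$ operations. So your proposal has to stand on its own, and it contains a genuine gap at exactly the point you identify as ``the crux.''

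The gap is the final inference: from ``the fraction of surviving operations shrinks geometrically in expectation'' you conclude that the expected number of rounds is $O(1)$. This does not follow. Let $X_r$ be the number of unresolved operations after $r$ rounds and suppose $\mathrm{E}[X_r]\le \varepsilon^r n$ for a constant $\varepsilon<1$. The number of rounds is $R=\min\{r : X_r=0\}$, and all that Markov's inequality yields is $\mathrm{E}[R]=\sum_{r\ge 0}\Pr[X_r\ge 1]\le \sum_{r\ge 0}\min\{1,\varepsilon^r n\}=O(\log n)$. Nor is this a weakness of the bound: as long as $X_r$ is large it concentrates near $\varepsilon^r n$, so the recursion genuinely runs for $\Theta(\log n)$ rounds with high probability, i.e.\ the depth of your scheme is logarithmic, not constant on average. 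The standard repairs do not rescue the claim either: giving each overflowing bucket of load $\ell$ a private quadratic-size (FKS-style) subtable makes each bucket finish after a geometric number of retries, but the depth is then the maximum of $\Theta(n)$ independent geometric variables, which is $\Theta(\log n)$ w.h.p.; re-hashing all survivors jointly into a fresh table of size $\Theta(n)$ gives doubly-exponential decay and still $\Theta(\log\log n)$ rounds. This is precisely why the classical work-optimal PRAM hashing results (Gil--Matias--Vishkin and successors) achieve slowly growing bounds such as $O(\log^* n)$ rather than $O(1)$. A secondary problem is your first phase: ``depositing'' colliding operations into a bucket's list on a CRCW PRAM is not an $O(1)$-depth step -- it is essentially a semisort, and collecting a bucket's contents takes depth proportional to its load, which re-introduces the very $\Theta(\log n/\log\log n)$ max-load term you set out to eliminate.
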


\section{Parallel MCE Algorithms on a Static Graph}
\label{sec:algo}
In this section, we present new shared-memory parallel algorithms for MCE. We first describe a parallel algorithm $\partomita$, a parallelization of the sequential $\tomita$ algorithm and an analysis of its theoretical properties. Then, we discuss bottlenecks in $\partomita$ that arise in practice, leading us to another algorithm $\parmce$ with a better practical performance. $\parmce$ uses $\partomita$ as a subroutine -- it creates appropriate sub-problems that can be solved in parallel and hands off the enumeration task to $\partomita$. 
 
\subsection{Algorithm $\partomita$}
Our first algorithm $\partomita$ is a work-efficient parallelization of the sequential $\tomita$ algorithm. The two main components of $\tomita$ (Algorithm~\ref{algo:tomita}) are (1)~Selection of the pivot element (Line~$3$) and (2)~Sequential backtracking for extending candidate cliques until all maximal cliques are explored (Line~$5$ to Line~$11$). We discuss how to parallelize each of these steps.

\paragraph{Parallel Pivot Selection: } Within a single recursive call of $\partomita$, the pivot element is computed in parallel using two steps, as described in $\parpivot$ (Algorithm~\ref{algo:pivot}). In the first step, the size of the intersection $\cand \cap \Gamma(u)$ is computed in parallel for each vertex $u \in \cand \cup \fini$. In the second step, the vertex with the maximum intersection size is selected. The parallel algorithm for selecting a pivot is presented in Algorithm~\ref{algo:pivot}. The following lemma proves that the parallel pivot selection is work-efficient with logarithmic depth:

\begin{lemma}
\label{lem:parpivot}
The total work of $\parpivot$ is $O(\sum_{w \in \cand \cup \fini}  (\min\{|\cand|, |\Gamma(w)|\}))$, which is $O(n^2)$, and depth is $O(\log n)$.
\end{lemma}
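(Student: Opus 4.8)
The plan is to analyze the two phases of $\parpivot$ separately and then combine their costs, since the selection phase depends on the intersection-size phase and the two run in sequence. In the first phase, for every vertex $w \in \cand \cup \fini$ we compute $|\cand \cap \Gamma(w)|$; in the second phase we select the vertex maximizing this quantity. I would bound the work and depth of each phase and observe that the first phase dominates the work, while both phases contribute only $O(\log n)$ to the depth.

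For the intersection-size computation, the key observation is that $|\cand \cap \Gamma(w)|$ can be evaluated by storing one of the two sets in a concurrent hash table and then iterating over the \emph{smaller} of $\cand$ and $\Gamma(w)$, testing membership of each element against the table. Invoking \cref{thm:parset}, each membership test is a find operation costing $O(1)$ work and $O(1)$ depth on average, so a single vertex $w$ incurs $O(\min\{|\cand|, |\Gamma(w)|\})$ work. Counting the number of successful tests is a parallel reduction over at most $\min\{|\cand|,|\Gamma(w)|\} \le n$ indicator values, which has $O(\log n)$ depth. Summing the work over all $w \in \cand \cup \fini$ yields total work $O(\sum_{w \in \cand \cup \fini} \min\{|\cand|, |\Gamma(w)|\})$, and since all vertices are processed in parallel, the depth of this phase remains $O(\log n)$.

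For the selection phase, finding the vertex of maximum intersection size among the $|\cand \cup \fini| \le n$ computed values is a standard parallel maximum reduction, costing $O(|\cand \cup \fini|) = O(n)$ work and $O(\log n)$ depth. This additive $O(n)$ is subsumed by the first-phase bound whenever the latter is $\Omega(n)$, and in every case it does not affect the claimed $O(n^2)$ bound. To see that the work is $O(n^2)$, note each summand satisfies $\min\{|\cand|, |\Gamma(w)|\} \le |\cand| \le n$ and there are at most $|\cand \cup \fini| \le n$ summands, so the sum is at most $n \cdot n$. The total depth is the sum of the two phase depths, $O(\log n) + O(\log n) = O(\log n)$.

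The main obstacle I anticipate is the depth accounting rather than the work bound. I must justify that the concurrent hash-table finds genuinely contribute only $O(1)$ expected depth (via \cref{thm:parset}) and that the reductions used both to count intersection elements and to take the final maximum can be carried out in $O(\log n)$ depth without serializing across the $\Theta(n)$ vertices that are processed simultaneously. Some care is also needed to ensure that the per-vertex decision to iterate over the smaller of $\cand$ and $\Gamma(w)$ is what yields the tight $\min\{|\cand|, |\Gamma(w)|\}$ factor, rather than a naive $|\cand|$ or $|\Gamma(w)|$ factor that would only give the coarser $O(n^2)$ bound.
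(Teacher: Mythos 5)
Your proposal is correct and follows essentially the same route as the paper's proof: both compute each $t_w = |\cand \cap \Gamma(w)|$ by probing the smaller of the two sets against a hash table (invoking \cref{thm:parset}), run these computations in parallel over all $w \in \cand \cup \fini$, and finish with a parallel maximum reduction of $O(|\cand \cup \fini|)$ work and $O(\log n)$ depth. The only cosmetic difference is that you count intersection hits with an explicit $O(\log n)$-depth reduction, whereas the paper attributes the counting step directly to the hash-table theorem; both stay within the claimed bounds.
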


\begin{proof}
If sets $\cand$ and $\Gamma(w)$ are stored as hashsets, then, for vertex $w$, the size $t_w = \vert\intersect(\cand, \Gamma(w))\vert$ can be computed sequentially in time $O(\min\{|\cand|, |\Gamma(w)|\})$ -- the intersection of two sets $S_1$ and $S_2$ can be found by considering the smaller set among the two, say $S_2$, and searching for its elements within the larger set, say $S_1$. It is possible to parallelize the computation of $\intersect(S_1,S_2)$ by executing the search elements of $S_2$ in parallel, followed by counting the number of elements that lie in the intersection, which can also be done in parallel in a work-efficient manner using $O(1)$ depth using Theorem~\ref{thm:parset}. Since computing the maximum of a set of $n$ numbers can be accomplished using work $O(n)$ and depth $O(\log n)$, for vertex $w$, $t_w$ can be computed using work $O(\min\{|\cand|, |\Gamma(w)|\})$ and depth $O(\log n)$. Once $t_w$ (i.e. $|\cand \cap \Gamma(w)|$) is computed for every vertex $w \in \cand\cup\fini$, $argmax(\{t_w : w \in \cand\cup\fini\})$ can be obtained using additional work $O(\vert \cand \cup \fini \vert)$ and depth $O(\log n)$. Hence, the total work of $\parpivot$ is $O(\sum_{w \in \cand \cup \fini}  (\min\{|\cand|, |\Gamma(w)|\})$. Since the size of $\cand$, $\fini$, and $\Gamma(w)$ are bounded by $n$, this is $O(n^2)$,  but typically much smaller in practice.
\end{proof}
 
\begin{figure}[htp!]
\centering
\begin{mdframed}[style=MyFrame]
\begin{minipage}[t]{\textwidth}
\centering
\scalebox{0.9}{
\begin{algorithm}[H]
\DontPrintSemicolon
\caption{$\parpivot(G,K, \cand,\fini)$}
\label{algo:pivot}
\KwIn{ 
$G$ - the input graph;
$K$ - a clique in $G$ that may be further extended;
$\cand$ - a set of vertices that may extend $K$;
$\fini$ - a set of vertices that have been used to extend $K$.}
\KwOut{
Pivot vertex $v \in \cand\cup\fini$.
}
\ForPar {$w \in \cand\cup\fini$}{
	In parallel, compute $t_w \gets |\intersect(\cand, \Gamma_{G}(w))|$\;
}
In parallel, find $v \gets argmax(\{t_w : w \in \cand\cup\fini\})$\;
\Return{$v$}\;
\end{algorithm}}
\end{minipage}
\end{mdframed}
\end{figure}

\paragraph{Parallelization of Backtracking: } We first note that there is a sequential dependency among the different iterations within a recursive call of $\tomita$. In particular, the contents of the sets $\cand$ and $\fini$ in a given iteration are derived from the contents of $\cand$ and $\fini$ in the previous iteration. Such sequential dependence of updates to $\cand$ and $\fini$ restricts us from calling the recursive $\tomita$ for different vertices of $\ext$ in parallel. To remove this dependency, we adopt a different view of $\tomita$ which enables us to make the recursive calls in parallel. The elements of $\ext$, the vertices to be considered for extending a maximal clique, are arranged in a predefined total order. Then, we unroll the loop and explicitly compute the parameters $\cand$ and $\fini$ for recursive calls. 

Suppose $\langle v_1, v_2, ..., v_{\kappa} \rangle$ is the order of vertices in $\ext$. Each vertex $v_i \in \ext$, once added to $K$, should be removed from further consideration in $\cand$. To ensure this, in $\partomita$, we explicitly remove vertices $v_1, v_2, ..., v_{i-1}$ from $\cand$ and add them to $\fini$, before making the recursive calls. As a consequence, parameters of the $i$th iteration are computed independently of prior iterations. 

\begin{figure}[htp!]
	\centering
	\begin{mdframed}[style=MyFrame]
		\begin{minipage}[t]{\textwidth}
			\begin{algorithm}[H]
				\DontPrintSemicolon
				\caption{$\partomita({G},K,\cand,\fini)$}
				\label{algo:partomita}
				\KwIn{${G}$ - the input graph \\ $K$ - a non-maximal clique to extend \\ 
					$\cand$ - set of vertices that may extend $K$ \\ 
					$\fini$ - vertices that have been used to extend $K$}
				\KwOut{A set of all maximal cliques of $G$ containing $K$ and vertices from $\cand$ but not containing any vertex from $\fini$}
				\If{$(\cand = \emptyset)$ \& $(\fini = \emptyset)$}{
					Output $K$ and \Return
				}
				$\pivot \gets \parpivot({G},\cand,\fini)$\;
				$\ext[1..\kappa] \gets \cand - \Gamma_{{G}}(\pivot)$ \tcp{in parallel} 
				\ForPar{$i \in$ $[1..\kappa]$}{
					$q\gets\ext[i]$\;
					$K_q \gets K\cup\{q\}$\;
					$\cand_q \gets \intersect(\cand\setminus\ext[1..i-1], \Gamma_{{G}}(q))$\;
					$\fini_q \gets \intersect(\fini\cup\ext[1..i-1], \Gamma_{{G}}(q))$\;
					$\partomita({G},K_q,\cand_q,\fini_q)$\;
				}
			\end{algorithm}
		\end{minipage}
	\end{mdframed}
\end{figure}

Here, we prove the work efficiency and low depth of $\partomita$ in the following lemma:
\begin{lemma}
\label{lem:partomita}
Total work of $\partomita$ (Algorithm~\ref{algo:partomita}) is $O(3^{n/3})$ and depth is $O(M\log{n})$ where $n$ is the number of vertices in the graph, and $M$ is the size of a maximum clique in $G$.
\end{lemma}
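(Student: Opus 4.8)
The plan is to analyze work and depth separately, mapping the recursion tree of $\partomita$ onto that of the sequential $\tomita$. First I would argue work-efficiency. The key observation is that $\partomita$ performs exactly the same recursive calls as $\tomita$: the set $\ext$ computed from the pivot is identical, and the parameters $\cand_q, \fini_q$ passed to the $i$th recursive call are the same sets that $\tomita$ would compute after sequentially processing $v_1,\dots,v_{i-1}$, since $\cand\setminus\ext[1..i-1]$ intersected with $\Gamma_G(q)$ equals the updated $\cand$ of the sequential algorithm at that iteration, and similarly for $\fini$. Thus the recursion tree has the same shape, and it suffices to bound the work done at each node. At a node, the dominant costs are the pivot selection and the computation of the $\kappa$ pairs $(\cand_q,\fini_q)$. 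By \cref{lem:parpivot}, pivot selection costs $O(n^2)$ work; each intersection $\intersect(\cdot,\Gamma_G(q))$ costs work proportional to the smaller operand (at most $O(n)$ per element), all done via \cref{thm:parset}. Since the total work across all nodes of $\tomita$ is already $O(3^{n/3})$ and our per-node work is of the same asymptotic order as the sequential per-node work, the aggregate parallel work remains $O(3^{n/3})$.

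Next I would bound the depth. Depth is governed by the longest root-to-leaf path in the recursion tree, since sibling recursive calls execute in parallel via the \textbf{ForPar}. Each root-to-leaf path corresponds to building up a maximal clique one vertex at a time: every recursive level adds exactly one vertex $q$ to $K$, so the number of levels along any path is at most the size of the largest maximal clique, namely $M$. At each level, the depth is the sum of the pivot-selection depth and the depth of computing the recursive parameters. By \cref{lem:parpivot}, pivot selection has depth $O(\log n)$; computing $\ext$ in parallel and each intersection $\cand_q,\fini_q$ has depth $O(1)$ by \cref{thm:parset}, and gathering the prefix $\ext[1..i-1]$ costs at most $O(\log n)$ using parallel prefix operations. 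Hence the per-level depth is $O(\log n)$, and multiplying by the at most $M$ levels gives total depth $O(M\log n)$.

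The main obstacle I anticipate is justifying the correctness of the ``unrolling'' transformation precisely, namely that removing $\ext[1..i-1]$ from $\cand$ and adding it to $\fini$ before the $i$th call reproduces exactly the sequential state, so that the parallel algorithm enumerates each maximal clique exactly once with no duplicates and no omissions. This requires checking that the sequential updates $\cand \gets \cand - \{q\}$ and $\fini \gets \fini \cup \{q\}$ accumulate precisely to the prefix $\ext[1..i-1]$ (the earlier-processed extension vertices), and that intersecting with $\Gamma_G(q)$ commutes with these prefix removals and additions. Once this equivalence is established, the work bound follows from the sequential analysis and the depth bound follows from the path-length argument; the remaining steps are routine applications of \cref{lem:parpivot} and \cref{thm:parset}.
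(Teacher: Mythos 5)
Your proposal is correct and follows essentially the same route as the paper's proof: both argue that the unrolled recursion tree matches that of sequential $\tomita$, that the per-call work (pivot selection plus prefix-based computation of $\cand_q$ and $\fini_q$) remains $O(n^2)$ as in \cite{TTT06} so the $O(3^{n/3})$ total work bound carries over, and that the depth is $O(\log n)$ per recursion level times at most $M$ levels since each call grows $K$ by one vertex. The equivalence of the unrolling that you flag as the main obstacle is exactly the justification the paper gives in its description of the parallel backtracking step, so nothing essential is missing.
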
 

\begin{proof}
First, we analyze the total work. Note that the computational tasks in $\partomita$ is different from $\tomita$ at Line~8 and Line~9 of $\partomita$ where at an iteration $i$, we remove all vertices $\{v_1, v_2, ..., v_{i-1}\}$ from $\cand$ and add all these vertices to $\fini$ as opposed to the removal of a single vertex $v_{i-1}$ from $\cand$ and addition of that vertex to $\fini$ as in $\tomita$ (Line~8 and Line~9 of Algorithm~\ref{algo:tomita}). Therefore, in $\partomita$, additional $O(n)$ work is required due to independent computations of $\cand_q$ and $\fini_q$. The total work, excluding the call to $\parpivot$ is $O(n^2)$. Adding up the work of $\parpivot$, which requires $O(n^2)$ work and $O(n^2)$ total work for each single call of $\partomita$ excluding further recursive calls (Algorithm~\ref{algo:partomita}, Line 10), which is the same as in original sequential algorithm $\tomita$ (Section 4, \cite{TTT06}). Hence, using Lemma~2 and Theorem~3 of \cite{TTT06}, we infer that the total work of $\partomita$ is the same as the sequential algorithm $\tomita$ and is bounded by $O(3^{n/3})$. 

Next, we analyze the depth of the algorithm. The depth of $\partomita$ consists of the (sum of the) following components: (1)~Depth of $\parpivot$, (2)~Depth of computation of $\ext$, (3)~Maximum depth of an iteration in the {\small\textbf{\texttt{for}}} loop from Line~5 to Line~10. According to Lemma~\ref{lem:parpivot}, the depth of $\parpivot$ is $O(\log{n})$. The depth of computing $\ext$ is $O(\log n)$ since it takes $O(1)$ time to check whether an element in $\cand$ is in the neighborhood of $\pivot$. Similarly, the depth of computing $\cand_q$ and $\fini_q$ at Line~8 and Line~9 are $O(\log n)$. The remaining is the depth of the call of $\partomita$ at Line~10. Notice that the recursive call of $\partomita$ continues until there is no further vertex to add for expanding $K$, and this depth can be at most the size of the maximum clique which is $M$ because, at each recursive call of $\partomita$, the size of $K$ increases by $1$. Thus, the overall depth of $\partomita$ is $O(M\log{n})$. 
\end{proof}

\begin{corollary}
\label{cor:partomita}
Using $P$ parallel processors, which are shared-memory, $\partomita$ (Algorithm~\ref{algo:partomita}) is a parallel algorithm for MCE and can achieve a worst case parallel time of $O\left(\frac{3^{n/3}}{M \log n} + P\right)$ using $P$ parallel processors. This is work-optimal and work-efficient as long as $P = O(\frac{3^{n/3}}{M \log n})$.
\end{corollary}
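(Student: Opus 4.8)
The plan is to derive the corollary directly from the work and depth bounds already established in Lemma~\ref{lem:partomita}, combined with Brent's scheduling theorem. Lemma~\ref{lem:partomita} gives that $\partomita$ performs total work $W = O(3^{n/3})$ and has depth $D = O(M\log n)$. The first step is to invoke the standard work-depth scheduling principle (Brent's theorem), which guarantees that any computation characterized by work $W$ and depth $D$ can be executed on $P$ shared-memory processors in parallel time $O(W/P + D)$, provided a greedy scheduler---in practice, the work-stealing scheduler discussed in the introduction---assigns ready operations to idle processors. Substituting the two bounds yields a worst-case parallel running time of $O\!\left(\tfrac{3^{n/3}}{P} + M\log n\right)$, which is the quantity that governs the work-optimality threshold asserted in the statement.

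The second step is to identify the regime in which this running time is work-optimal. The sequential cost of MCE is $\Theta(3^{n/3})$ in the worst case (the output size can reach the Moon--Moser bound $3^{n/3}$, matching the cost of $\tomita$), so the achievable speedup is $3^{n/3}/T_P$ with $T_P = O(3^{n/3}/P + M\log n)$. Whenever the first term $3^{n/3}/P$ dominates the depth term $M\log n$---equivalently, whenever $P = O\!\left(\tfrac{3^{n/3}}{M\log n}\right)$---we have $T_P = O(3^{n/3}/P)$, the speedup is $\Theta(P)$, and the algorithm is work-optimal in the sense that its processor-time product matches the sequential work up to constant factors. This is exactly the condition recorded for Brent's theorem in Section~\ref{sec:prelims}, namely $p = O(W/d)$ instantiated with $W = 3^{n/3}$ and $d = M\log n$; since the work of $\partomita$ equals that of $\tomita$, work-efficiency holds for every $P$, while work-optimality (linear speedup) requires the stated bound on $P$.

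Most of the argument is bookkeeping once the bounds of Lemma~\ref{lem:partomita} are in hand; the only point requiring care is that the scheduling overhead be genuinely absorbed into the $O(W/P + D)$ bound. This is where the CRCW PRAM assumption and the greedy/work-stealing scheduler are essential: the $O(1)$-depth concurrent hash-table operations of Theorem~\ref{thm:parset}, which underlie the work and depth accounting for $\parpivot$ and the unrolled loop of $\partomita$, must be realizable without an asymptotic penalty so that the per-operation costs charged in $W$ are faithful. I would therefore state the corollary explicitly within the same CRCW PRAM plus greedy-scheduler assumptions used throughout, and note that under the EREW model both $W$ and $T_P$ degrade by the logarithmic factor already flagged in Section~\ref{sec:prelims}.
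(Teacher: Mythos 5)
Your proposal is correct and follows essentially the same route as the paper: apply Brent's theorem to the work bound $O(3^{n/3})$ and depth bound $O(M\log n)$ from Lemma~\ref{lem:partomita}, observe that in the regime $P = O\bigl(3^{n/3}/(M\log n)\bigr)$ the parallel time collapses to $O(3^{n/3}/P)$, and invoke the Moon--Moser bound to conclude worst-case work-optimality. The only difference is cosmetic and in your favor: you use the standard form $O(W/P + D)$ of Brent's theorem, whereas the paper's statement and proof write the bound as $O(w/d + P)$ (with the roles of $P$ and the depth apparently swapped), so your derivation of the threshold on $P$ is the cleaner one.
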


\begin{proof}
The parallel time follows Brent's theorem~\cite{BM10}, which states that the parallel time using $P$ processors is $O(w/d + P)$, where $w$ and $d$ are the work and the depth of the algorithm respectively. If the number of processors $P= O\left(\frac{3^{n/3}}{M \log n}\right)$, then using Lemma~\ref{lem:partomita}, the parallel time is $O\left(\max\{\frac{3^{n/3}}P , {M \log n}\}\right) = O\left(\frac{3^{n/3}}{P}\right)$. The total work across all processors is $O(3^{n/3})$, which is worst-case optimal, since the size of the output can be as large as $3^{n/3}$ maximal cliques (Moon and Moser~\cite{MM65}). 
\end{proof}
 






\subsection{Algorithm $\parmce$}

While $\partomita$ is a theoretically work-efficient parallel algorithm, we note that its runtime can be further improved. While the worst case work complexity of $\parpivot$ matches that of the pivoting routine in $\tomita$, in practice, the work in $\partomita$ can be higher, since computation of $\cand_q$ and $\fini_q$ has additional and growing overhead as the sizes of the $\cand_q$ and $\fini_q$ increase. This can result in a lower speedup than the theoretically expected one.



We set out to improve on this to derive a more efficient parallel implementation through a more selective use of $\parpivot$. In this way, the cost of pivoting can be reduced by carefully choosing many pivots in parallel instead of a single pivot element as in $\partomita$ at the beginning of the algorithm. We first note that the cost of $\parpivot$ is the highest during the iteration when set $K$ (the clique in the search space) is empty. During this iteration, the number of vertices in $\cand \cup \fini$ can be high, as large as the number of vertices in the graph. To improve upon this, we can perform the first few steps of pivoting, when $K$ is empty, using a sequential algorithm. Once set $K$ contains at least one element, the number of the vertices in $\cand \cup \fini$ is decremented to no more than the size of the intersection of neighborhoods of all vertices in $K$, which is typically a number much smaller than the number of vertices in the graph (this number is smaller than the smallest degree of a vertex in $K$). Problem instances with $K$, assigned to a single vertex, can be seen as sub-problems, and, on each of these sub-problems, the overhead of $\parpivot$ computation is much smaller since the number of vertices that have to be dealt with is also much smaller.

Based on this observation, we present a parallel algorithm $\parmce$ which works as follows. The algorithm can be viewed as considering, for each vertex $v \in V(G)$, a subgraph $G_v$ that is induced by the vertex $v$ and its neighborhood $\Gamma_G(v)$. It enumerates all maximal cliques from each subgraph $G_v$ in parallel using $\partomita$. While processing sub-problem $G_v$, it is important to not enumerate maximal cliques that are being enumerated elsewhere, in other sub-problems. To handle this, the algorithm considers a specific ordering of all vertices in $V$ such that $v$ is the least ranked vertex in each maximal clique enumerated from $G_v$. Subgraph $G_v$ for each vertex $v$ is handled in parallel -- these subgraphs need not be processed in any particular order. However, the ordering allows us to populate the $\cand$ and $\fini$ sets accordingly such that each maximal clique is enumerated in exactly one sub-problem. The order in which the vertices are considered is defined by a ``rank'' function \textbf{rank}, which indicates the position of a vertex in the total order. This global ordering on vertices has impact on the total work of the algorithm, as well as the load balance of the distribution of workloads across sub-problems.


\medskip

\textbf{Load Balancing:} Notice that the sizes of the subgraphs $G_v$ may vary widely because of two reasons: (1)~The subgraphs themselves may be of different sizes, depending on the vertex degrees. (2)~The number of maximal cliques and the sizes of the maximal cliques containing $v$ can vary widely from one vertex to another. Clearly, the sub-problems that deal with a large number of maximal cliques or maximal cliques of a large size are more computationally expensive than others.

In order to maintain the size of the sub-problems approximately balanced, we use an idea from PECO~\cite{SMT+15}, where we choose the rank function on the vertices in such a way that for any two vertices $v$ and $w$, \textbf{rank}($v$) $>$ \textbf{rank}($w$) if the complexity of enumerating maximal cliques from $G_v$ is higher than the complexity of enumerating maximal cliques from $G_w$. Indeed, by giving a higher rank to $v$ than $w$, we are decreasing the complexity of the sub-problem $G_v$ since the sub-problem at $G_v$ need not enumerate maximal cliques that involve any vertex whose rank is less than $v$. Therefore, the higher the rank of vertex $v$, the lower is its ``share'' (of maximal cliques it belongs to) of maximal cliques in $G_v$. We use this idea for approximately balancing the workload across sub-problems. The additional enhancements in $\parmce$ when compared with the idea from PECO are as follows: (1)~In PECO, the algorithm is designed for distributed memory such that the subgraphs and sub-problems have to be explicitly copied across the network. (2)~In $\parmce$, the vertex specific sub-problem, dealing with $G_v$ is itself handled through a parallel algorithm, $\partomita$, while, in PECO, the sub-problem for each vertex was handled through a sequential algorithm.


Note that it is computationally expensive to accurately count the number of maximal cliques within $G_v$, and, hence, it is not possible to compute the rank of each vertex exactly according to the complexity of handling $G_v$. Instead, we estimate the cost of handling $G_v$ using some easy-to-evaluate metrics on the subgraphs. In particular, we consider the following:


\begin{itemize}
\item 
\textbf{Degree Based Ranking: } For vertex $v$, define $\rank(v) = (d(v), id(v))$ where $d(v)$ and $id(v)$ are degree and identifier of $v$, respectively. For two vertices $v$ and $w$, $\rank(v) > \rank(w)$ if $d(v) > d(w)$ or $d(v) = d(w)$ and $id(v) > id(w)$, $rank(v) < rank(w)$ otherwise. 

\item
\textbf{Triangle Count Based Ranking: } For vertex $v$, define $\rank(v) = (t(v), id(v))$ where $t(v)$ is the number of triangles, which vertex $v$ is a part of. Note that ranking based on triangle count is more expensive to compute than degree based ranking but may yield a better estimate of the complexity of maximal cliques within $G_v$.\footnote{A triangle is a cycle of length three.}

\item
\textbf{Degeneracy Based Ranking~\cite{ELS10}: } For a vertex $v$, define $\rank(v) = (degen(v), id(v))$ where $degen(v)$ is the degeneracy of a vertex $v$. A vertex $v$ has degeneracy number $k$ when it belongs to a $k$-core but no $(k+1)$-core, where a $k$-core is a maximal induced subgraph such that the minimum degree of each vertex in the subgraph is $k$. A computational overhead of using this ranking is due to computing the degeneracy of vertices which takes $O(n+m)$ time, where $n$ is the number of vertices and $m$ is the number of edges.
\end{itemize}

The different implementations of $\parmce$ using degree, triangle, and degeneracy based rankings are called as $\parmcedegree$, $\parmcetriangle$, $\parmcedegen$ respectively.


\begin{figure}[htp!]
\centering
\begin{mdframed}[style=MyFrame]
\begin{minipage}[t]{\textwidth}
\begin{algorithm}[H]
\DontPrintSemicolon
\caption{$\parmce({G})$}
\label{algo:parmce}
\KwIn{${G}$ - the input graph.}
\KwOut{$\cliques({G})$ - a set of all maximal cliques of ${G}$.}
\ForPar {$v \in V({G})$}{
	Create ${G}_v$, the subgraph of ${G}$ induced by $\Gamma_{{G}}(v)\cup\{v\}$\;
	$K \gets \{v\}$, $\cand \gets \phi$, $\fini \gets \phi$\;
    \ForPar {$w \in \Gamma_{{G}}(v)$}{
    	\lIf{$rank(w) > rank(v)$}{
        	$\cand \gets \cand\cup\{w\}$
        }
        \lElse {
        	$\fini \gets \fini\cup\{w\}$
        }
    }
    $\partomita({G}_v, K, \cand, \fini)$
}
\end{algorithm}
\end{minipage}
\end{mdframed}
\end{figure}

\remove{
\subsection{Analysis}\label{analysis}

In this section we will discuss the efficiency of $\partomita$ in terms of work and depth in Lemma~\ref{lem:partomita} and work optimality in Corollary~\ref{cor:partomita}. Note that, the theoretical properties of $\partomita$ are also preserved in $\parmce$ because $\parmce$ uses $\partomita$ as a subroutine. First we show the work and depth analysis of parallel pivot selection of $\partomita$ in the following lemma:

\begin{corollary}
\label{cor:parmce}
Total work of $\parmce$ (Algorithm~\ref{algo:parmce}) is $O(n\times 3^{\Delta/3})$ and depth is $O(M\log{\Delta})$ where $n$ is the number of vertices, $M$ is the size of a maximum clique, and $\Delta$ is the maximum degree of the graph $G$.
\end{corollary}

\begin{proof}
The size of each sub-problem in $\parmce$ is $s_v$ where $s_v = |\cand| + |\fini|$ for a vertex $v\in V(G)$ (refer to Line~11 of $\parmce$). Thus, the total work of $\parmce$ is $\Sigma_{v\in V(G)}O(3^{s_v/3})$. Therefore, the total work follows as $s_v$ can be no more than $\Delta$.

The depth of the algorithm is the maximum possible depth of a sub-problem in $\parmce$. As the maximum size of a sub-problem can be $\Delta$, following Lemma~\ref{lem:partomita} the depth of the algorithm is $O(M\log{\Delta})$.
\end{proof}

}

\section{Parallel MCE Algorithm on a Dynamic Graph}
\label{sec:inc-algo}
When the graph changes over time due to addition of edges, the maximal cliques of the updated graph also change. The update in the set of maximal cliques consists of (1) The set of new maximal cliques -- the maximal cliques that are newly formed (2) The set of subsumed cliques -- maximal cliques of the original graph that are subsumed by the new maximal cliques. The combined set of new and subsumed maximal cliques is called the set of changes, and the size of this set refers to the size of change in the set of maximal cliques (see Figure~\ref{fig:dyn-change}).

\remove{
\begin{figure*}
\centering
\begin{tabular}{c}
	\includegraphics[width=.5\textwidth]{pic1.png}\\
\end{tabular}
\caption{\textbf{Change in maximal cliques when graph keeps changing due to addition of new edges. $\{1, 2, 5\}$ and $\{2, 3, 4\}$ are the maximal cliques of the initial graph \textbf{$G$}; $\{2, 3, 4, 5\}$ is the only new maximal clique and $\{2, 3, 4\}$ is the only subsumed maximal clique when \textbf{$G$} is updated to \textbf{$G'$} after adding edges $(3,5)$ and $(4,5)$ to \textbf{$G$} (in the middle); $\{1, 2, 3, 4, 5\}$ is the new maximal cliques and $\{1, 2, 5\}$, $\{2, 3, 4, 5\}$ are the subsumed cliques when \textbf{$G'$} is updated to \textbf{$G''$} after adding edges $(1,3)$ and $(1,4)$ to \textbf{$G'$} (in the right).}}
\label{fig:dyn-change-removed}
\end{figure*}
}

\begin{figure} [t]
	\centering
	\begin{tabular}{cccc}
		\includegraphics[width=.25\textwidth]{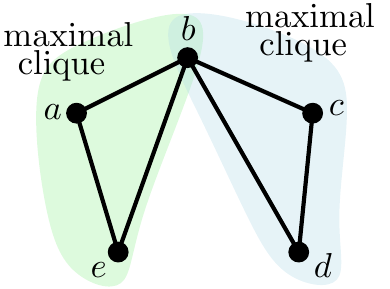} &
		\includegraphics[width=0.25\textwidth]{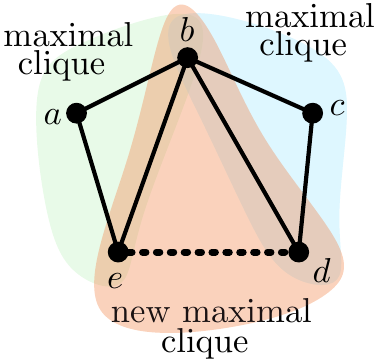} &
		\includegraphics[width=0.25\textwidth]{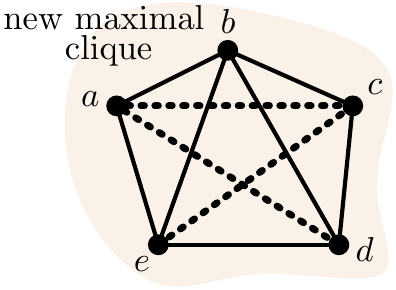} \\
		\textbf{(a)} Input Graph $G$  & \textbf{(b)} A new edge $(e, d)$ &\textbf{(c)} Three new edges: $(a,c),(a,d),(c,e)$ 
	\end{tabular}
\caption{
		Maximal cliques change upon addition of new edges. (a) Two subgraphs $\{a, b, e\}$ and $\{b, c, d\}$ are maximal cliques in the original graph \textbf{$G$}. (b) A new maximal clique (i.e. $\{b, d, e\}$) is created when edge $(e, d)$ is added to $G$. (c) When three more edges $(a, c)$, $(a, d)$, and $(c, e)$ are added, the entire graph turns into a maximal clique, which subsumes all maximal cliques in prior steps. 
	\label{fig:dyn-change}
}
\end{figure}

\remove{
It is important to update the set of maximal cliques in a dynamic graph when it serves as the building block in many important problems. For example, the work of Chateau et al.~\cite{CRR11} on maintaining common intervals among genomes, the work of Duan et al.~\cite{DL+12} on incremental $k$-clique clustering, the work of Hussain et al.~\cite{HM+15} on maintaining the maximum range-sum query over a point stream use maximal clique as the building block in an appropriately defined graph.
}
\begin{figure}[htp!]
	\centering
	\begin{mdframed}[style=MyFrame]
		\begin{minipage}[t]{\textwidth}
			\begin{algorithm}[H]
				\DontPrintSemicolon
				\caption{$\parcsnewttt(G,H)$}
				\label{algo:parcsnewttt}
				\KwIn{$G$ - the input graph \\  $H$ - a set of $\rho$ edges being added to $G$}
				\KwOut{Cliques in $\Lambda^{new} = \cliques(G+H)\setminus\cliques(G)$}
				$G' \gets G + H$\;
				Consider edges of $H$ in an arbitrary order $e_1, e_2,\ldots,e_{\rho}$\;
				
				\ForPar{$i \gets 1,2, \ldots,\rho$}{
					$e \gets e_i = (u,v)$\;
					
					$V_e \gets \{u,v\}\cup\{\Gamma_{G'}(u)\cap\Gamma_{G'}(v)\}$\;
					${G}'' \gets$ Graph induced by $V_e$ on $G'$\;
					$K \gets \{u,v\}$\;
					$\cand \gets V_e \setminus \{u,v\}$ ; $\fini \gets \emptyset$\;
					$S\gets\partomitaE({G}'', K, \cand, \fini, \{e_1, e_2, ...,e_{i-1}\})$\;
					$\Lambda^{new}\gets\Lambda^{new}\cup S$\;
				}
			\end{algorithm}
		\end{minipage}
	\end{mdframed}
\end{figure}

Note that the size of change can be as small as $O(1)$ or as large as exponential in the size of the graph upon addition of a new single edge. For example, consider a graph of size $n$ which is missing a single edge from being a clique. The size of change is only $3$ when that missing edge is added to the graph because there will be only one new maximal clique of size $n$ and two subsumed cliques, each of size $(n-1)$. On the other hand, consider a Moon-Moser graph~\cite{MM65} of size $n$. Addition of a single edge to this graph makes the size of the changes in the order of $O(3^{n/3})$.

In a previous work, we presented a sequential algorithm $\imce$~\cite{DST16}, which efficiently tackles the problem of updating the set of maximal cliques of a dynamic graph in an incremental model when new edges are added at a time. $\imce$ consists of $\csnewttt$ for computing new maximal cliques and $\cssub$ for computing subsumed cliques. However, $\imce$ is still unable to update the set of maximal cliques when the size of change is large. For instance, it takes $\imce$ around $9.4$ hours to update the set of maximal cliques, when the first $90K$ edges of graph $\cacit$ are added incrementally (with the original graph density 0.01). The high computational cost of $\imce$ calls for parallel methods.

In this section, we present parallel algorithms for enumerating the set of new and subsumed cliques when an edge set $H = \{e_1, e_2, ..., e_{\rho}\}$ is added to a graph $G$. Our parallel algorithms are based on $\imce$~\cite{DST16}. In this work, we focus on (1) processing new edges in parallel, (2) enumerating new maximal cliques using $\partomita$, and (3) Parallelizing $\cssub$~\cite{DST16}. First, we describe an efficient parallel algorithm for generating new maximal cliques, i.e. the maximal cliques in $G+H$, which are not present in $G$ and then an efficient parallel algorithm for generating subsumed maximal cliques, i.e. the cliques which are maximal in $G$ but not maximal in $G+H$. We present a shared-memory parallel algorithm $\parimce$ for the incremental maintenance of maximal cliques. $\parimce$ consists of (1) algorithm $\parcsnewttt$ for enumerating new maximal cliques and (2) algorithm $\parcssub$ for enumerating subsumed cliques. A brief description of the algorithms is discussed in Table~\ref{tab:algo-describe}. 
	\cref{fig:dyn-algo} also sketches the parallel shared-memory setup for enumerating maximal cliques in a dynamic graph upon addition of new batches. 

\begin{figure*}
	\centering
	\begin{tabular}{c}
		\includegraphics[width=1.0\textwidth]{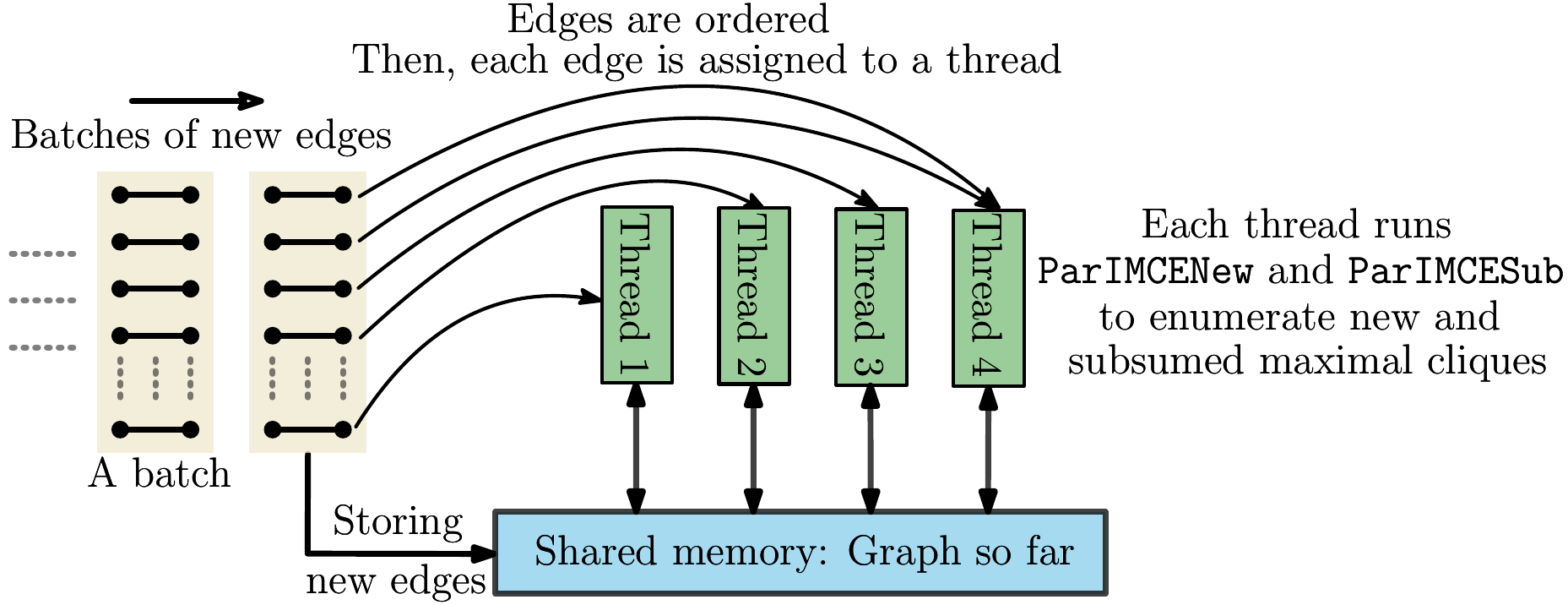}\\
	\end{tabular}
	\caption{
			Our shared-memory set up for processing a dynamic graph.
	}
	\label{fig:dyn-algo}
\end{figure*}
\begin{table}[b!]
	\caption{Brief description of the incremental algorithms in this work.}
	\label{tab:algo-describe}
	\resizebox{\textwidth}{!}{%
		\begin{tabular}{|c|c|c|l|}
			\hline
			\textbf{Objective}                                                     & \textbf{\begin{tabular}[c]{@{}c@{}}Sequential\\ Algorithm {\cite{DST16}}\end{tabular}} & \textbf{\begin{tabular}[c]{@{}c@{}}Parallel Algorithm\\  (this work)\end{tabular}} & \multicolumn{1}{c|}{\textbf{Overview of Parallel Algorithms}}                                                                                                                                                                                   \\ \hline\hline
			\begin{tabular}[c]{@{}c@{}}Enumerating new\\ maximal cliques\end{tabular}  & $\csnewttt$                                                                   & $\parcsnewttt$                                                                         & \begin{tabular}[c]{@{}l@{}}(1) Process new edges in parallel.\\ (2) Enumerate maximal cliques using $\partomitaE$.\end{tabular}                                                                                                            \\ \hline
			\begin{tabular}[c]{@{}c@{}}Enumerating\\ subsumed cliques\end{tabular} & $\cssub$                                                                    & $\parcssub$                                                                    & \begin{tabular}[c]{@{}l@{}}(1) Generate candidates in parallel\\ (executing inner for loop of $\cssub$ in parallel).\\ (2) Process each candidate in parallel\\ (executing candidate processing step of $\cssub$ in parallel).\end{tabular} \\ \hline
		\end{tabular}%
	}
\end{table}
\remove{
\begin{table}[htp!]
\caption{Comparison between $\pmb{\tomitaE}$ and $\pmb{\partomitaE}$}
\label{tab:algo-depends-on}
\centering
\scalebox{0.8}{
\begin{tabular}{|l | l |}
\toprule
$\tomitaE$~\cite{DST16} & $\partomitaE$ \\
\midrule
\makecell[l]{1. A subroutine used in $\csnewttt$.} & \makecell[l]{1. A parallel enumeration algorithm developed in this work and we use \\ it in $\parcsnewttt$ as a subroutine.}\\
\hline
\makecell[l]{2. Avoid enumerating duplicates using \\ the arrival ordering of the new edges.} & \makecell[l]{2. Avoid enumerating duplicates assuming \\ a global ordering of the new edges.} \\
\hline
3. This algorithm is based on $\tomita$. & \makecell[l]{3. This algorithm is based on parallelization technique of $\partomita$ and \\ duplicate avoidance technique of $\csnewttt$.}\\
\bottomrule
\end{tabular}
}
\end{table}
}
\remove{
$\parcssub$ is a theoretically efficient algorithm but suffers from the problem of generating cliques that are not actually subsumed (false positives). We propose an alternative algorithm $\parcssubnew$ that resolves the problem by carefully avoiding the enumeration of false positives. 
}

\subsection{Parallel Enumeration of New Maximal Cliques}
Here, we present a parallel algorithm $\parcsnewttt$ for enumerating the set of new maximal cliques when a set of edges is added to the graph. The idea is that we iterate over new edges in parallel and at each (parallel) iteration we construct a subgraph of the original graph and enumerate the set of all maximal cliques within the subgraph. We present an efficient parallel algorithm $\partomitaE$ (Algorithm~\ref{algo:partomitaext}) for this enumeration. The description of $\parcsnewttt$ is presented in Algorithm~\ref{algo:parcsnewttt}. 

\begin{figure}[htp!]
	\centering
	\begin{mdframed}[style=MyFrame]
		\begin{minipage}[t]{\textwidth}
			\begin{algorithm}[H]
				\DontPrintSemicolon
				\caption{$\partomitaE({G},K,\cand,\fini, \mathcal{E})$}
				\label{algo:partomitaext}
				\KwIn{${G}$ - the input graph \\  $K$ - Set of vertices forming a clique \\
					$\cand$ - a set of vertices that may extend $K$ \\ 
					$\fini$ - vertices that are not used to extend $K$, but are connected to all vertices of $K$ \\ 
					$\mathcal{E}$ - a set of edges to ignore}
				\If{$(\cand = \emptyset)$ \& $(\fini = \emptyset)$}{
					Output $K$ \tcp{K is a maximal clique}
					\Return\;
				}
				$\pivot \gets \parpivot({G},\cand,\fini)$\;
				$\ext[1..\kappa] \gets \cand - \Gamma_{{G}}(\pivot)$ \tcp{in parallel}
				\ForPar{$i \in$ $[1..\kappa]$}{
					$q\gets\ext[i]$\;
					$K_q \gets K\cup\{q\}$\;
					\If{$K_q \cap \mathcal{E} \neq \emptyset$}{
						\Return\;
					}
					$\cand_q \gets \intersect(\cand\setminus\ext[1..i-1], \Gamma_{{G}}(q))$\;
					$\fini_q \gets \intersect(\fini\cup\ext[1..i-1], \Gamma_{{G}}(q))$\;
					$\partomitaE({G},K_q,\cand_q,\fini_q,\mathcal{E})$\;
				}
			\end{algorithm}
		\end{minipage}
	\end{mdframed}
\end{figure}

Note that $\parcsnewttt$ is based upon an existing sequential algorithm $\csnewttt$~\cite{DST16} for enumerating new maximal cliques using $\tomitaE$ (Algorithm~\ref{algo:tomitaE})~\cite{DST16}, which lists all new maximal cliques without any duplication. $\partomitaE$ avoids the duplication in maximal clique enumeration similar to the technique used in $\tomitaE$ and uses a parallelization approach similar to of $\partomita$. More specifically, $\partomitaE$ follows a global ordering of the new edges to avoid redundancy in the enumeration process. Note that the correctness of $\parcsnewttt$ is followed by the correctness of the sequential algorithm $\csnewttt$. Following lemma shows the work efficiency and depth of $\parcsnewttt$.


\begin{lemma}\label{lem:incremental-new-work-efficiency}
Given a graph $G$ and a new edge set $H$, $\parcsnewttt$ is work-efficient, i.e, the total work is of the same order of the time complexity of $\csnewttt$. The depth of $\parcsnewttt$ is $O(\Delta^2 + M\log{\Delta})$, where $\Delta$ is the maximum degree and  $M$ is the size of a maximum clique in $G+H$.
\end{lemma}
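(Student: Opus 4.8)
The plan is to prove the two claims---work-efficiency and the depth bound---separately, in each case reducing the analysis to the already-established behaviour of $\partomita$ (\cref{lem:partomita}) on the small induced subgraphs $G''$ that $\parcsnewttt$ creates for each new edge.

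\textbf{Work-efficiency.} First I would argue that, for a single new edge, $\partomitaE$ performs the same total work as the sequential $\tomitaE$ used inside $\csnewttt$. The only structural differences between $\partomitaE$ and $\tomitaE$ are (i)~the unrolling of the extension loop so that $\cand_q$ and $\fini_q$ are recomputed independently for each branch, which by the argument of \cref{lem:partomita} adds only $O(|V_e|)$ lower-order work per node of the search tree and does not change the asymptotic number of nodes explored, and (ii)~the edge-exclusion test $K_q\cap\mathcal{E}\neq\emptyset$, which is exactly the duplicate-avoidance pruning of $\tomitaE$ and costs $O(1)$ per extension. Since this pruning removes precisely the same branches as the sequential algorithm, the two explore the same search tree and their total work matches. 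I would then observe that the per-edge preprocessing in $\parcsnewttt$---forming $V_e=\{u,v\}\cup(\Gamma_{G'}(u)\cap\Gamma_{G'}(v))$ and building the induced subgraph $G''$---is identical to the work $\csnewttt$ performs for that edge. Because the parallel loop over edges merely distributes these independent per-edge computations across processors without duplicating any of them, summing over all $\rho$ edges gives total work of the same order as $\csnewttt$.

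\textbf{Depth.} Since the edges of $H$ are handled in parallel, the depth of $\parcsnewttt$ equals the maximum depth of processing one edge $e=(u,v)$, plus $O(1)$ for forming $G'=G+H$ and for the concurrent union $\Lambda^{new}\gets\Lambda^{new}\cup S$, both carried out with the concurrent hash table of \cref{thm:parset}. For one edge there are two contributions. The preprocessing that computes $V_e$ and materialises the induced subgraph $G''$ over the at most $\Delta+1$ vertices of $V_e$ examines $O(\Delta^2)$ vertex pairs and contributes $O(\Delta^2)$ depth in the worst case. The subsequent call to $\partomitaE$ runs on $G''$, which has $O(\Delta)$ vertices; applying the depth analysis of \cref{lem:partomita} with $n$ replaced by $|V(G'')|=O(\Delta)$, and noting that the edge-exclusion test adds only $O(1)$ depth per recursive level, gives depth $O(M\log\Delta)$. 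Adding the two contributions yields $O(\Delta^2+M\log\Delta)$.

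\textbf{Main obstacle.} The delicate point is step~(ii) of the work argument: I must confirm that the global-ordering pruning used by $\partomitaE$ discards exactly the same maximal cliques as the arrival-ordering pruning of the sequential $\tomitaE$, so that no duplicate (already-enumerated) clique is ever expanded and the parallel search tree coincides with the sequential one. This rests on the correctness of $\csnewttt$ asserted in the surrounding text, which guarantees that each new maximal clique is emitted in exactly one edge-subproblem; once that is in hand the work equivalence is immediate. A secondary, milder subtlety is the $O(\Delta^2)$ depth charged to constructing $G''$: I would note this is a conservative bound that could be lowered (to $O(\log\Delta)$) by fully parallelising the induced-subgraph construction, but the stated bound already follows from the simpler analysis and suffices.
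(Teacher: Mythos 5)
Your proposal is correct and follows essentially the same route as the paper's own proof: both reduce work-efficiency to the observation that $\partomitaE$ matches $\tomitaE$ line-for-line (inheriting the $\partomita$-vs-$\tomita$ work argument, with the exclusion check $K_q\cap\mathcal{E}$ implemented via hash tables at matching cost), note that the per-edge preprocessing is identical to $\csnewttt$'s, and obtain the depth as $O(\Delta^2)$ for the sequential construction of $G''$ plus $O(M\log\Delta)$ for $\partomitaE$ on the $O(\Delta)$-vertex subgraph, with $O(1)$ depth for the concurrent-hash-table operations via \cref{thm:parset}. The only cosmetic difference is that the paper argues line-by-line over $\parcsnewttt$ while you organize the argument structurally, and your explicit substitution of $n\mapsto O(\Delta)$ in \cref{lem:partomita} is, if anything, cleaner than the paper's wording.
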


We prove this lemma in Appendix~\ref{appendix-a} by showing the equivalence of the operations of $\csnewttt$ and $\parcsnewttt$.

\remove{
\begin{proof}
First we prove the work efficiency of $\parcsnewttt$ followed by the depth of the algorithm. Note that, for proving the work-efficiency we will show that procedure at each line from Line~4 to Line~10 of $\parcsnewttt$ is work-efficient. Line~4 to Line~8 are work-efficient because, all these procedures are sequential in $\parcsnewttt$. The parallel set operations at Line~5 and Line~10 is work-efficient using Theorem~\ref{thm:parset}. Now we will show the work efficiency of $\parcsnewttt$ as follows. If we disregard Lines~7-10 of $\tomitaE$ and Lines~9-10 of $\partomitaE$ then the total work of $\partomitaE$ is the same as the time complexity of $\partomita$ following the work efficiency of $\partomita$. Next, we say that the time complexity of Lines~7-10 of $\tomitaE$ is the same as the time complexity of Lines~9-10 of $\partomitaE$ because in $\tomitaE$ we use two global hashtables - one for maintaining the adjacent vertices of the currently processing vertex in the set of new edges and another for maintaining the indexes of the new edges that we define before the beginning of the enumeration of new maximal cliques. With these two hashtables, we can check the \textit{if} condition at Line~9 of $\partomitaE$ in parallel with total work $O(n)$ using Theorem~\ref{thm:parset} which is equivalent to the time complexity of performing \textit{if} condition check at Line~7 of $\tomitaE$. This completes the proof of work efficiency of $\partomitaE$.
For proving the depth of $\parcsnewttt$, note that the depth is the sum of the depths of procedures at Line~5, 6, 9, 10 of $\parcsnewttt$ because the cost of all operations in other lines are $O(1)$ each. The depth of executing intersection in parallel at Line~5 is $O(1)$ using Theorem~\ref{thm:parset}, the depth of the procedure for constructing the graph at Line~6 is $O(\Delta^2)$ as we construct the graph sequentially, the depth $\partomitaE$ is $O(M\log{\Delta})$ following the depth of $\tomitaE$, and the depth of Line~10 is $O(1)$ because we can do this operation in parallel using Theorem~\ref{thm:parset}. Thus, the overall depth of $\parcsnewttt$ follows.
\end{proof}
}

%
%
%
%

\subsection{Parallel Enumeration of Subsumed Cliques}
\begin{figure}[htp!]
	\centering
	\begin{mdframed}[style=MyFrame]
		\begin{minipage}[t]{\textwidth}
			\begin{algorithm}[H]
				\DontPrintSemicolon
				\caption{$\parcssub(G,H,C,\Lambda^{new})$} 
				\label{algo:parcssub}
				\KwIn{$G$ - the input Graph \\ \hspace{1cm} $H$ - an edge set being added to $G$ \\ \hspace{1cm} $C$ - a set of maximal cliques in $G$ \\ \hspace{1cm} $\Lambda^{new}$ - a set of new maximal cliques in $G + H$}
				\KwOut{All cliques in $\Lambda^{del} = \cliques(G) \setminus \cliques(G+H)$}
				$\Lambda^{del} \gets \emptyset$\;
				\ForPar{$c \in \Lambda^{new}$}{
					$S\gets \{c\}$\;
					\For{$e = (u,v) \in E(c) \cap H$}{
						$S'\gets \phi$\;
						\ForPar{$c' \in S$}{
							\If{$e \in E(c')$}{
								$c_1 = c'\setminus\{u\}$ ; $c_2 = c'\setminus\{v\}$\;
								$S' \gets S'\cup c_1$ ; $S' \gets S'\cup c_2$\;
								
							}
							\Else{
								
								$S' \gets S'\cup c'$\;
							}
						}
						$S\gets S'$\;
					}
					\ForPar{$c' \in S$}{
						\If{$c' \in C$}{
							$\Lambda^{del} \gets \Lambda^{del} \cup c'$\;
							$C \gets C\setminus c'$\;
						}
					}
				}
			\end{algorithm}
		\end{minipage}
	\end{mdframed}
\end{figure}
In this section, we present a parallel algorithm $\parcssub$ based on the sequential algorithm $\cssub$~\cite{DST16} for enumerating subsumed cliques. In $\parcssub$, we perform parallelization in the following order: (1) Removing a single new edges from all the candidates in parallel and (2) Checking for the candidacy of the subsumed cliques in parallel. We present $\parcssub$ in Algorithm~\ref{algo:parcssub}. In the following lemma, we show the work efficiency and depth of $\parcssub$:

\begin{lemma}
Given a graph $G$ and a new edge set $H$, $\parcssub$ is work-efficient; the total work is of the same order of the time complexity of $\cssub$. The depth of $\parcsnewttt$ is $O(\mbox{min}\{M^2, \rho\})$ for processing each new maximal clique, where $M$ is the size of a maximum clique in $G+H$ and $\rho$ is the size of $H$.
\end{lemma}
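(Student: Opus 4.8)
The plan is to prove the two claims—work-efficiency and the $O(\min\{M^2,\rho\})$ depth per new clique—separately, but both rest on the same core observation: that $\parcssub$ (Algorithm~\ref{algo:parcssub}) performs exactly the collection of set operations carried out by the sequential $\cssub$~\cite{DST16}, merely reorganizing independent iterations to run concurrently and implementing every set primitive with the concurrent hash table of Theorem~\ref{thm:parset}. So my first step would be to pin down this correspondence precisely, and then charge the parallel cost against the sequential cost operation by operation.

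For the work bound, I would argue by induction on the sequential \textbf{for} loop over $e \in E(c) \cap H$ that the candidate set $S$ maintained after processing any prefix of the new edges of $c$ is identical in $\parcssub$ and in $\cssub$. The splitting rule—replace each $c'$ containing $e=(u,v)$ by the two sets $c'\setminus\{u\}$ and $c'\setminus\{v\}$, and keep every other $c'$ unchanged—is applied to the same set $S$ and produces the same $S'$ no matter in which order the candidates $c'$ are visited, which is exactly what makes the inner \textbf{ForPar} safe. Given this equivalence of candidate sets, I would charge each parallel primitive (the split, the insertions $S'\gets S'\cup c_i$, the membership tests $c'\in C$, and the removals $C\gets C\setminus c'$) against the matching operation in $\cssub$. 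By Theorem~\ref{thm:parset}, a batch of inserts/finds/deletes costs work proportional to the batch size, so the parallel primitives add only a constant-factor overhead. Summing over all $c\in\Lambda^{new}$ then gives that the total work of $\parcssub$ is of the same order as the running time of $\cssub$, i.e. work-efficiency.

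For the depth bound I would analyze one outer iteration, since the outer \textbf{ForPar} over $\Lambda^{new}$ contributes no depth beyond that of processing a single new clique $c$. The only genuine sequential dependency is the \textbf{for} loop over $e\in E(c)\cap H$: the candidate set produced in iteration $i$ feeds iteration $i+1$, so these steps cannot be collapsed. Its length is $|E(c)\cap H|$, which is at most $\binom{|c|}{2}\le\binom{M}{2}=O(M^2)$ (a clique on at most $M$ vertices has that many edges) and at most $\rho=|H|$, hence $O(\min\{M^2,\rho\})$ iterations. Each iteration runs an inner \textbf{ForPar} over $S$ whose body is a constant number of set operations, which by Theorem~\ref{thm:parset} has depth $O(1)$; likewise the final \textbf{ForPar} testing membership in $C$ and updating $\Lambda^{del}$ and $C$ has depth $O(1)$. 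Multiplying the $O(1)$-depth body by $O(\min\{M^2,\rho\})$ sequential iterations yields the claimed depth for each new maximal clique.

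The main obstacle will be the work analysis rather than the depth analysis. Because $S$ can grow (roughly doubling) as successive new edges of $c$ are processed, I must be careful to show that parallelizing the inner loops introduces no superfluous candidates or operations beyond those $\cssub$ already performs; the order-independence of the split step is exactly the lever that forces the parallel and sequential candidate sets to coincide. A secondary subtlety—relevant to correctness, and thus to justifying the hash-table accounting—is that the shared structures $C$ and $\Lambda^{del}$ are updated concurrently both across candidates $c'$ and across distinct new cliques $c$; here I would invoke Theorem~\ref{thm:parset} to guarantee that these concurrent finds and deletes stay correct and work-efficient, so that any candidate lying in $C$ is reported as subsumed exactly once.
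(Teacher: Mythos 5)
Your proposal is correct and follows essentially the same route as the paper's proof: work-efficiency via the observation that $\parcssub$ performs exactly the computations of $\cssub$ with only the loops parallelized (batched hash-table operations via Theorem~\ref{thm:parset} costing no extra work), and depth via the single sequential dependency, the loop over $e \in E(c) \cap H$, whose $O(\min\{M^2,\rho\})$ iterations each have $O(1)$-depth bodies. Your write-up is in fact somewhat more careful than the paper's, since you explicitly justify the order-independence of the splitting step and the bound $|E(c)\cap H| \le \min\{\binom{M}{2}, \rho\}$, both of which the paper asserts without argument.
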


\begin{proof}
First, note that the procedure of $\parcssub$ is exactly the same as the procedure of $\cssub$ except for the parallel loops at Line~6 and Line~13 of $\parcssub$ whereas these loops are sequential in $\cssub$. Since all the computations in $\parcssub$ is exactly the same as the computations in $\cssub$, except for the loop parallelization, $\parcssub$ is work-efficient.

For proving the parallel depth of $\parcssub$, first note that all the elements of $S$ at Line~6 of $\parcssub$ are processed in parallel, and the total cost of executing Lines~7 to 11 is $O(1)$. In addition, the depth of the operation at Line~12 of $\parcssub$ is $O(1)$ using the concurrent hashtable. Therefore, the overall depth of the procedures from Line~4 to Line~12 is the number of new edges in a new maximal clique $c$, processed at Line~2 of $\parcssub$ which is $O(\mbox{min}\{M^2, \rho\})$. Next, the depth of Lines~14 to 16 is $O(1)$ because it only takes $O(1)$ to execute Lines~15 and 16 using Theorem~\ref{thm:parset}. As a result, for each new maximal clique, the depth of the procedure for enumerating all subsumed cliques within the new maximal clique is $O(\mbox{min}\{M^2, \rho\})$.
\end{proof}

\subsection{Decremental Case}
We have so far discussed incremental maintenance of new and subsumed maximal cliques when the stream only contains new edges. We note that our algorithm can also support the deletion of edges through a reduction to the incremental case, this is similar to the methods used for sequential algorithms. Please see Sections 4.4 and 4.5 of \cite{DST16} for further details.

\remove{
\begin{algorithm}[htp!]
\DontPrintSemicolon
\caption{$\parcssubnew(G, H, C)$}
\label{algo:parcssuba}
\KwIn{$G$ - input graph \\ \hspace{1cm} $H$ - Set of $\rho$ edges being added to $G$ \\ \hspace{1cm} $C$ - Set of maximal cliques in $G$}
\KwOut{Cliques in $\Lambda^{del} = \cliques(G)\setminus\cliques(G+H)$}
Consider edges of $H$ in an arbitrary order $e_1, e_2,\ldots,e_{\rho}$\;

\ForPar{$i \gets 1,2, \ldots,\rho$}{
	$e \gets e_i = (u,v)$\;
	
	$V_e \gets \Gamma_{G+H}(u)\cap\Gamma_{G+H}(v)$\;
	$K \gets \emptyset$\;
	$\cand \gets V_e\cup\{u,v\}$ ; 
	
	$X \gets \bigcup_{u\in\cand}\Gamma_{G}(u)$\;
	
	$\fini \gets (\Gamma_{G}(u)\cup\Gamma_{G}(v)\cup X)\setminus\cand$\;
	$\Lambda^{del}\gets \Lambda^{del} \cup \partomita(G, K, \cand, \fini)$\;
	%
}
\end{algorithm}
}

\section{Evaluation}
\label{sec:exp}
In this section, we experimentally evaluate the performance of our shared-memory parallel static ($\partomita$ and $\parmce$) and dynamic ($\parimce$) algorithms for $\mce$ on static and dynamic (real world and synthetic) graphs to show the parallel speedup and scalability of our algorithms over efficient sequential algorithms $\tomita$ and $\imce$ respectively. We also compare our algorithms with state-of-the-art parallel algorithms for MCE to show that the performance of our algorithm has substantially improved over the prior works. We run the experiments on a computer configured with Intel Xeon (R) CPU E5-4620 running at 2.20GHz , with $32$ physical cores (4 NUMA nodes each with $8$ cores) and $1$TB RAM.



\subsection{Datasets}
We use eight different real-world static and dynamic networks from publicly available repositories KONECT~\cite{konnect13}, SNAP~\cite{JA14}, and Network Repository~\cite{RN15} for doing the experiments. Dataset statistics are summarized in Table~\ref{graph:summary}. For our experiments, we convert these networks to simple undirected graphs by removing self-loops, edge weights, parallel edges, and edge directions.

We consider networks $\dblp$, $\skitter$, $\wikipedia$, $\wtalk$, and $\orkut$ for the evaluation of the algorithms on static graphs and networks $\dblp$, $\flickr$, $\wikipedia$, $\journal$, and $\cacit$ for the evaluation of the algorithms on dynamic graphs.


$\dblp$ shows the collaboration of authors of papers from DBLP computer science bibliography. In this graph, vertices represent authors, and there is an edge between two authors if they have published a paper~\cite{RN15}. $\skitter$ is an Internet topology graph which represents the autonomous systems, connected to each other on the Internet~\cite{konnect13}. $\wikipedia$ is a network of English Wikipedia in 2013, where vertices represent pages in English Wikipedia, and there is an edge between two pages $p$ and $q$ if there is a hyperlink in page $p$ to page $q$~\cite{JA14}. $\wtalk$ contains users of Wikipedia as vertices where each edge between two users in this graph indicates that one of the users has edited the "page talk" of the other user on Wikipedia~\cite{JA14}. $\orkut$ is a social network where each vertex represents a user in the network, and there is an edge if there is a friendship relation between two users~\cite{konnect13}. Similar to $\orkut$, $\flickr$ and $\journal$ are also social networks where a vertex represents a user and an edge represents the friendship between two users. $\cacit$ is a citation network of high energy physics theory where a vertex represents a paper and there is an edge between paper $u$ and paper $v$ if $u$ cites $v$.

For the evaluation of $\partomita$ and $\parmce$, we give the entire graph as input to the algorithm in the form of an edge list and for the evaluation of the algorithms on dynamic graphs, we start with an empty graph that contains all vertices but no edges, and, at a time, we add a set of edges in an increasing order of timestamps for computing the changes in the set of maximal cliques. For $\parimce$, we use real dynamic graphs, where each edge has a timestamp. In addition, we evaluate $\parimce$ on $\journal$, which is a static graph. We convert this graph to a dynamic graph through randomly permuting the edges and processing edges in that ordering. 

\begin{figure} [htp!]
	\centering
	\begin{tabular}{cccc}
		\includegraphics[width=.3\textwidth]{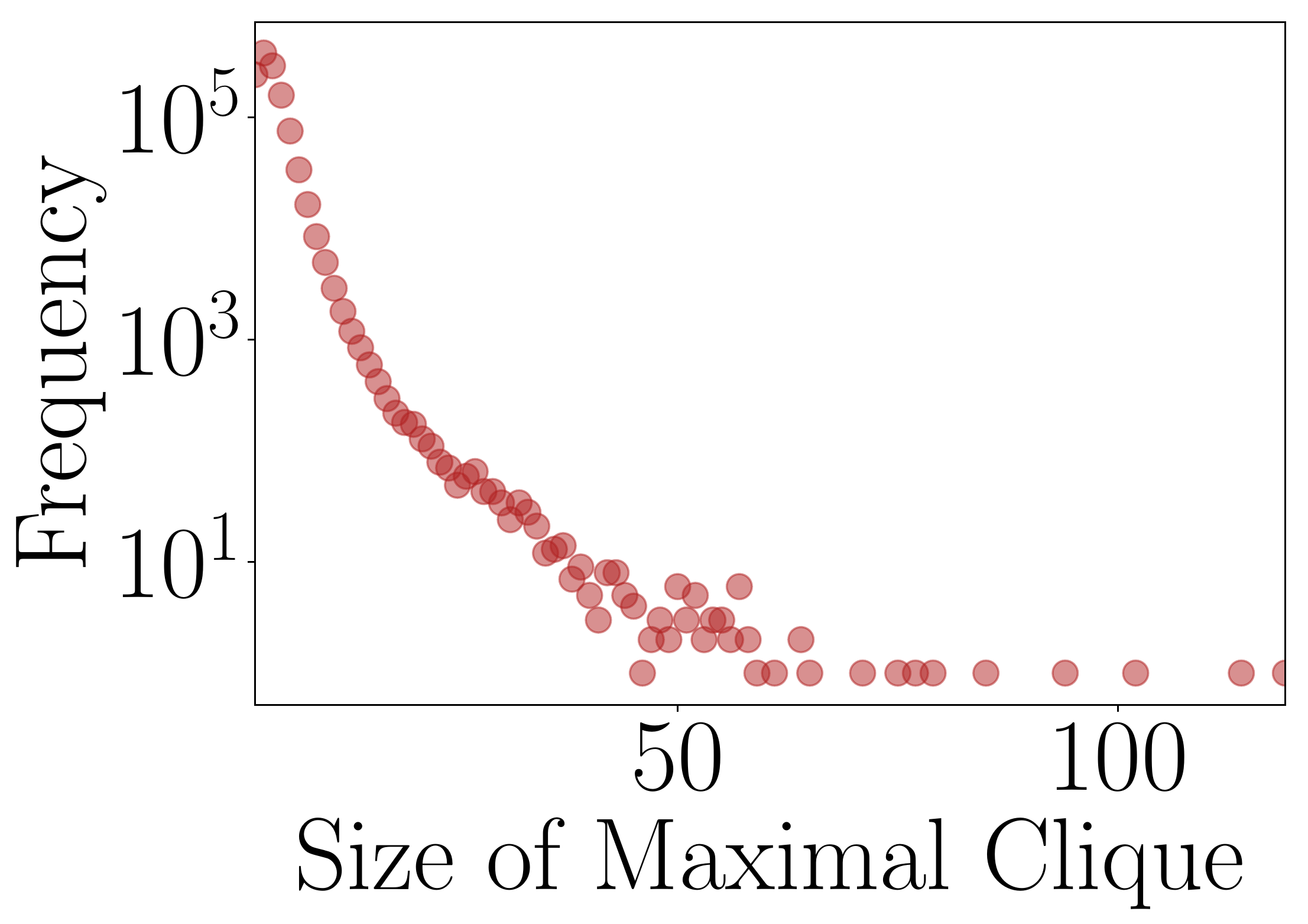} &
		\includegraphics[width=0.3\textwidth]{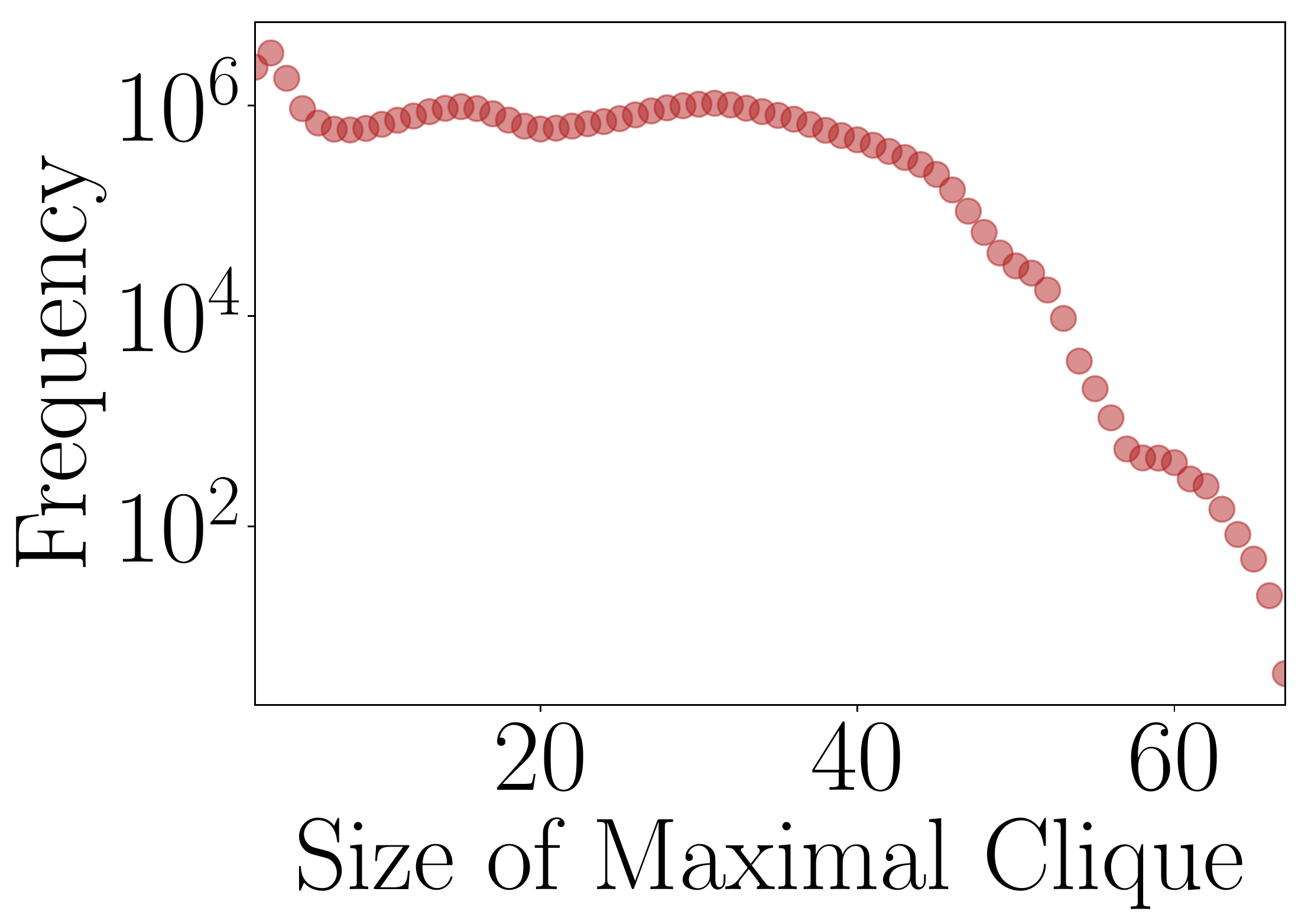} &
		\includegraphics[width=0.3\textwidth]{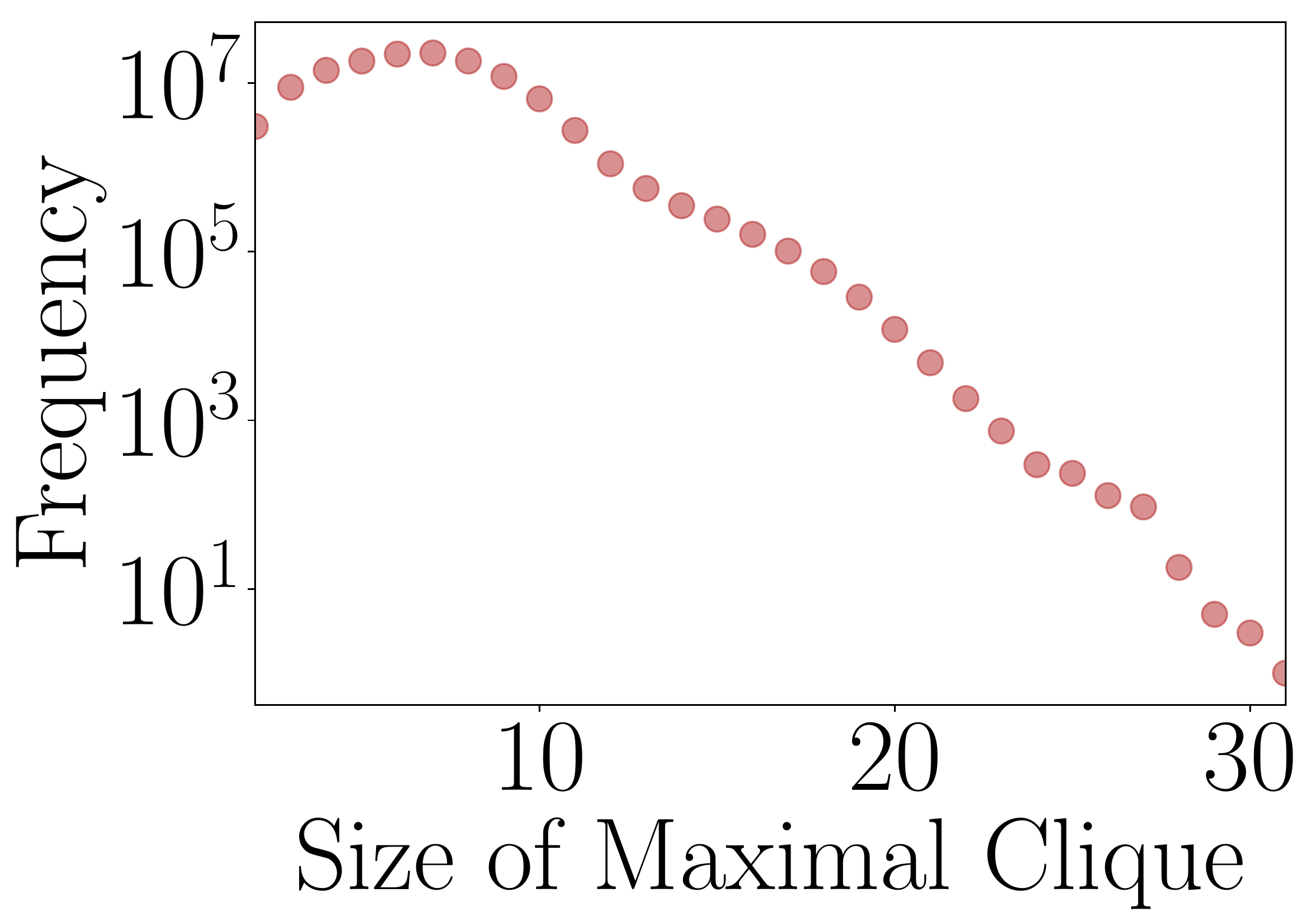} \\
		\textbf{(a) \dblp}  & \textbf{(b) \skitter} & \textbf{(c) \wikipedia}  \\[6pt]
	\end{tabular}
	\begin{tabular}{cccc}
		\includegraphics[width=0.3\textwidth]{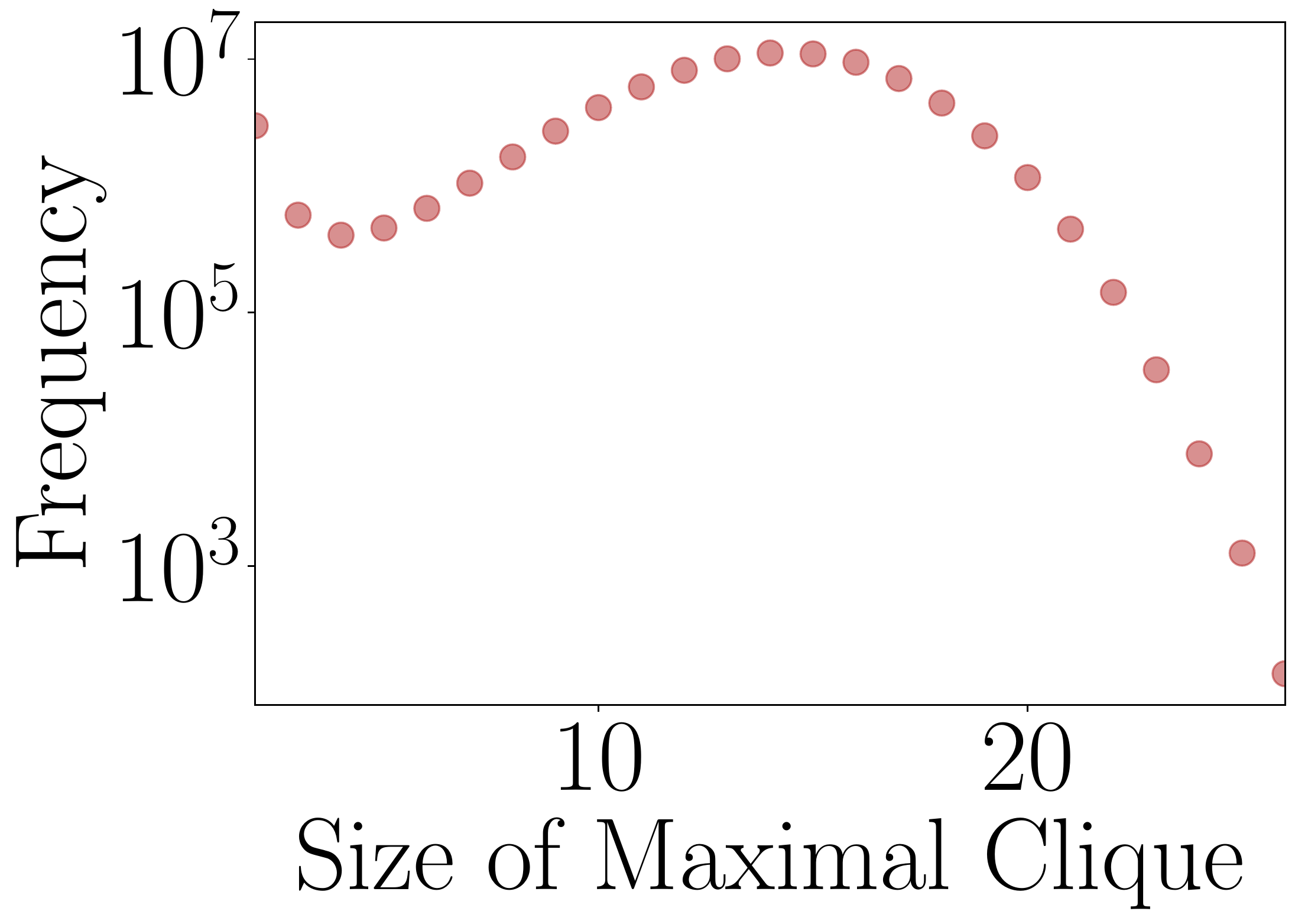} &
		\includegraphics[width=0.3\textwidth]{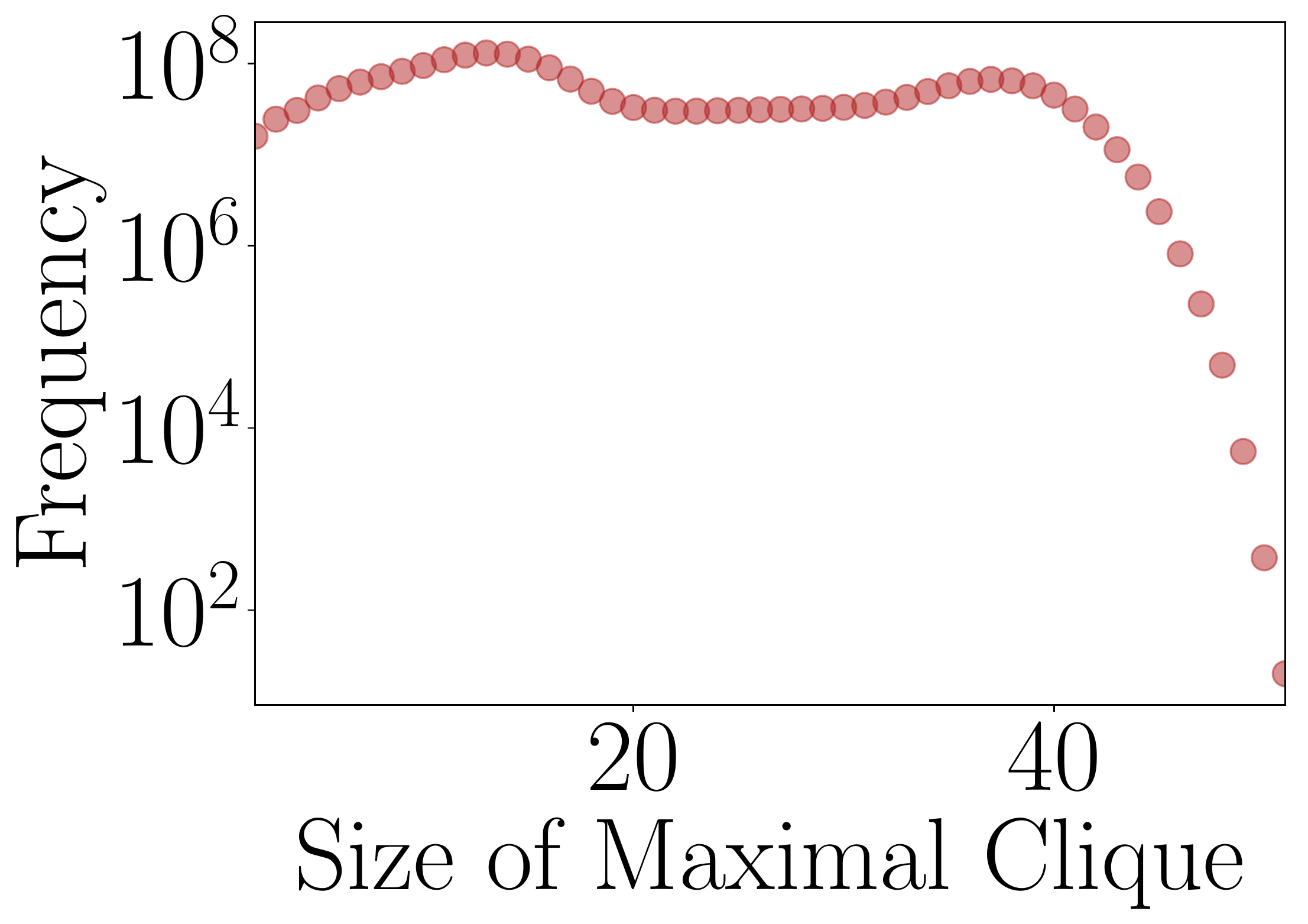} \\
		\textbf{(d) \wtalk}  & \textbf{(e) \orkut}  \\[6pt]
	\end{tabular}
	\caption{\textbf{Frequency distribution of sizes of maximal cliques across different input graphs.}}
	\label{fig:szMC-distribution}
\end{figure}
To understand the datasets better, we illustrated the frequency distribution of sizes of maximal cliques in~\cref{fig:szMC-distribution}. The size of maximal cliques in $\dblp$ can be as large as $100$ vertices. Although the number of such large maximal cliques are small, the depth of the search space might increase exponentially for discovering such large maximal cliques. As shown in \cref{fig:szMC-distribution}, most of the graphs contain more than tens of millions of maximal cliques. For example, $\orkut$ contains more than two billion maximal cliques. That being said, the depth and breadth of the search space makes MCE a challenging problem to solve.
 


\remove{
\begin{table}[htp!]
\begin{minipage}{\columnwidth}
\centering
\scalebox{0.7}{
\begin{tabular}{| l | r | r | r | r | r |}
\toprule
\textbf{Dataset} & $|V|$ & $|E|$ & \# maximal cliques & avg. size of a maximal clique & size of the maximum clique\\
\midrule
$\dblp$ & \num{1282468} & \num{5179996} & \num{1219320} & 3 & 119\\
\hline
$\orkut$ & \num{3072441} & \num{117184899} & \num{2270456447} & 20 & 51\\
\hline
$\skitter$ & \num{1696415} & \num{11095298} & \num{37322355} & 19 & 67\\
\hline
$\wtalk$ & \num{2394385} & \num{4659565} & \num{86333306} & 13 & 26\\
\hline
$\wikipedia$ & \num{1870709} & \num{36532531} & \num{131652971} & 6 & 31\\
\hline
\modified{$\journal$} & \num{4033137} & \num{27933062} & \num{38413665} & 29 & 214 \\
\hline
$\flickr$ & \num{2302925} & \num{22838276} & $-$ & $-$ & $-$\\
\hline
$\cacit$ & \num{22908} & \num{2444798} & $-$ & $-$ & $-$\\
\bottomrule
\end{tabular}}
\end{minipage}
\caption{Static and Dynamic Networks, used for evaluation, and their properties.\label{graph-rem:summary}. For some of the graphs, we could not report the information about maximal cliques as they did not finish within $8$ hours using parallel algorithms.}
\end{table}
}

\begin{table}[b!]
	\caption{Static and Dynamic Networks, used for evaluation, and their properties\label{graph:summary}. For some of the graphs (e.g. $\flickr$ and $\cacit$) used for evaluating the incremental algorithms, we could not report the information about maximal cliques as they did not finish within 10 days, even using parallel algorithms.}
	\resizebox{\textwidth}{!}{%
		\begin{tabular}{lrrrcc}
			\textbf{Dataset} & \multicolumn{1}{c}{\textbf{\#Vertices}} & \multicolumn{1}{c}{\textbf{\#Edges}} & \multicolumn{1}{c}{\textbf{\#Maximal Cliques}} & \textbf{\begin{tabular}[c]{@{}c@{}}Average Size of \\ Maximal Cliques\end{tabular}} & \textbf{\begin{tabular}[c]{@{}c@{}}Size of \\ Largest Clique\end{tabular}} \Tstrut\Tstrut\\ \hline
			$\dblp$ & 1,282,468 & 5,179,996 & 1,219,320 & 3 & 119 \\
			$\orkut$ & 3,072,441 & 117,184,899 & 2,270,456,447 & 20 & 51 \\
			$\skitter$ & 1,696,415 & 11,095,298 & 37,322,355 & 19 & 67 \\
			$\wtalk$ & 2,394,385 & 4,659,565 & 86,333,306 & 13 & 26 \\
			$\wikipedia$ & 1,870,709 & 36,532,531 & 131,652,971 & 6 & 31 \\
			$\journal$ & 4,033,137 & 27,933,062 & 38,413,665 & 29 & 214 \\
			$\flickr$ & 2,302,925 & 22,838,276 & \multicolumn{1}{r}{> 400 billion} & - & - \\
			$\cacit$ & 22,908 & 2,444,798 & \multicolumn{1}{r}{> 400 billion} & - & -
		\end{tabular}%
	}
\end{table}


\subsection{Implementation of the Algorithms}
In our parallel implementations of $\partomita$, $\parmce$, and $\parimce$, we utilize {\small\textbf{\texttt{parallel\_for}}} and {\small\textbf{\texttt{parallel\_for\_each}}}, developed by the Intel TBB parallel library~\cite{RL99}. We also utilize {\small\textbf{\texttt{concurrent\_hash\_map}}} for atomic operations on hashtable. We use C++11 standard for the implementation of the algorithms and compile all the sources using Intel ICC compiler version 18.0.3 with optimization level `-O3'. We use the command `numactl -i all' for balancing the memory in a NUMA machine. System level load balancing is performed using a dynamic work stealing scheduler~\cite{RL99} in TBB.



To compare with prior works on MCE, we implement some of them such as \cite{TTT06,ELS10,DW+09,SMT+15,ZA+05} in C++, and we use the executable of the C++ implementations for the rest of the algorithms ($\greedybb$~\cite{SAS18}, $\hashing$~\cite{LP+17}, and $\gp$~\cite{WC+17}), provided by the respective authors. See Subsection \ref{sec:exp-prior} for more details. 

We compute the degeneracy number and triangle count for each vertex using sequential procedures. While the computation of per-vertex triangle counts and the degeneracy ordering could be potentially parallelized, implementing a parallel method to rank vertices based on their degeneracy number or triangle count is itself a non-trivial task. We decided not to parallelize these routines since the degeneracy- and triangle-based ordering did not yield significant benefits when compared with degree-based ordering, whereas the degree-based ordering is trivially available, without any additional computation. 

We assume that the entire graph is available in a shared global memory. The Total Runtime (TR) of $\parmce$ consists of (1)~Ranking Time (RT): the time required to rank vertices of the graph based on the ranking metric used in the algorithm, i.e. degree, degeneracy number, or triangle count of vertices and (2)~Enumeration Time (ET): the time required to enumerate all maximal cliques. For $\parmcedegen$ and $\parmcetriangle$ algorithms, the runtime of ranking  is also reported. Figures~\ref{fig:static-par-speedup} and~\ref{fig:static-par-runtime} show the parallel speedup (with respect to the runtime of $\tomita$) and the enumeration time of $\parmce$ using different vertex ordering strategies. Table~\ref{result:runtime-splitup} shows the breakdown of the Total Runtime (TR) into Ranking Time (RT) and Enumeration Time (ET). 

The runtime of $\parimce$ consists of (1)~the computation time of $\parcsnewttt$ and (2)~the computation time of $\parcssub$. In the implementation of $\parcsnewttt$, we follow the design of $\parmce$ and instead of executing $\partomitaE$ on the entire (sub)graph for an edge as in Line~9 of $\parcsnewttt$, we run $\partomitaE$ on per-vertex sub-problems in parallel. For dealing with load balance, we use degree based ordering of the vertices of ${G}''$ in creating the sub-problems. We decided to implement $\parcsnewttt$ in this way as we observed a significant improvement in the performance of $\parmce$ over $\partomita$ when we used a degree-based vertex ordering. For experiment on dynamic graphs, we use the batch size of $1000$ edges for all the graphs except for an extremely dense graph $\cacit$ (with original graph density 0.01) where we use batch size of $10$.


\subsection{Discussion of the Results}
Here, we empirically evaluate  our parallel algorithms and prior methods. First, we show the parallel speedup (with respect to the sequential algorithms) and scalability (with respect to the number of cores) of our algorithms. Next, we compare our works with the state-of-the-art sequential and parallel algorithms for MCE. The results demonstrate that our solutions are faster than prior methods with significant margins.

\subsubsection{Parallel MCE on Static Graphs}

The total runtime of the parallel algorithms with $32$ threads are shown in Table~\ref{result:static-parallel-runtime}. We observe that $\partomita$ achieves a speedup of \textbf{5x}-\textbf{14x} over the sequential algorithm $\tomita$. The three versions of $\parmce$, i.e. $\parmcedegree$, $\parmcedegen$, $\parmcetriangle$, achieve a speedup of \textbf{15x}-\textbf{21x} with 32 threads, when we consider the runtime of enumeration task alone. The speedup ratio decreases in $\parmcedegen$ and $\parmcetriangle$ if we include the time taken by ranking strategies (See Figure~\ref{fig:static-par-speedup}).

The runtime of $\partomita$ is higher than the runtime of $\parmce$, due to a heavier cumulative overhead of pivot computation and of processing sets $\cand$ and $\fini$ in $\partomita$. For example, in $\dblp$ graph, when we run $\partomita$, the cumulative overhead of computing $\pivot$ is 248 seconds, and the cumulative overhead of updating $\cand$ and $\fini$ is 38 seconds while, in $\parmce$, these runtimes are 156 and 21 seconds, respectively. This helps $\parmce$ accelerate the entire process, which achieves 2x speedup over $\partomita$.
\remove{
\begin{table}[htp!]
\caption{\textbf{Runtime (in sec.) of $\parmce$ with vertex orderings (excluding the time taken by vertex ordering), $\partomita$ (with $32$ threads), and $\tomita$.}}
\label{result:static-parallel-runtime-removed}
\begin{minipage}{\columnwidth}
\centering
\scalebox{0.7}{
\begin{tabular}{| l | r | r | r | r | r |}
\toprule
\textbf{Dataset} & \tomita & \partomita & \parmcedegree & \parmcedegen & \parmcetriangle \\
\midrule
$\dblp$ & 42.4 & 4.6 & 2.6 & 3.1 & 2.9\\
\hline
$\orkut$ & \num{28923} & \num{3472.6} & \num{1676.4} & \num{2350} & \num{1959.3}\\
\hline
$\skitter$ & 660  & 68 & 39.2 & 42.8 & 48.2\\
\hline
$\wtalk$ & \num{961} & \num{109.4} & 51.6 & 77.8 & 57.6\\
\hline
$\wikipedia$ & 2646.6 & 160 & 123.3 & 155.3 & 178.8\\
\bottomrule
\end{tabular}}
\end{minipage}
\end{table}
}
\begin{table}[b!]
	\caption{Runtime (in sec.) of $\tomita$, $\partomita$, and $\parmce$ with different vertex orderings on 32 cores. The numbers exclude the time taken for vertex ordering. Note that the best algorithm, which uses degree based vertex ordering, has zero additional cost for computing the vertex ordering.}
	\label{result:static-parallel-runtime}
	\resizebox{0.85\textwidth}{!}{%
		\begin{tabular}{lrrrrr}
			\textbf{Dataset} & \multicolumn{1}{c}{$\tomita$} & \multicolumn{1}{c}{$\partomita$} & \multicolumn{1}{c}{$\parmcedegree$} & \multicolumn{1}{c}{$\parmcedegen$} & \multicolumn{1}{c}{$\parmcetriangle$} \\ \hline
			$\dblp$ & 42 & 4 & 2 & 3 & 3 \\
			$\orkut$ & 28923 & 3472 & 1676 & 2350 & 1959 \\
			$\skitter$ & 660 & 68 & 39 & 43 & 48 \\
			$\wtalk$ & 961 & 109 & 52 & 78 & 58 \\
			$\wikipedia$ & 2646 & 160 & 123 & 155 & 179
		\end{tabular}%
	}
\end{table}
\begin{figure} [t!]
	\centering
	\includegraphics[width=0.8\textwidth]{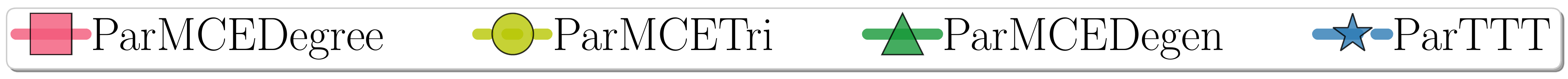}
	\begin{tabular}{cccc}
		\includegraphics[width=.3\textwidth]{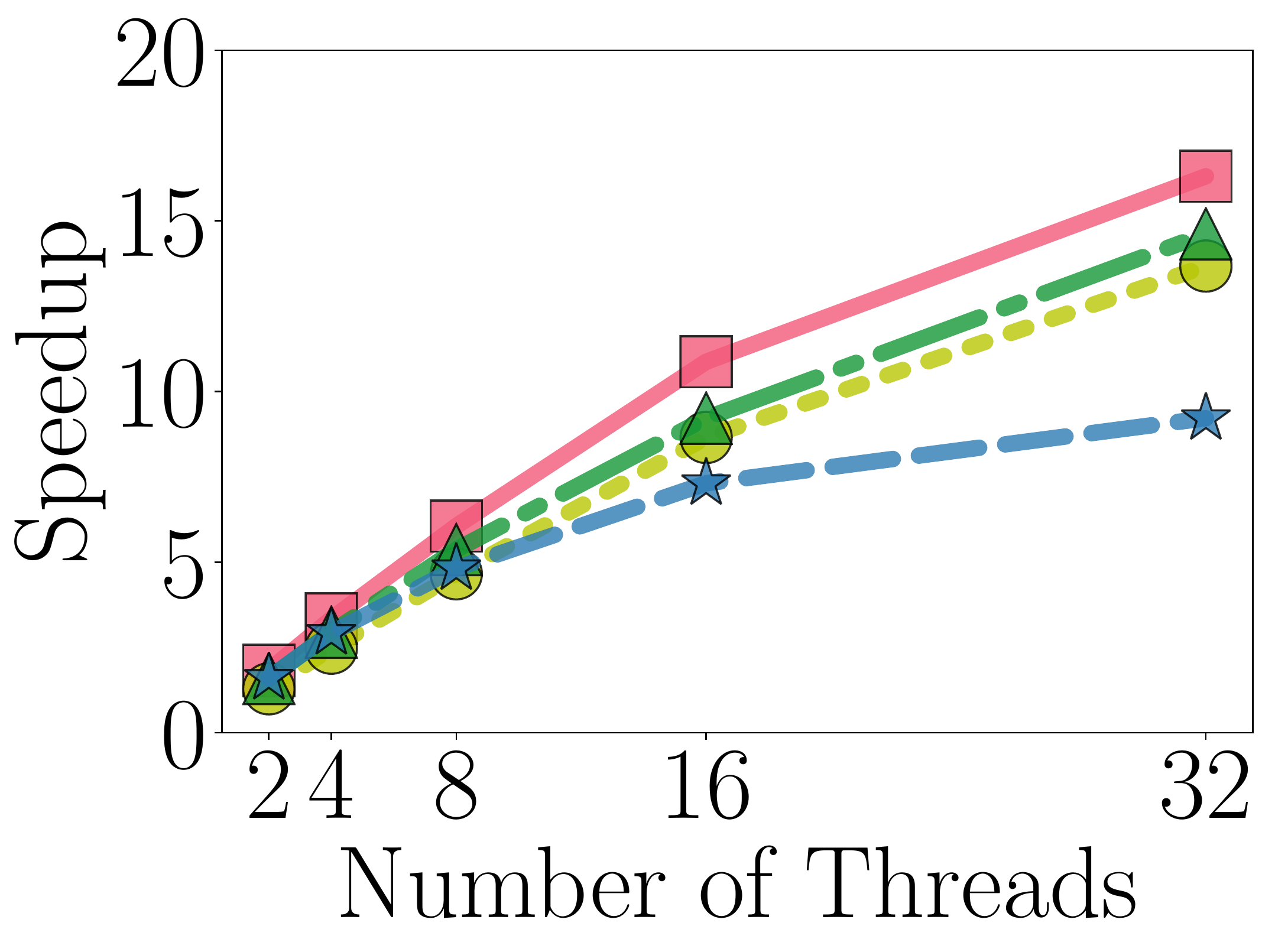} &
		\includegraphics[width=0.3\textwidth]{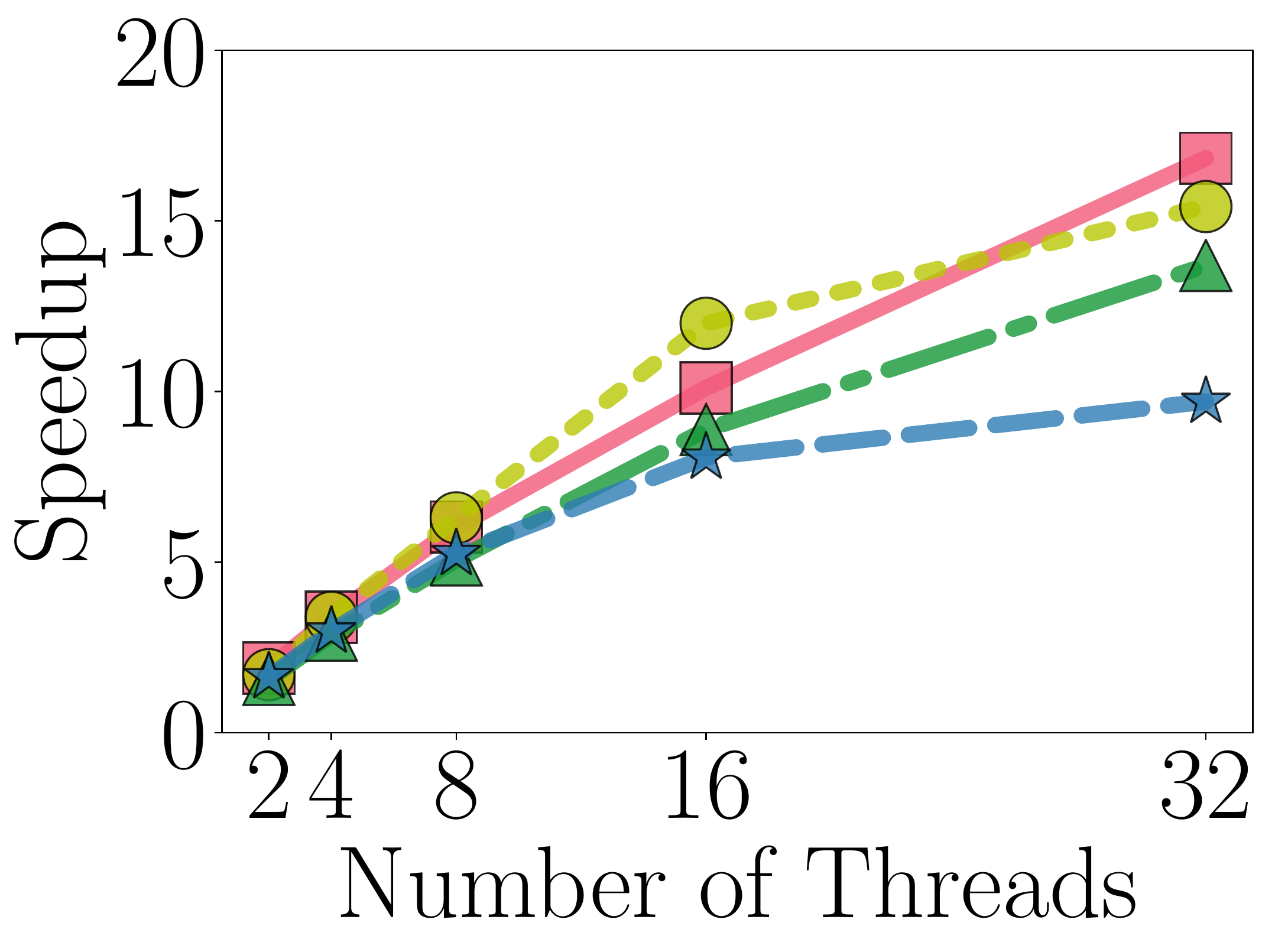} &
		\includegraphics[width=0.3\textwidth]{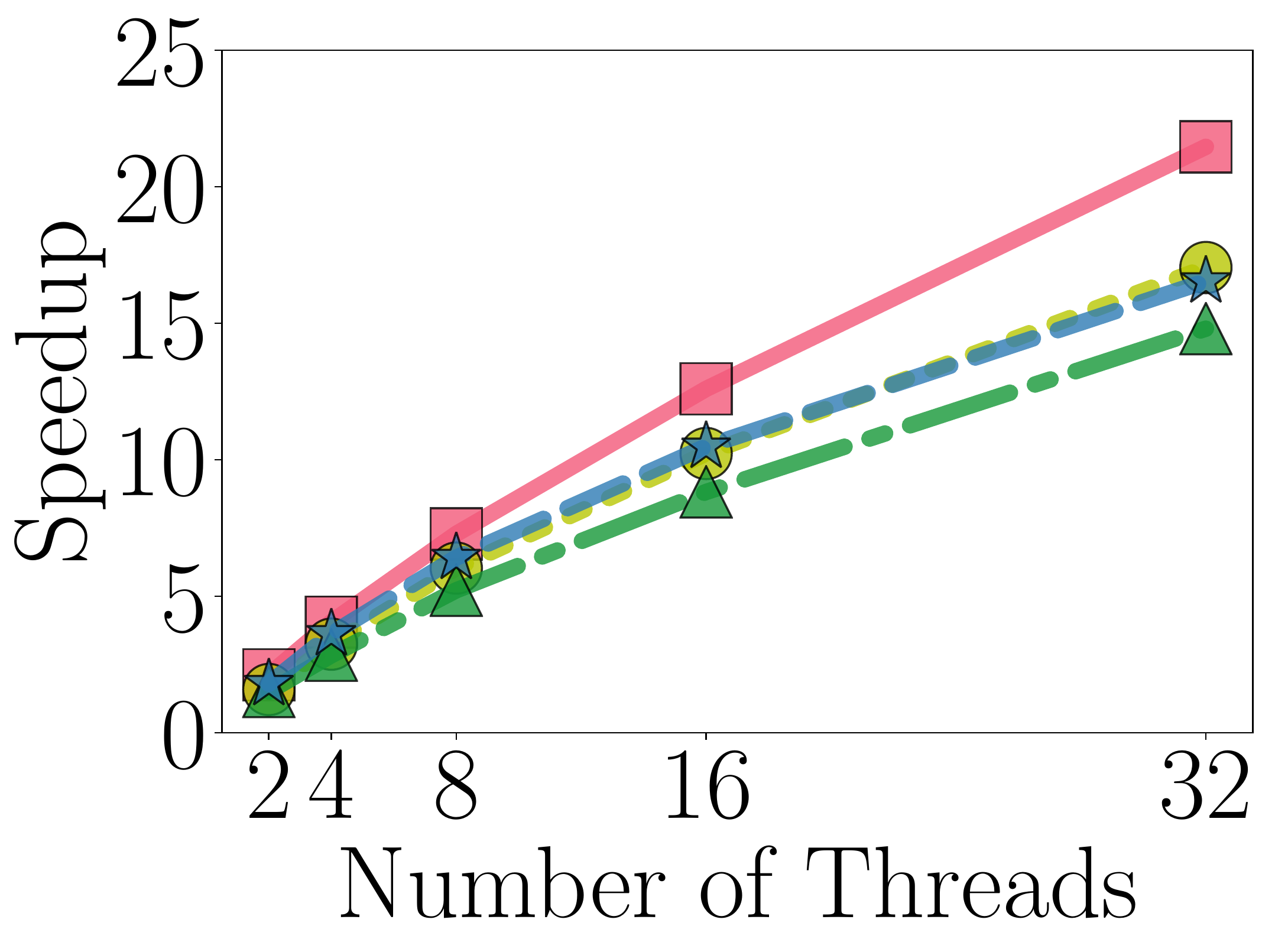} \\
		\textbf{(a) $\dblp$}  & \textbf{(b) $\skitter$} & \textbf{(c) $\wikipedia$}  \\[6pt]
	\end{tabular}
	\begin{tabular}{cccc}
		\includegraphics[width=0.3\textwidth]{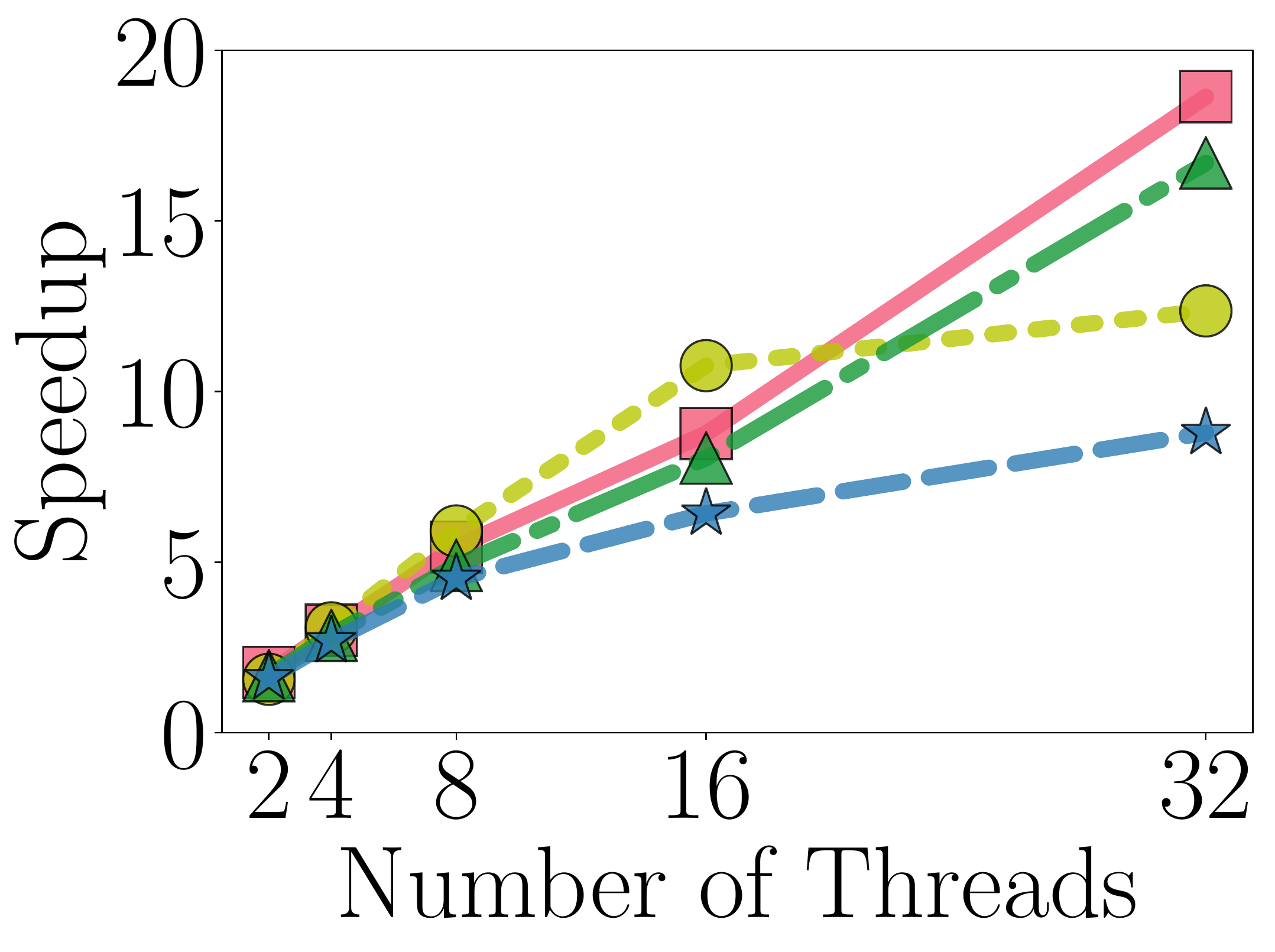} &
		\includegraphics[width=0.3\textwidth]{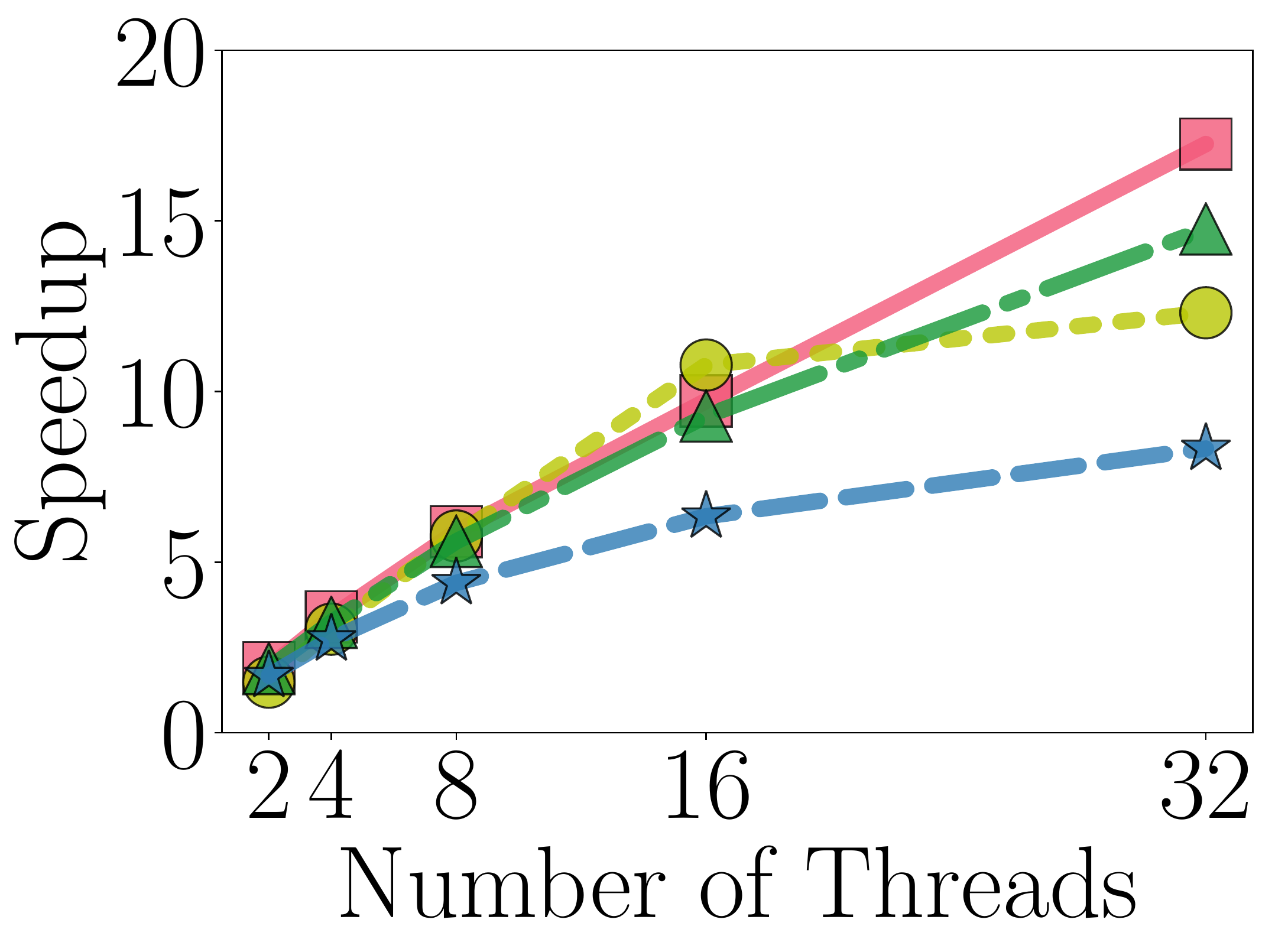} \\
		\textbf{(d) $\wtalk$}  & \textbf{(e) $\orkut$}  \\[6pt]
	\end{tabular}
\vspace{-2ex}
	\caption{Parallel speedup when compared with $\tomita$ (sequential algo. due to Tomita et al.~\cite{TTT06}) as a function of the number of threads.} 
	\label{fig:static-par-speedup}
\end{figure}
\begin{figure}[htp!]
	\centering
	\includegraphics[width=0.8\textwidth]{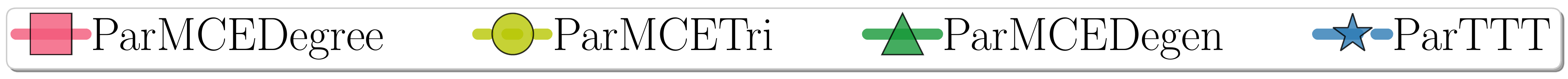}
	\begin{tabular}{cccc}
		\includegraphics[width=.3\textwidth]{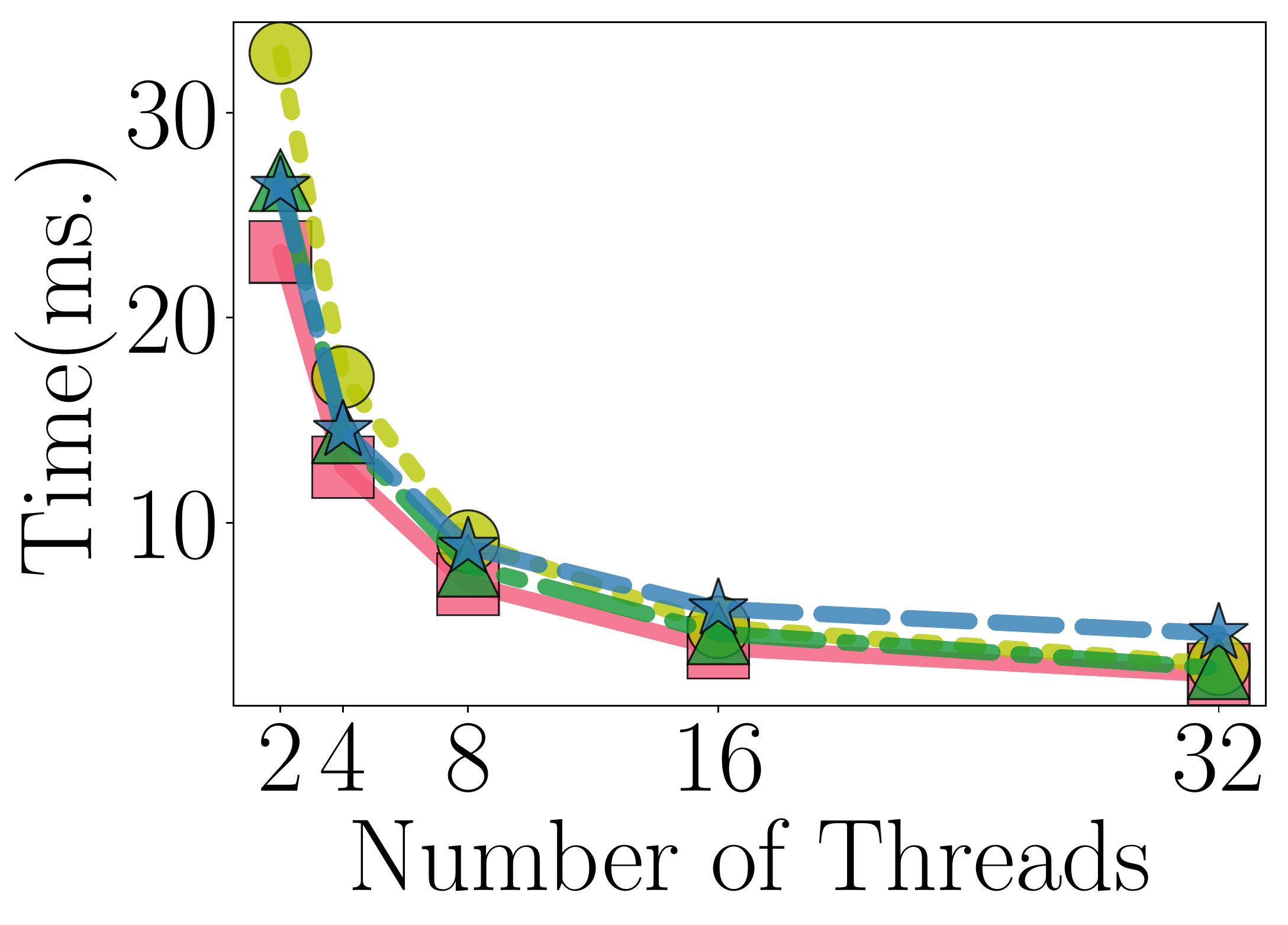} &
		\includegraphics[width=0.3\textwidth]{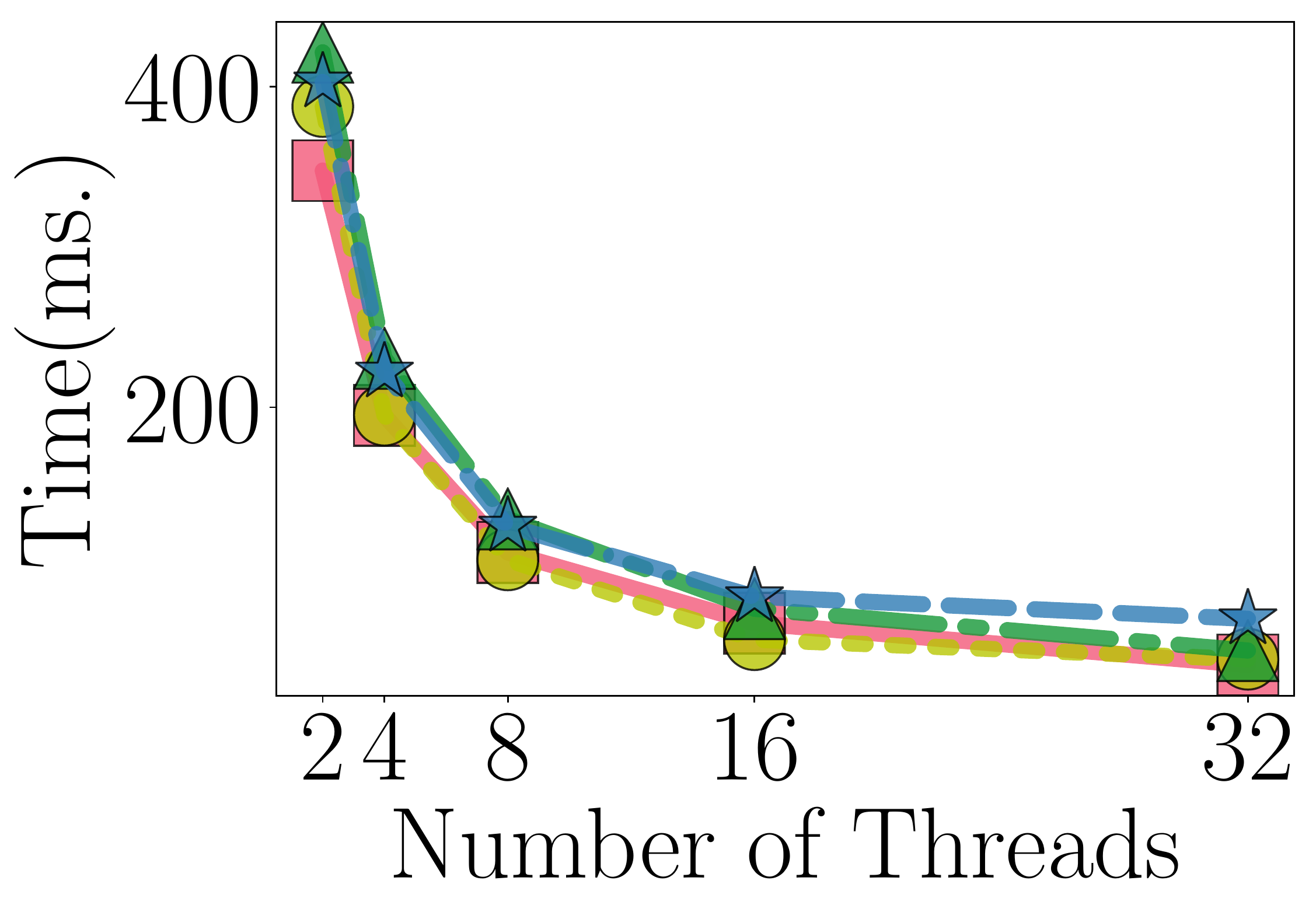} &
		\includegraphics[width=0.3\textwidth]{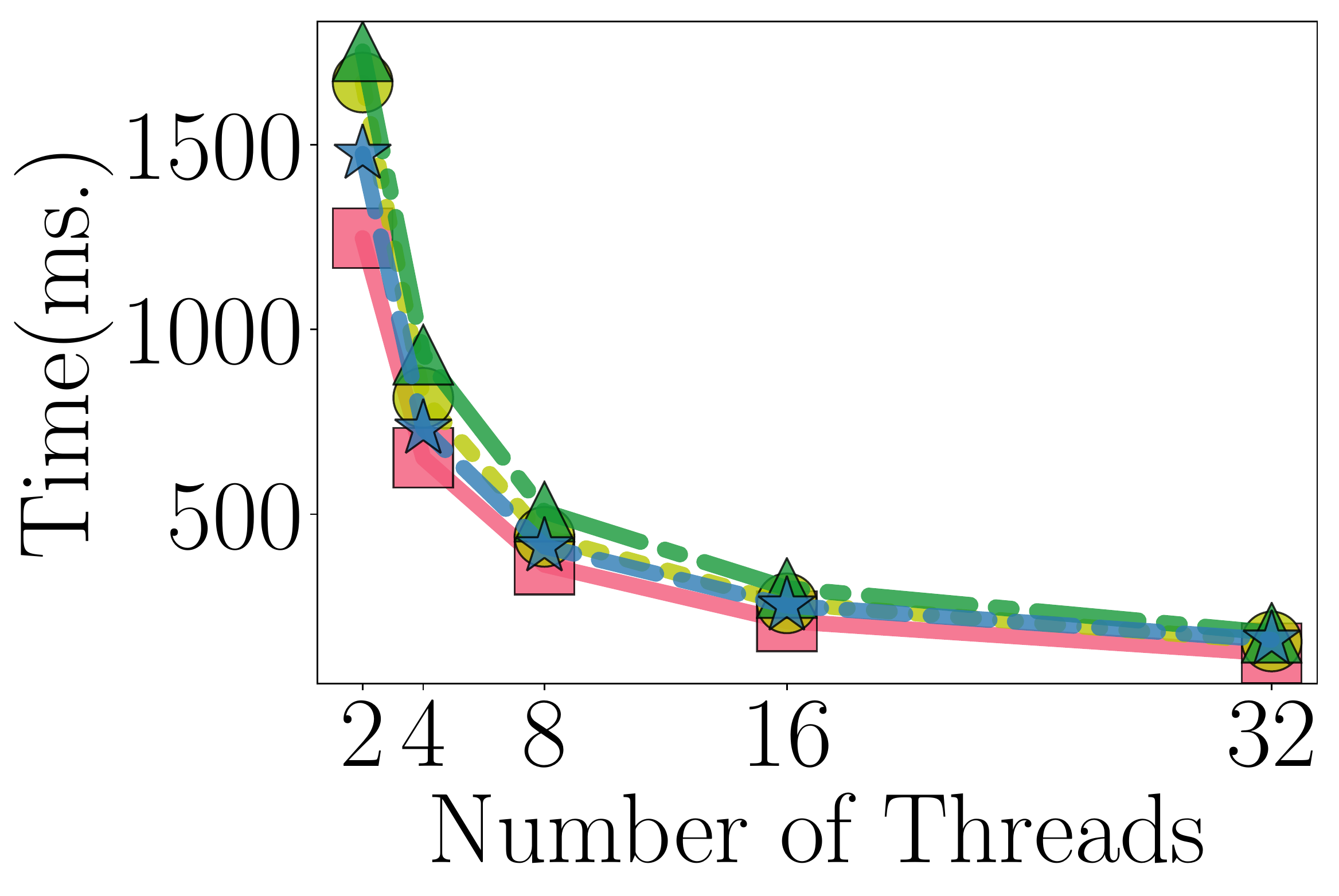} \\
		\textbf{(a) $\dblp$}  & \textbf{(b) $\skitter$} & \textbf{(c) $\wikipedia$}  \\[6pt]
	\end{tabular}
	\begin{tabular}{cccc}
		\includegraphics[width=0.3\textwidth]{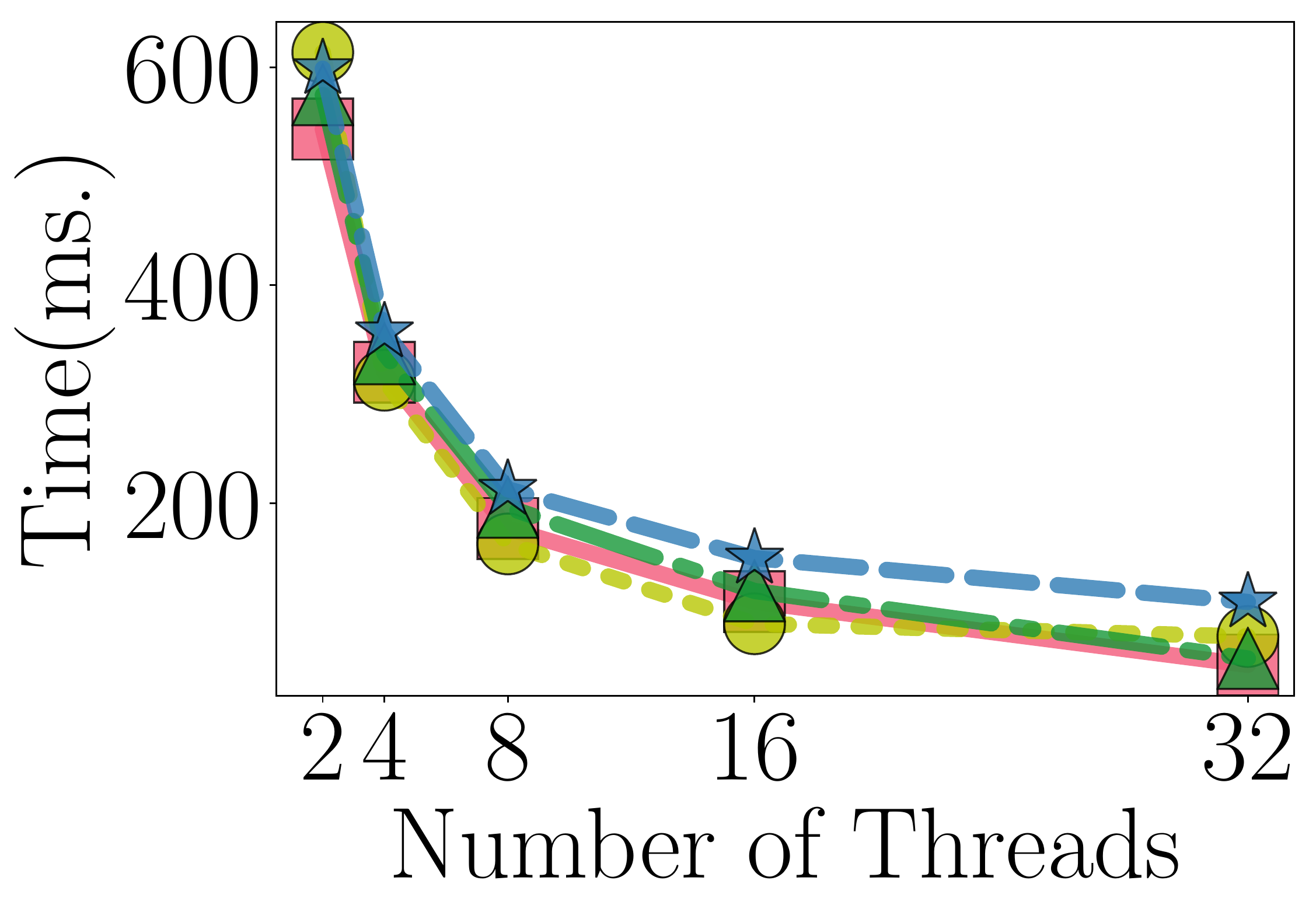} &
		\includegraphics[width=0.3\textwidth]{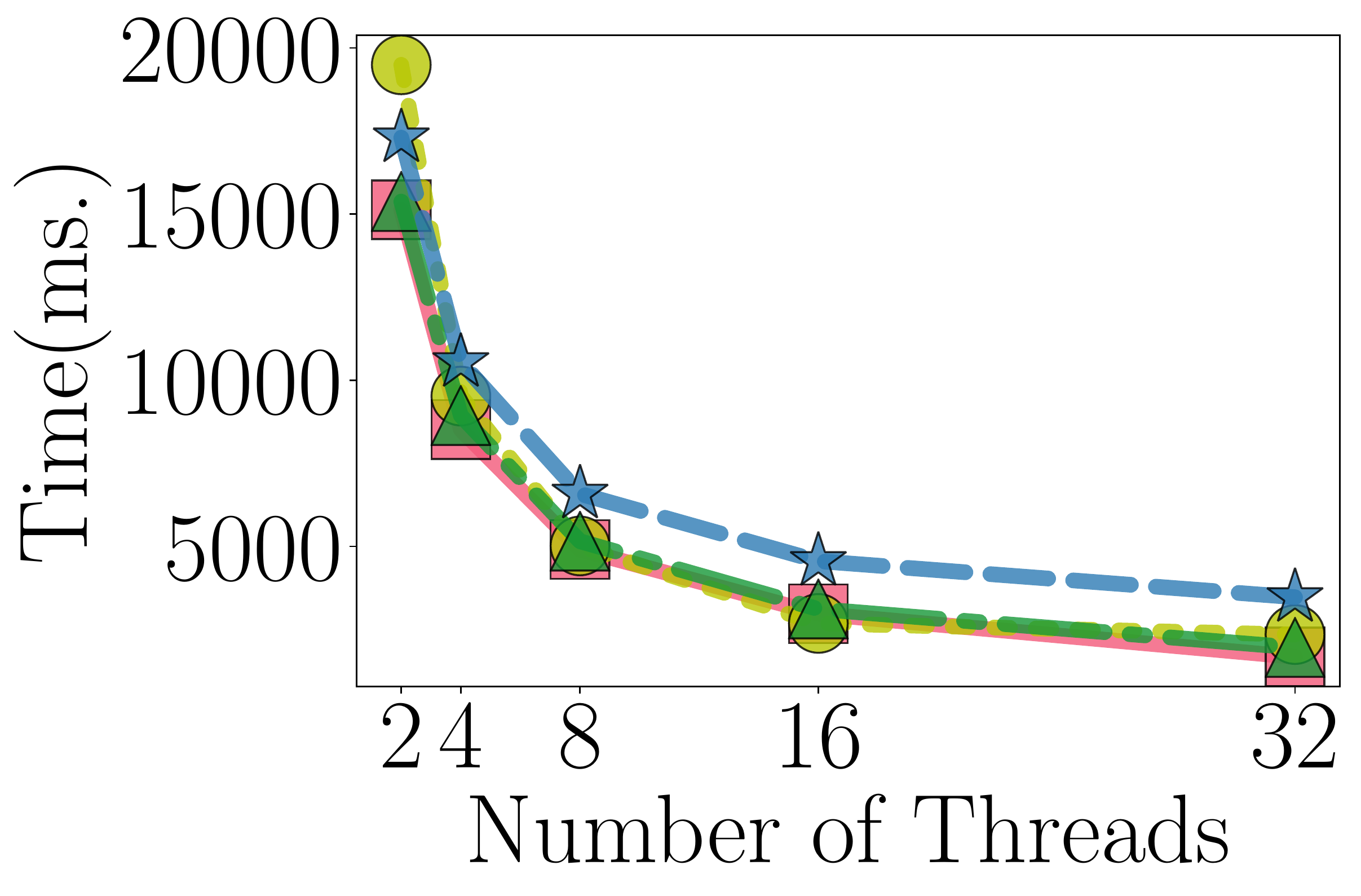} \\
		\textbf{(d) $\wtalk$}  & \textbf{(e) $\orkut$}  \\[6pt]
	\end{tabular}
\vspace{-2ex}
	\caption{
			The total runtime of parallel algorithms $\parmcedegree$, $\parmcetriangle$, $\parmcedegen$, and $\partomita$ in milliseconds as a function of the number of threads.
	}
	\label{fig:static-par-runtime}
\end{figure}
\vspace{-3ex}
\remove{
\begin{figure}[htp!]
	\centering
	\includegraphics[width=0.8\textwidth]{plots/dynamic/edges-speedup/legend.pdf}
	\begin{tabular}{cccc}
		\includegraphics[width=.3\textwidth]{plots/dynamic/edges-speedup/dblp.pdf} &
		\includegraphics[width=0.3\textwidth]{plots/dynamic/edges-speedup/flicker.pdf} &
		\includegraphics[width=0.3\textwidth]{plots/dynamic/edges-speedup/livejournal.pdf} \\
		\textbf{(a) $\dblp$}  & \textbf{(b) $\flickr$} & \textbf{(c) $\journal$}  \\[6pt]
	\end{tabular}
	\begin{tabular}{cccc}
		\includegraphics[width=0.3\textwidth]{plots/dynamic/edges-speedup/ca-cit.pdf} &
		\includegraphics[width=0.3\textwidth]{plots/dynamic/edges-speedup/wiki-growth.pdf} \\
		\textbf{(d) $\cacit$}  & \textbf{(e) $\wikipedia$}  \\[6pt]
	\end{tabular}
	\caption{\textbf{some text here...} \mycomment{The plot (c) is not clear. The plot shows for the first batch the speedup is more with 2 threads than 32 threads which is unusual. 
	I think we should leave  out initial batches in plotting. This is because, when the size of changes are small, overhead of creation and management of 32 threads over 2 threads dominates the runtime of the algorithm.}}
	\label{fig:edges-speedup}
\end{figure}
}
\remove{
\begin{figure}[htp!]
	\centering
	\includegraphics[width=0.8\textwidth]{plots/dynamic/edges-speedup/legend.pdf}
	\begin{tabular}{cccc}
		\includegraphics[width=.3\textwidth]{plots/dynamic/per-batch-speedup/dblp.pdf} &
		\includegraphics[width=0.3\textwidth]{plots/dynamic/per-batch-speedup/flicker.pdf} &
		\includegraphics[width=0.3\textwidth]{plots/dynamic/per-batch-speedup/livejournal.pdf} \\
		\textbf{(a) $\dblp$}  & \textbf{(b) $\flickr$} & \textbf{(c) $\journal$}  \\[6pt]
	\end{tabular}
	\begin{tabular}{cccc}
		\includegraphics[width=0.3\textwidth]{plots/dynamic/per-batch-speedup/ca-cit.pdf} &
		\includegraphics[width=0.3\textwidth]{plots/dynamic/per-batch-speedup/wiki-growth.pdf} \\
		\textbf{(d) $\cacit$}  & \textbf{(e) $\wikipedia$}  \\[6pt]
	\end{tabular}
	\caption{\textbf{some text here...}}
	\label{fig:per-batch-speedup}
\end{figure}
}
\begin{figure}[htp!]
	\centering
	\includegraphics[width=0.8\textwidth]{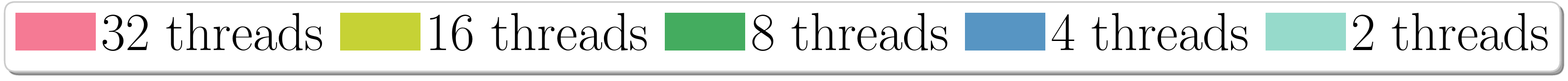}
	\begin{tabular}{cccc}
		\includegraphics[width=.3\textwidth]{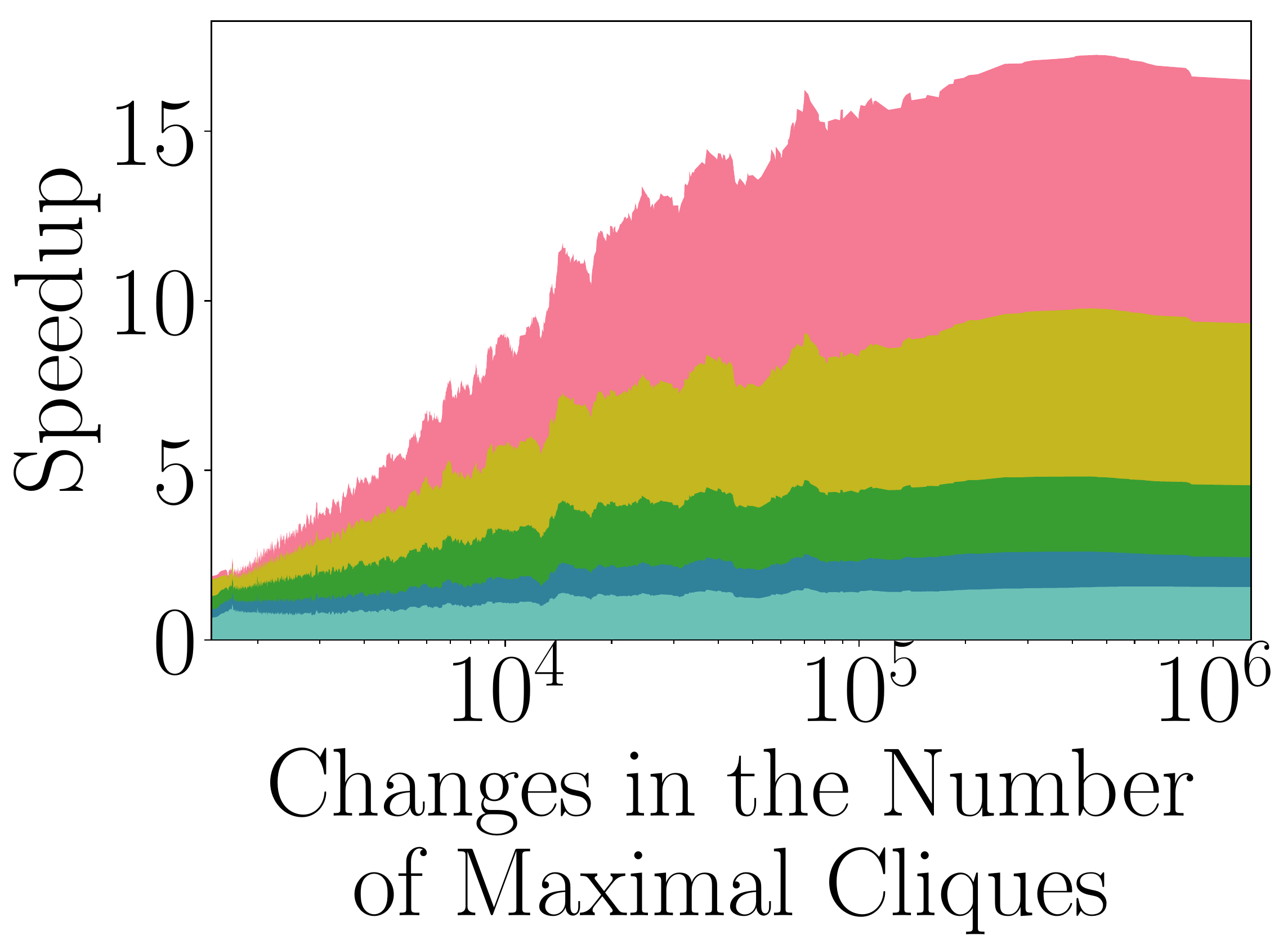} &
		\includegraphics[width=0.3\textwidth]{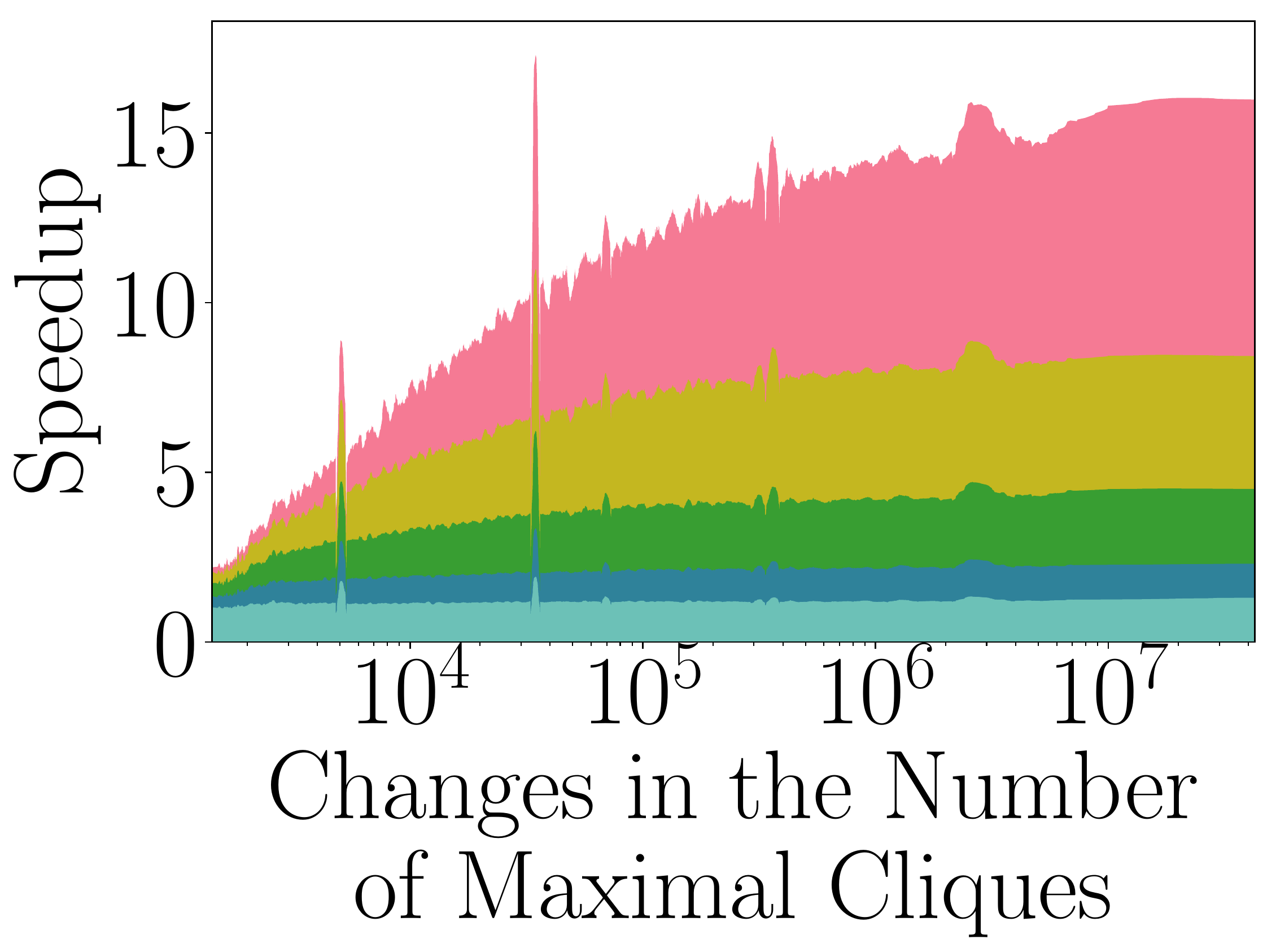} &
		\includegraphics[width=0.3\textwidth]{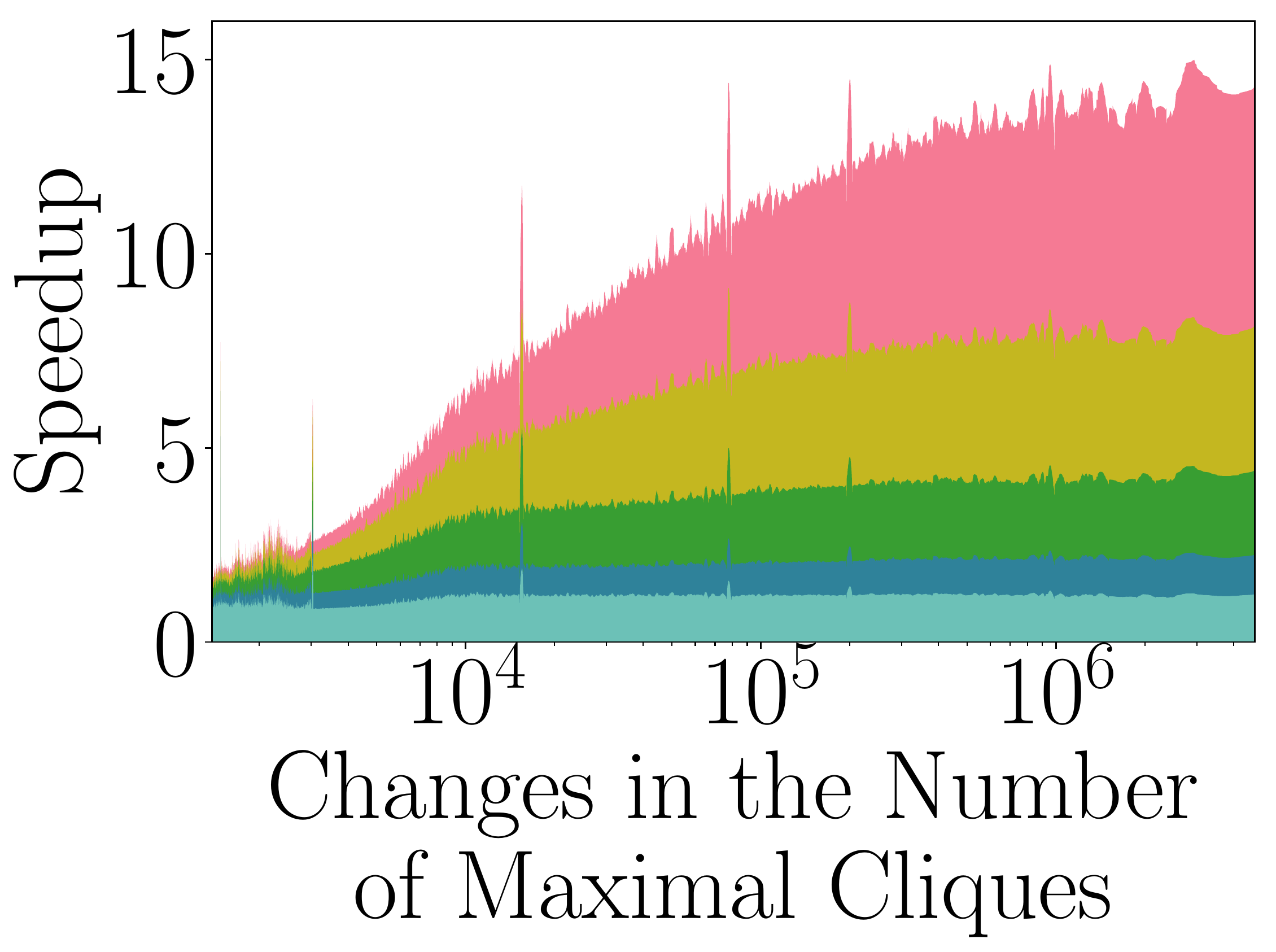} \\
		\textbf{(a) $\dblp$}  & \textbf{(b) $\flickr$} & \textbf{(c) $\journal$}  \\[6pt]
	\end{tabular}
	\begin{tabular}{cccc}
		\includegraphics[width=0.3\textwidth]{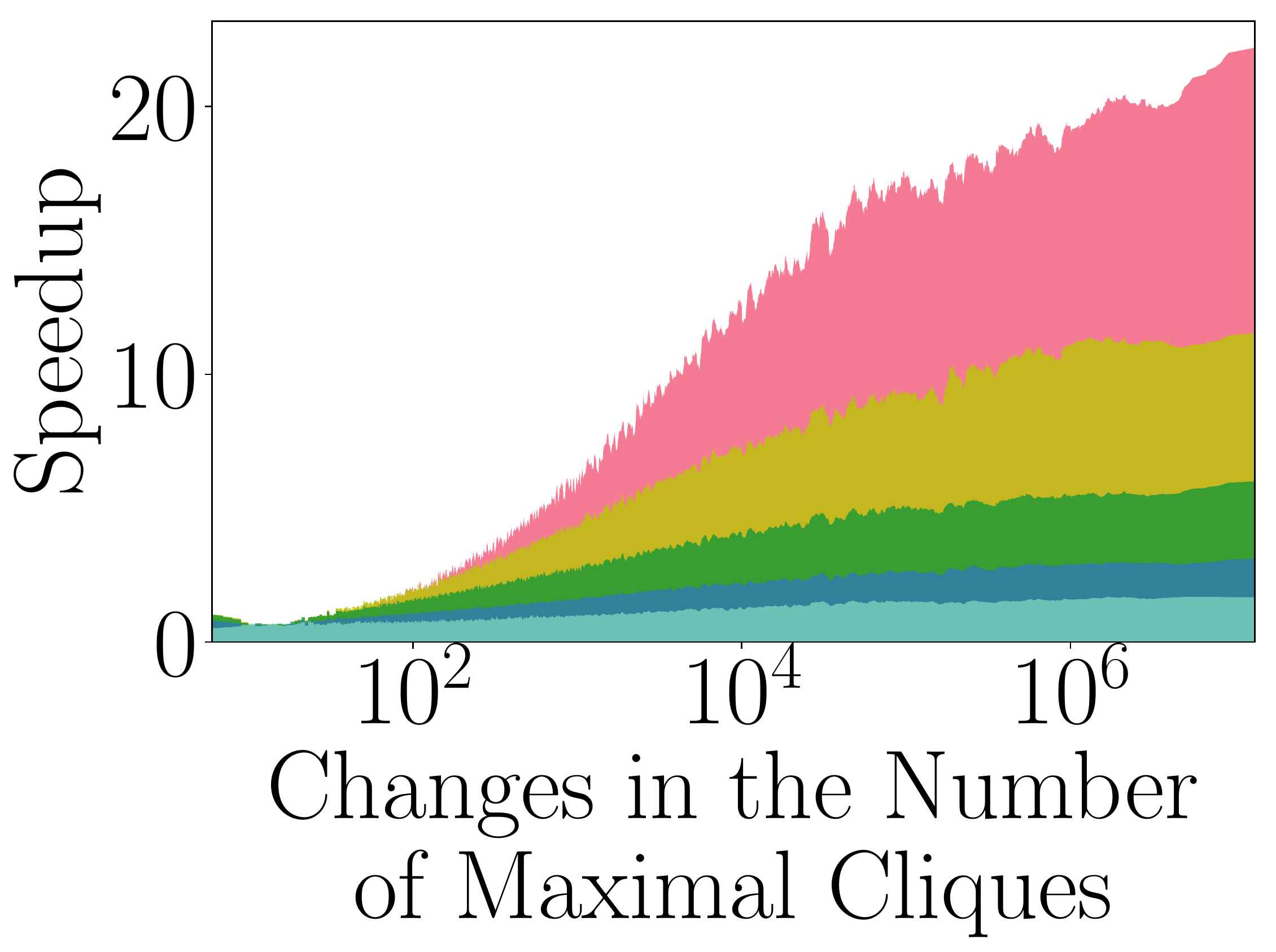} &
		\includegraphics[width=0.3\textwidth]{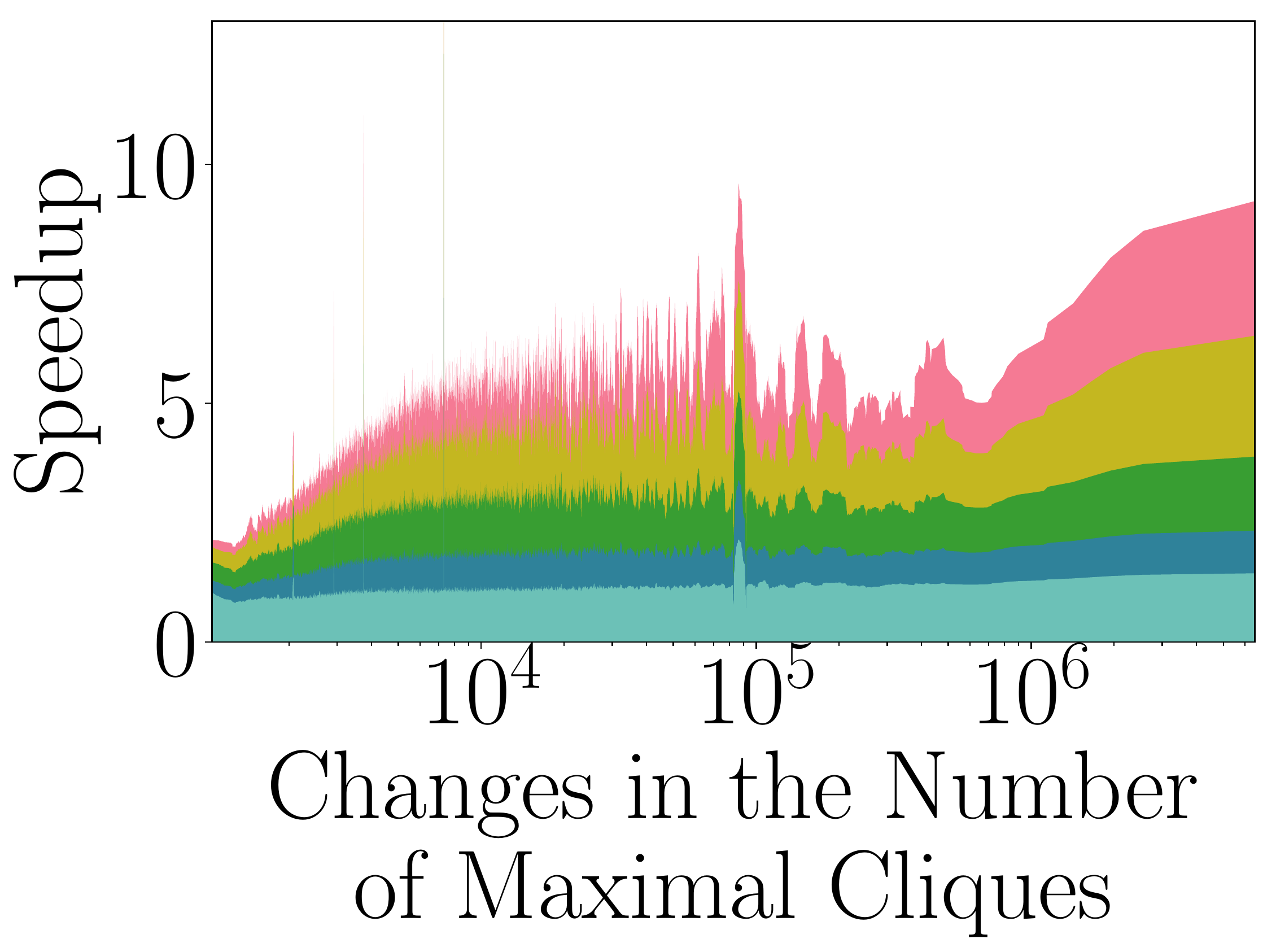} \\
		\textbf{(d) $\cacit$}  & \textbf{(e) $\wikipedia$}  \\[6pt]
	\end{tabular}
	\caption{Parallel speedup of $\parimce$ over $\imce$ as a function of the size of the change in the set of maximal cliques. 
		The size of the change is measured by the total number of new maximal cliques and subsumed maximal cliques when a batch of edges is added to the graph.}
	\label{fig:dyn-speedup}
	\vspace{2ex}
\end{figure}
\begin{figure}[htp!]
	\centering
	\begin{tabular}{cccc}
		\includegraphics[width=.3\textwidth]{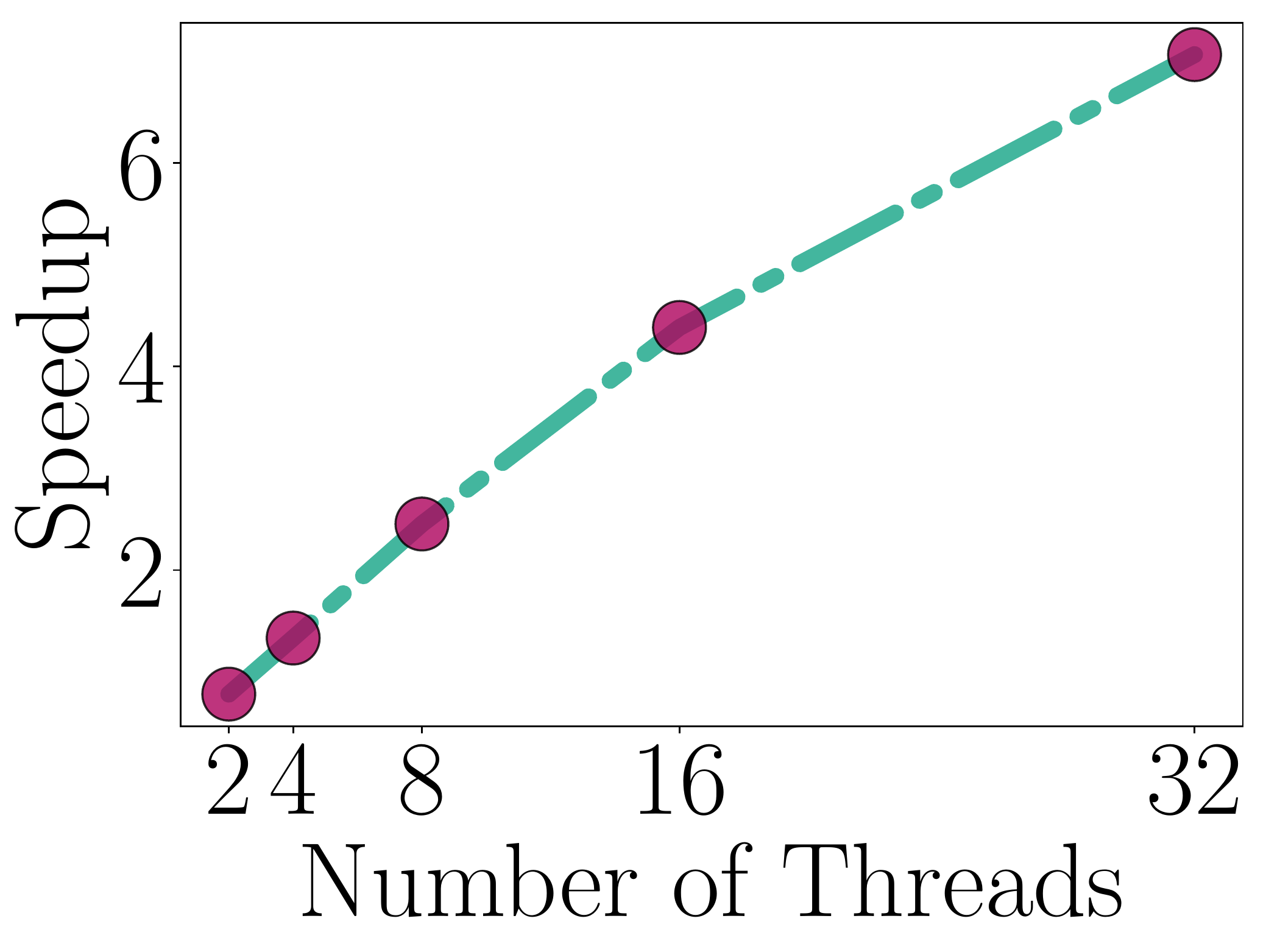} &
		\includegraphics[width=0.3\textwidth]{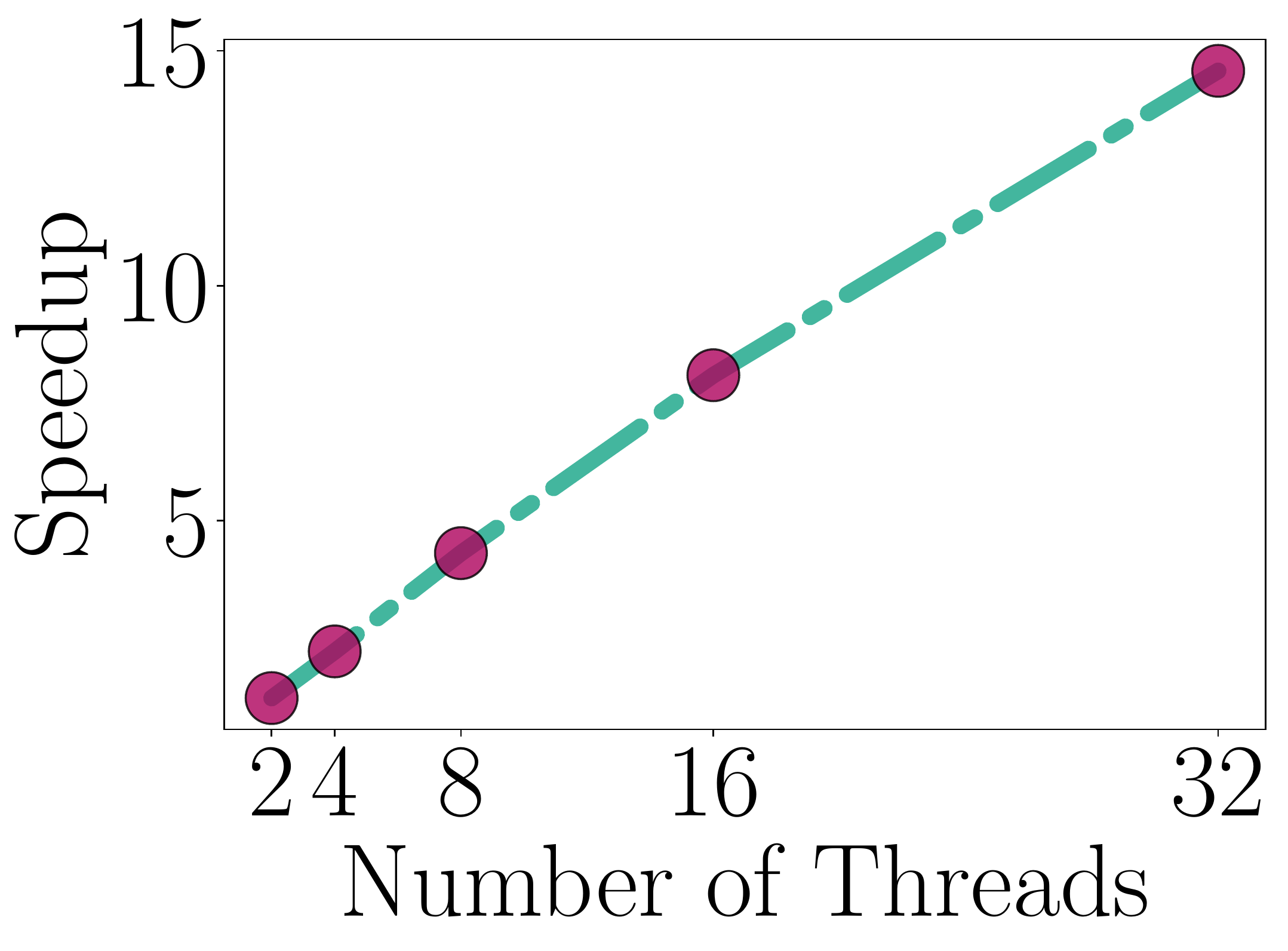} &
		\includegraphics[width=0.3\textwidth]{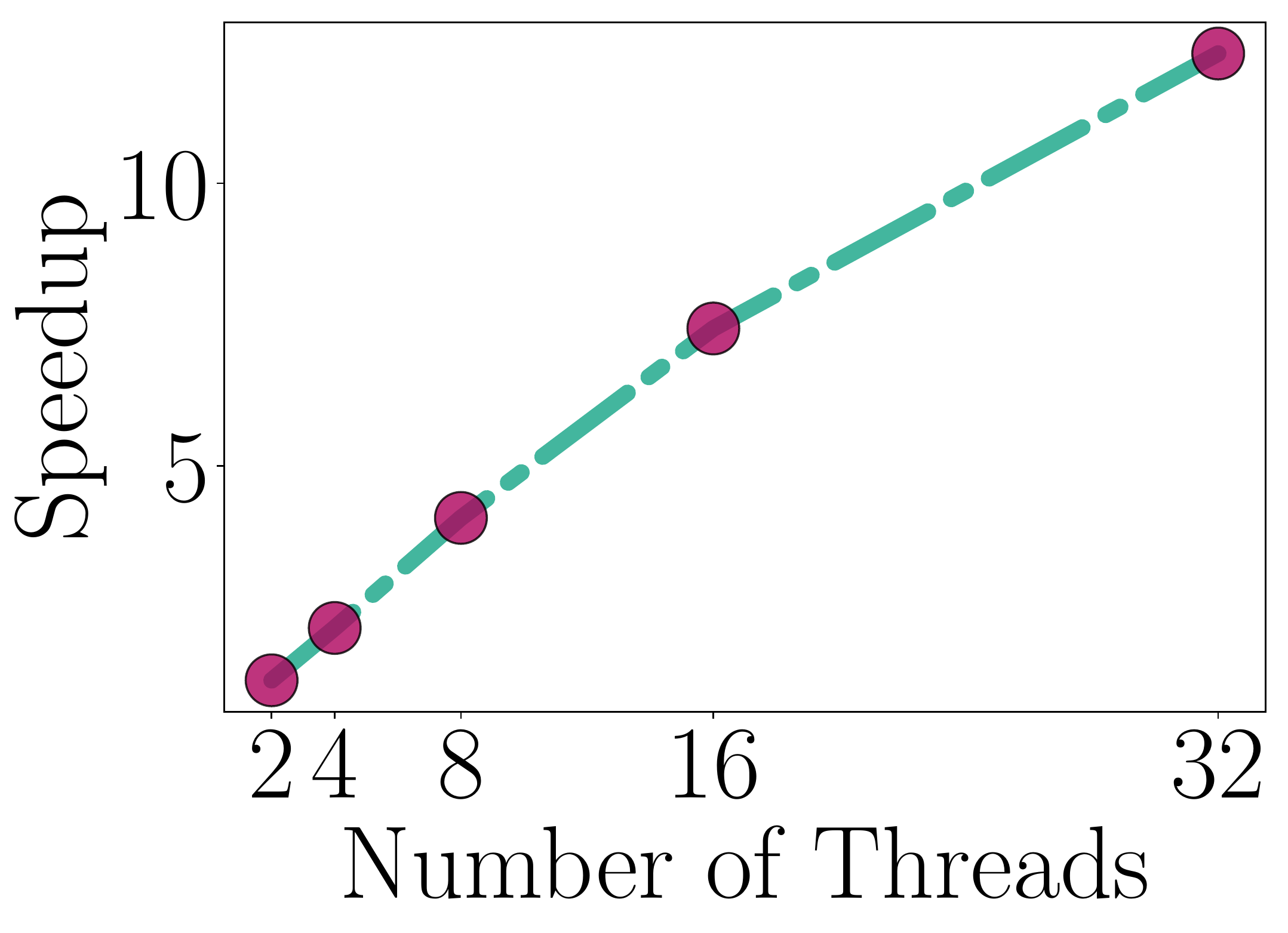} \\
		\textbf{(a) $\dblp$}  & \textbf{(b) $\flickr$} & \textbf{(c) $\journal$}  \\[6pt]
	\end{tabular}
	\begin{tabular}{cccc}
		\includegraphics[width=0.3\textwidth]{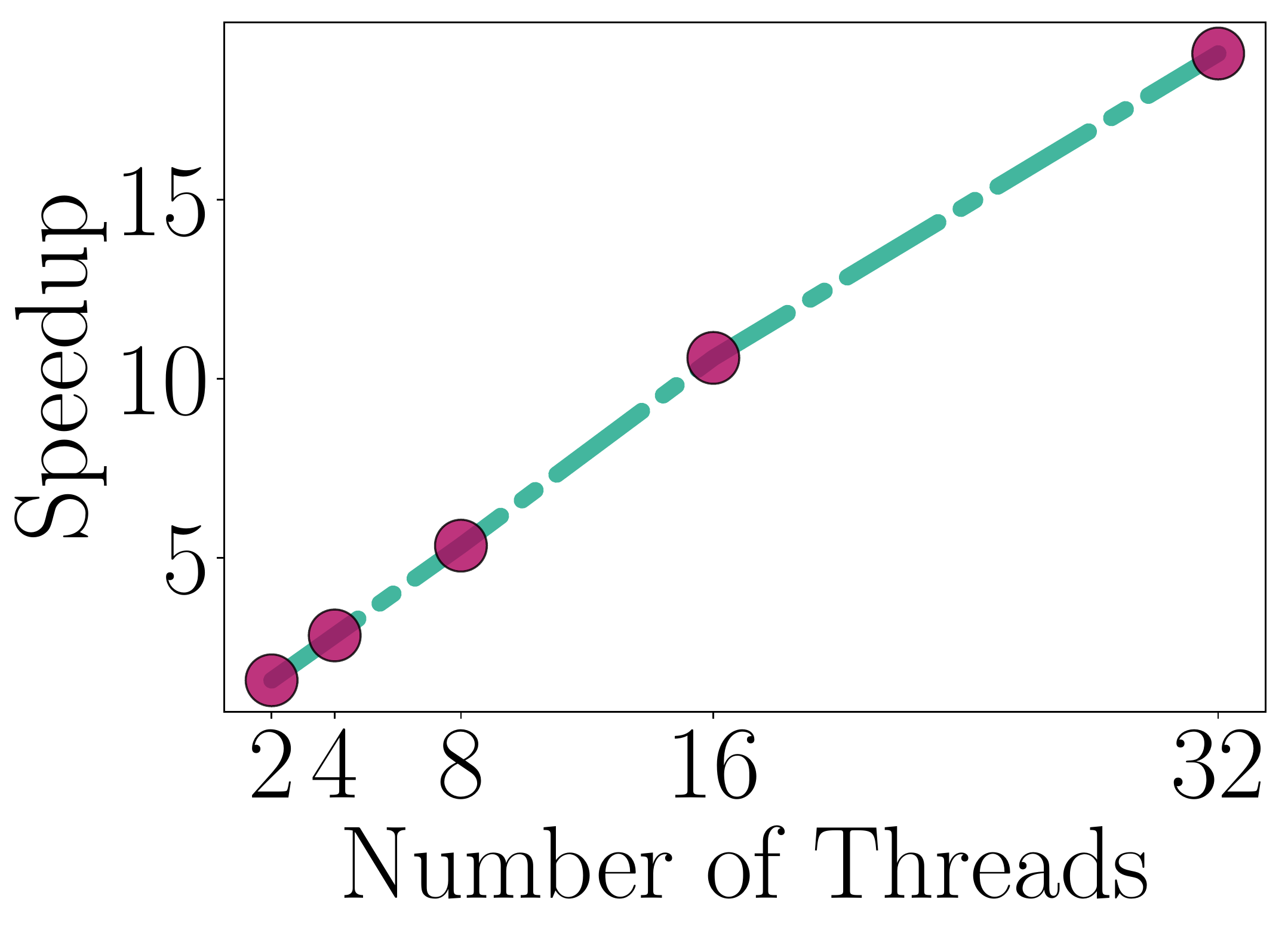} &
		\includegraphics[width=0.3\textwidth]{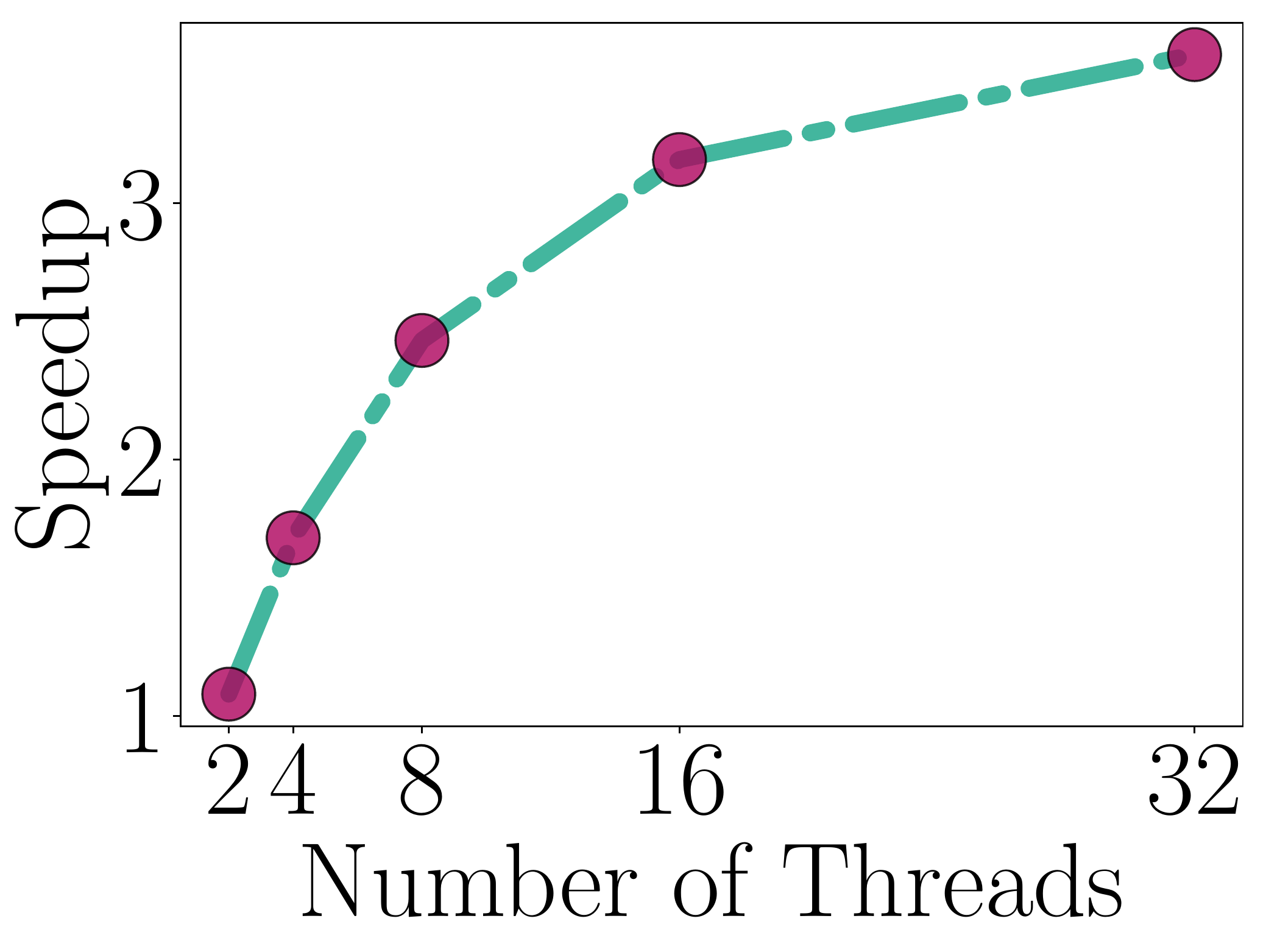} \\
		\textbf{(d) $\cacit$}  & \textbf{(e) $\wikipedia$}  \\[6pt]
	\end{tabular}
	\caption{Parallel speedup of $\parimce$ over $\imce$ as a function of number of threads, using the cumulative time of $\parimce$ and of $\imce$ for processing all batches of edges.}
	\label{fig:dyn-scalability}
\end{figure}
\\\\\\
\textbf{Impact of vertex ordering on overall performance of $\parmce$.} Here, we study the impact of different vertex ordering strategies, i.e. degree, degeneracy, and triangle count, on the overall performance of $\parmce$. \cref{result:runtime-splitup} presents the total computation time when we use different orderings. We observe that degree-based ordering ($\parmcedegree$) in most cases is the fastest strategy for clique enumeration, even when we do not consider the computation time for generating degeneracy- and triangle-based orderings. If we add up the runtime for the ordering step, {\em degree-based ordering is obviously better than degeneracy- or triangle-based orderings} since degree-based ordering is available for free when the input graph is read, while the degeneracy- and triangle-based orderings require additional computations.
\\\\
\textbf{Scaling up with the degree of parallelism.}
As the number of threads (and the degree of parallelism) increases, the runtime of $\parmce$ and of $\partomita$ decreases. \cref{fig:static-par-speedup} presents the speedup factor of $\parmce$ over the state-of-the-art sequential algorithm, i.e. $\tomita$, as a function of the number of threads, and  \cref{fig:static-par-runtime} presents the runtime of $\parmce$. $\parmcedegree$ achieves a speedup of more than \textbf{15x} on all graphs, when $32$ threads are used.


\begin{table}[b!]
	\caption{Total Runtime (in sec.) of $\parmce$ with different vertex orderings (using 32 threads). Total Runtime (TR) = Ranking Time (RT) + Enumeration Time (ET).}
	\label{result:runtime-splitup}
	\resizebox{0.8\textwidth}{!}{%
		\begin{tabular}{lccccccc}
			\multirow{2}{*}{\textbf{Dataset}} & \multicolumn{1}{l}{\multirow{2}{*}{\textbf{$\parmcedegree$}}} & \multicolumn{3}{c}{\textbf{$\parmcedegen$}} & \multicolumn{3}{c}{\textbf{$\parmcetriangle$}} \\   
			& \multicolumn{1}{l}{} & \multicolumn{1}{l}{\textbf{RT}} & \multicolumn{1}{l}{\textbf{ET}} & \multicolumn{1}{l}{\textbf{TR}} & \multicolumn{1}{l}{\textbf{RT}} & \multicolumn{1}{l}{\textbf{ET}} & \multicolumn{1}{l}{\textbf{TR}} \\ \hline
			$\dblp$ & 3 & 25 & 3 & 28 & 42 & 3 & 45 \\
			$\orkut$ & 1676 & 928 & 2350 & 3278 & 2166 & 1959 & 4125 \\
			$\skitter$ & 39 & 41 & 43 & 84 & 122 & 48 & 170 \\
			$\wtalk$ & 52 & 23 & 78 & 101 & 74 & 58 & 132 \\
			$\wikipedia$ & 123 & 244 & 155 & 399 & 950 & 179 & 1129
		\end{tabular}%
	}
\end{table}

\subsubsection{Parallel MCE on Dynamic Graphs}
The cumulative runtime of $\imce$ and $\parimce$ are presented in Table~\ref{tab:incr-par-time} which shows that the speedup achieved by $\parimce$ is \textbf{3.6x}-\textbf{19.1x} over $\imce$. This wide spectrum of speedups is mainly due to the variations in the size of the changes in the set of maximal cliques (number of new maximal cliques + number of subsumed maximal cliques) in the course of the incremental computation which can be observed in Figure~\ref{fig:dyn-speedup}. From this plot, we can see that the speedup increases with the increase in the size of the changes in the set of maximal cliques. This trend is as expected because the effect of parallelism will be prominent whenever the number of parallel tasks will become sufficiently large. This happens when the number of new and subsumed maximal cliques are large.
\begin{table}[]
	\caption{Cumulative runtime (in sec.) over the incremental computation across all edges, with $\imce$ and $\parimce$ using 32 threads. The total number of edges that are processed is also presented.}
	\label{tab:incr-par-time}
	\resizebox{0.8\textwidth}{!}{%
		\begin{tabular}{lcrrc}
			\textbf{Dataset} & \multicolumn{1}{l}{\textbf{\#Edges Processed}}  & $\imce$ & $\parimce$ & \multicolumn{1}{l}{\textbf{Parallel Speedup}} \\ \hline
			$\dblp$ & 5.1M & 6608 & 933 & \textbf{7x} \\
			$\flickr$ & 4.1M & 35238 & 2416 & \textbf{14.6x} \\
			$\wikipedia$ & 36.5M & 9402 & 2614 & \textbf{3.6x} \\
			$\journal$ & 19.2M & 30810 & 2497 & \textbf{12.3x} \\
			$\cacit$ & 93.8K & 33804 & 1767 & \textbf{19.1x}
		\end{tabular}%
	}
\end{table}
\\\\
\textbf{Scalability.}
The degree of parallelism increase with the increase in the number of threads. From Figure~\ref{fig:dyn-scalability} we can see that the speedup increases linearly with the number of threads. This behavior shows the scalability of our parallel algorithm $\parimce$. When the size of the changes will become large, scalability will become prominent because otherwise, most of the processors will remain idle when there will not be large amount of parallel tasks to fully utilize all the available processors. This is observed in $\wikipedia$ (Figure~\ref{fig:dyn-scalability}) where the cumulative size of change is relatively small. 
Additionally, the speedup observed on different input graphs vary with the similar size of change. For example, as we can see in Figure~\ref{fig:dyn-speedup}, $\journal$ gives more speedup than $\wikipedia$ with the size of change around $10^{5}$. To explain this, we first note that higher parallel speedup is observed when the cost of maintenance (which is proportional to the size of change in the set of maximal cliques) on a single batch is high. Following this, we see that in adding around $19$K batches of new edges (with batch size of $1000$ edges) starting from the empty graph, $\journal$ has more frequent (around $2000$) size of change in the order of $10^5$ compared to $\wikipedia$ (around $100$ size of changes in the order of $10^5$).

\subsection{Comparison with prior work \label{sec:exp-prior}}
We compare the performance of $\parmce$ with prior sequential and parallel algorithms for MCE. We consider the following sequential algorithms: $\greedybb$ due to Segundo et al.~\cite{SAS18}, $\tomita$ due to Tomita et al.~\cite{TTT06}, and $\bkd$ due to Eppstein et al.~\cite{ELS10}. For the comparison with parallel algorithm, we consider algorithm $\cliqueenum$ due to Zhang et al.~\cite{ZA+05}, $\peamc$ due to Du et al.~\cite{DW+09}, $\peco$ due to Svendsen et al.~\cite{SMT+15}, and the most recent parallel algorithm $\hashing$ due to Lessley et al.~\cite{LP+17}. The parallel algorithms $\cliqueenum$, $\peamc$, and $\hashing$ are designed for the shared-memory model, while $\peco$ is designed for distributed memory. We modified $\peco$ to work with shared-memory, by reusing the method for sub-problem construction and eliminating the need to communicate subgraphs by storing a single copy of the graph in shared-memory. We consider three different ordering strategies for $\peco$, which we call $\pecodegree, \pecodegen$, and $\pecotri$. The comparison of performance of $\parmce$ with $\peco$ is presented in Table~\ref{result:compare-with-peco}. We note that $\parmce$ is significantly better than that of $\peco$, no matter which ordering strategy was considered. \modified{We also compare the performance of $\parmce$ with a recent shared-nothing parallel algorithm $\gp$ to show that $\parmce$ outperforms $\gp$ with resources equivalent to our shared memory setting.}

\begin{table}[b]
	\caption{{Comparison of parallel runtime (in sec.) of $\parmce$, excluding the time for computing vertex ranking, with a version of $\peco$ that is modified to use shared-memory with 32 threads. Three different variants are considered for each algorithm based on the vertex ordering strategy.}}
	\label{result:compare-with-peco}
	\resizebox{\textwidth}{!}{%
		\begin{tabular}{lrrrrrr}
			\textbf{Dataset} & \multicolumn{1}{c}{$\pecodegree$} & \multicolumn{1}{c}{$\parmcedegree$} & \multicolumn{1}{c}{$\pecodegen$} & \multicolumn{1}{c}{$\parmcedegen$} & \multicolumn{1}{c}{$\pecotri$} & \multicolumn{1}{c}{$\parmcetriangle$} \\ \hline
			$\dblp$ & 6.4 & 2.6 & 6.9 & 3.1 & 6.8 & 2.9 \\
			$\orkut$ & 2050.7 & 1676.4 & 2183.4 & 2350 & 2361.9 & 1959.3 \\
			$\skitter$ & 261.5 & 39.2 & 331.8 & 42.8 & 260.9 & 48.2 \\
			$\wtalk$ & 1729.7 & 51.6 & 1728.2 & 77.8 & 1720 & 57.6 \\
			$\wikipedia$ & 8982.5 & 123.3 & 9110.4 & 155.3 & 8938 & 178.8
		\end{tabular}%
	}
\end{table}
\begin{table}[t!]
	\caption{Comparison of runtimes (in sec.) of $\parmce$ with prior works on shared-memory algorithms for MCE (with 32 threads).}
	\label{result:compare-with-hashing}
	\resizebox{\textwidth}{!}{%
		\begin{tabular}{lclll}
			\textbf{Dataset} & \multicolumn{1}{c}{$\parmcedegree$} & \multicolumn{1}{c}{$\hashing$} & \multicolumn{1}{c}{$\cliqueenum$} & \multicolumn{1}{c}{$\peamc$} \\ \hline
			$\dblp$ & 2.6 & Out of memory in 3 min. & Out of memory in 10 min. & Not complete in 5 hours. \\
			$\orkut$ & 1676.4 & Out of memory in 7 min. & Out of memory in 20 min. & Not complete in 5 hours. \\
			$\skitter$ & 39.2 & Out of memory in 5 min. & Out of memory in 10 min. & Not complete in 5 hours. \\
			$\wtalk$ & 51.6 & Out of memory in 10 min. & Out of memory in 20 min. & Not complete in 5 hours. \\
			$\wikipedia$ & 123.3 & Out of memory in 10 min. & Out of memory in 20 min. & Not complete in 5 hours.
		\end{tabular}%
	}
\end{table}

\begin{table}[]
	\caption{
			Speedup factor of $\parmcedegree$ over $\gp$ and $\pecodegree$. A speedup factor greater than $1$ indicates that our algorithm $\parmcedegree$ is faster than the prior methods. 8$^\text{\ding{72}}$ indicates that 8 threads for $\parmcedegree$ and 8 MPI nodes for $\gp$ are used ($^\text{\ding{72}}$ next to other numbers is interpreted similarly). \ding{56} indicates that $\gp$ runs out of memory. \label{result:compare-with-gp-peco}
	}
	\centering
	\resizebox{\textwidth}{!}{%
		\begin{tabular}{l|ccccc||ccccc|}
			\cline{2-11}
			& \multicolumn{5}{l||}{Speedup factor of $\parmcedegree$ over $\gp$} & \multicolumn{5}{c|}{Speedup factor of $\parmcedegree$ over $\pecodegree$} \\ \hline
			\multicolumn{1}{|l|}{\textbf{Dataset}}       & 2$^\text{\ding{72}}$         & 4$^\text{\ding{72}}$         & 8$^\text{\ding{72}}$         & 16$^\text{\ding{72}}$       & 32$^\text{\ding{72}}$       & 2 threads   & 4 threads   & 8 threads   & 16 threads   & 32 threads  \\ \hline\hline
			\multicolumn{1}{|l|}{$\dblp$} & 0.83       & 1.21       & 1.76       & 2.9       & 4.1       & 2.5         & 2.85        & 2.4         & 2.33         & 2.5         \\
			\multicolumn{1}{|l|}{$\orkut$}         & 1.92       & 1.68       & 1.73       & \ding{56}         & \ding{56}         & 1.2         & 1.11        & 1.04        & 0.85         & 1.48        \\
			\multicolumn{1}{|l|}{$\skitter$}    & 1.66       & 1.64       & 1.56       & 1.71      & 1.35      & 4.47        & 4.33        & 3.92        & 4.1          & 3.26        \\
			\multicolumn{1}{|l|}{$\wtalk$}     & 1.38       & 1.32       & 1.15       & 1.3       & 0.84      & 2.48        & 2.37        & 4.65        & 9.07         & 10.9        \\
			\multicolumn{1}{|l|}{$\wikipedia$}     & 1.5        & 1.45       & 1.44       & 1.86      & 1.69      & 9.6         & 9.74        & 14.07       & 25.34        & 29.6        \\ \hline
		\end{tabular}%
	}
\end{table}
The comparison of $\parmce$ with other shared-memory algorithms $\peamc$, $\cliqueenum$, and $\hashing$ is shown in Table~\ref{result:compare-with-hashing}. The performance of $\parmce$ is seen to be much better than that of any of these prior shared-memory parallel algorithms. For the graph $\dblp$, $\peamc$ did not finish within $5$ hours whereas $\parmce$ takes at most around $50$ secs for enumerating $1.2$ million maximal cliques. The poor running time of  $\peamc$ is due to two following reasons: (1)~The algorithm does not apply efficient pruning techniques such as pivoting, used in $\tomita$, and (2)~The method to determine the maximality of a clique in the search space is not efficient. The $\cliqueenum$ algorithm runs out of memory after a few minutes. The reason is that $\cliqueenum$ maintains a bit vector for each vertex that is as large as the size of the input graph and, additionally, needs to store intermediate non-maximal cliques. For each such non-maximal clique, it is required to maintain a bit vector of length equal to the size of the vertex set of the original graph. Therefore, in $\cliqueenum$, a memory issue is inevitable for a graph with millions of vertices.

A recent parallel algorithm in the literature, $\hashing$, also has a significant memory overhead and ran out of memory on the input graphs that we considered. The reason for its high memory requirement is that $\hashing$ enumerates intermediate non-maximal cliques before finally outputting maximal cliques. The number of such intermediate non-maximal cliques may be very large, even for graphs with few number of maximal cliques. For example, a maximal clique of size $c$ contains $2^c-1$ non-maximal cliques.
\begin{table}[b!]
\caption{Total runtime (sec.) of parallel algorithm $\parmce$ (with different vertex ranking, with 32 threads) and sequential algorithms $\bkd$ and $\greedybb$.}
\label{results:compare-with-sequential}
	\resizebox{\textwidth}{!}{%
		\begin{tabular}{lccccc}
			\textbf{Dataset}  & $\bkd$ & \multicolumn{1}{c}{$\greedybb$} & $\parmcedegree$ & $\parmcedegen$ & $\parmcetriangle$
			\\\hline
			$\dblp$ & 53.6 & \multicolumn{1}{c}{Not finish in 30 min.} & 2.6 & 28.1 & 44.3 \\
			$\orkut$ & 29812.3 & Out of memory in 5 min. & 1676.4 & 3278 & 4125.3 \\
			$\skitter$ & 641.7 & Out of memory in 10 min. & 39.2 & 83.8 & 170.2 \\
			$\wtalk$ & 1003.2 & Out of memory in 10 min. & 51.6 & 100.8 & 131.2 \\
			$\wikipedia$ & 2243.6 & Out of memory in 10 min. & 123.3 & 399 & 1128
		\end{tabular}%
	}
\end{table}

Next, we compare the performance of $\parmce$ with that of sequential algorithms $\bkd$ and a recent sequential algorithm $\greedybb$ -- results are in Table~\ref{results:compare-with-sequential}. For large graphs, the performance of $\bkd$ is almost similar to $\tomita$ whereas  $\greedybb$ performs much worse than $\tomita$. Since our $\parmce$ algorithm outperforms $\tomita$, we can conclude that $\parmce$ is significantly faster than other sequential algorithms.

\modified{Next, we compare with $\gp$~\cite{WC+17}, a recent distributed algorithm for MCE based on MPI}. This method first assigns each vertex $v$ to an MPI worker, and then the worker constructs subproblems of vertex $v$ by constructing the sets $\cand$ and $\fini$ for enumerating the maximal cliques which vertex $v$ is part of. Through an iterative computation, each sub-problem is broken down into smaller sub-problems, till a maximal clique is obtained. Subproblems within an MPI worker might be sent to another worker, where the receiver MPI process is chosen randomly. We compare the runtime performance of our work $\parmcedegree$ with of $\gp$. Note that $\parmcedegree$ is a shared-memory method while $\gp$ is implemented in a distributed memory model. 

We ran $\gp$ and $\parmcedegree$ on a machine configured with two $18$-core Intel Skylake $6140$ Xeon processors. In every experiment, we use $192$GB as the total memory. For a fair comparison, we use the same number of threads for $\parmcedegree$ as the number of MPI processes for $\gp$. In \cref{result:compare-with-gp-peco}, we report the speedup of $\parmcedegree$ over $\gp$. In most cases $\parmcedegree$ outperforms $\gp$. 
In $\dblp$, we observed that $\gp$ cannot achieve a better runtime while the number of MPI workers increases. For a better understanding of this case, we measured the runtime performance of exchanged sub-problems among MPI nodes and the runtime of clique enumeration. We observed that the enumeration takes at most one second while the overhead for exchanging sub-problems among workers is huge and skewed towards a few MPI nodes, which leads to an unbalanced workload. As shown in \cref{result:compare-with-gp-peco}, there are two cases that $\gp$ performs faster than $\parmcedegree$ -- in both cases, the difference in runtimes is small, about 8 seconds.


\subsection{Summary of Experimental Results}
We found that both $\partomita$ and $\parmce$ yield significant speedups over the sequential algorithm $\tomita$nearly linear in the number of cores available. $\parmce$ using the degree-based vertex ranking always performs better than $\partomita$. The runtime of $\parmce$ using degeneracy/triangle count based vertex ranking is sometimes worse than $\partomita$ due to the overhead of sequential computation of vertex ranking -- note that this overhead is not needed in $\partomita$. The parallel speedup of $\parmce$ is better when the input graph has many large sized maximal cliques. Overall, $\parmce$ consistently outperforms prior sequential and parallel algorithms for MCE. For a dynamic graph we found that $\parimce$ consistently yields a substantial speedup over the efficient sequential algorithm $\imce$. Further, the speedup of $\parimce$ improves as the size of the change (to be enumerated) becomes larger.

	\section{Conclusion}
	\label{sec:conclude}
We presented shared-memory parallel algorithms for enumerating maximal cliques from a graph. $\partomita$ is a work-efficient parallelization of a sequential algorithm due to Tomita et al.~\cite{TTT06},  and $\parmce$ is an adaptation of $\partomita$ that has more opportunities for parallelization and better load balancing. Our algorithms obtain significant improvements compared with the current state-of-the-art on MCE. Our experiments show that $\parmce$ has a speedup of up to 21x (on a 32 core machine) when compared with an efficient sequential baseline. In contrast, prior shared-memory parallel methods for MCE were either unable to process the same graphs in a reasonable time or ran out of memory. We also presented a shared-memory parallel algorithm $\parimce$  that can enumerate the change in the set of maximal cliques when new edges are added to the graph.

Many questions remain open: (1)~Can these methods scale to even larger graphs and to machines with larger numbers of cores (2)~How can one adapt these methods to other parallel systems such as a cluster of computers with a combination of shared and distributed memory or GPUs?\\
	
	\noindent\textbf{Acknowledgment:} The work of AD, SS, and ST is supported in part by the US National Science Foundation through grants 1527541 and 1725702.

	
	\bibliographystyle{ACM-Reference-Format}
	\bibliography{cliques}
	
	\pagebreak
	
	\appendix
	
\section{Proof of Work Efficiency of parallel MCE on Dynamic Graph}
\label{appendix-a}

Here, we show the work efficiency of $\parimce$ by proving Lemma~\ref{lem:incremental-new-work-efficiency}.

\begin{algorithm}[htp!]
\DontPrintSemicolon
\caption{$\tomitaE({G},K,\cand,\fini, \mathcal{E})$}
\label{algo:tomitaE}
\KwIn{${G}$ - The input graph \\ $K$ - a non-maximal clique to extend \\ 
$\cand$ - Set of vertices that may extend $K$ \\ 
$\fini$ - vertices that have been used to extend $K$ \\ 
$\mathcal{E}$ - set of edges to ignore}
\If{$(\cand = \emptyset)$ \& $(\fini = \emptyset)$}{
	Output $K$ and \Return\;
}
$\pivot \gets (u \in \cand \cup \fini)$ such that $u$ maximizes the size of the intersection of $\cand  \cap \Gamma_{{G}}(u)$\;
$\ext \gets \cand - \Gamma_{{G}}(\pivot)$\;
\For{$q \in$ \ext}{
	$K_q \gets K\cup\{q\}$\;
	\If{$K_q \cap \mathcal{E} \neq \emptyset$}{
		$\cand \gets \cand - \{q\}$\;
		$\fini \gets \fini\cup\{q\}$\;
		\textbf{continue}
	}
	$\cand_q \gets \cand\cap\Gamma_{{G}}(q)$\;
	$\fini_q \gets \fini\cap\Gamma_{{G}}(q)$\;
	$\tomitaE({G},K_q,\cand_q,\fini_q,\mathcal{E})$\;
	$\cand \gets \cand-\{q\}$\;
	$\fini \gets \fini \cup\{q\}$\;
}
\end{algorithm}

Lemma~\ref{lem:incremental-new-work-efficiency}:~Given a graph $G$ and a new edge set $H$, $\parcsnewttt$ is work-efficient, i.e, the total work is of the same order of the time complexity of $\csnewttt$. The depth of $\parcsnewttt$ is $O(\Delta^2 + M\log{\Delta})$ where $\Delta$ is the maximum degree and  $M$ is the size of a maximum clique in $G+H$.

\begin{proof}
	First, we prove the work efficiency of $\parcsnewttt$ followed by the depth of the algorithm. Note that, for proving the work-efficiency we will show that procedure at each line from Line~4 to Line~10 of $\parcsnewttt$ is work-efficient. \modified{Lines~4, 6, 7, and 8} are work-efficient because, all these procedures are sequential in $\parcsnewttt$. The parallel set operations at Line~5 and Line~10 are work-efficient using Theorem~\ref{thm:parset}. \modified{Now we will show the work efficiency of $\partomitaE$ as follows}. If we disregard Lines~7-10 of $\tomitaE$ and Lines~9-10 of $\partomitaE$ then the total work of $\partomitaE$ is the same as the time complexity of $\partomita$ following the work efficiency of $\partomita$. Next, we say that the time complexity of Lines~7-10 of $\tomitaE$ is the same as the time complexity of Lines~9-10 of $\partomitaE$ because in $\tomitaE$ we use two global hashtables - one for maintaining the adjacent vertices of the currently processing vertex in the set of new edges and another for maintaining the indexes of the new edges that we define before the beginning of the enumeration of new maximal cliques. With these two hashtables, we can check the \textit{if} condition at Line~9 of $\partomitaE$ in parallel with total work $O(n)$ using Theorem~\ref{thm:parset} which is of the same order of the time complexity of performing \textit{if} condition check at Line~7 of $\tomitaE$. This completes the proof of work efficiency of $\partomitaE$.
For proving the depth of $\parcsnewttt$, note that the depth is the sum of the depths of procedures at Line~5, 6, 9, 10 of $\parcsnewttt$ because the cost of all operations in other lines are $O(1)$ each. The depth of executing intersection in parallel at Line~5 is $O(1)$ using Theorem~\ref{thm:parset}, the depth of the procedure for constructing the graph at Line~6 is $O(\Delta^2)$ as we construct the graph sequentially, the depth $\partomitaE$ is $O(M\log{\Delta})$ following the depth of $\tomitaE$, and the depth of Line~10 is $O(1)$ because we can do this operation in parallel using Theorem~\ref{thm:parset}. Thus, the overall depth of $\parcsnewttt$ follows.
\end{proof}

\end{document}